\newcommand{\blind}{0}
\titleformat*{\section}{\bf\Large\center}
\newcommand{\mix}{\textup{Mix}}
\newcommand{\black}{\color{black}}
\newcommand{\hvl}{\hat v_\lin}
\newcommand{\hvn}{\hat v_\nm}
\newcommand{\hvs}{\hat v_*}
\newcommand{\pia}{\pi_a}
\newcommand{\nflt}{* = \textup{N}, \textup{F}, \textup{L}}
\newcommand{\trn}{\tilde\rho_\nm}
\newcommand{\tvl}{\tilde v_\lin}
\newcommand{\tvn}{\tilde v_\nm}
\newcommand{\tvf}{\tilde v_\fisher}
\newcommand{\ci}{\textup{CI}}
\newcommand{\qa}{q_{1-\alpha/2}}
\def\beginp{\begin{pmatrix}}
\def\endp{\end{pmatrix}}
\newcommand{\vs}{v_*}
\newcommand{\pfrt}{p_\textup{frt}}
\newcommand{\limN}{\lim_{N\to\infty}}
\newcommand{\frt}{\textsc{frt}}
\newcommand{\mz}{\mathcal Z}
\newcommand{\phts}{{preliminary-test }}
\newcommand{\limn}{\lim_{N\to\infty}}
\newcommand{\hf}{H_{0\fisher}}
\newcommand{\hn}{H_{0\nm}}
\newcommand{\HF}{H_{0\fisher}}
\newcommand{\HN}{H_{0\nm}}
\newcommand{\precreapp}{Assume a completely randomized treatment-control experiment and Condition \ref{cond:asym}}
\newcommand{\prefrt}{Assume a completely randomized treatment-control experiment and Conditions \ref{cond:asym}--\ref{cond:asym_frt}}
\newcommand{\cis}{\textup{CI}_{*}}
\newcommand{\cil}{\textup{CI}_{\lin}}
\newcommand{\cif}{\textup{CI}_{\fisher}}
\newcommand{\pt}{\textup{pt}}
\newcommand{\htptf}{\htau_{\pt, \fisher}}
\newcommand{\htptl}{\htau_{\pt,\lin}}
\newcommand{\htpts}{\htau_{\pt,*}}
\newcommand{\hsepts}{\hse_{\pt,*}}
\newcommand{\hseptl}{\hse_{\pt,\lin}}
\newcommand{\hseptf}{\hse_{\pt,\fisher}}
\newcommand{\cipts}{\textup{CI}_{\pt,*}}
\newcommand{\ciptl}{\textup{CI}_{\pt,\lin}}
\newcommand{\ciptf}{\textup{CI}_{\pt,\fisher}}
\newcommand{\hses}{\hse_*}
\newcommand{\mrm}{\mathbb R^m}
\newcommand{\meaniz}{N_z ^{-1}\sum_{i:Z_i = z}}
\newcommand{\cipt}{\textup{CI}_{\textup{pt}}}
\newcommand{\cin}{\textup{CI}_{\nm}}
\newcommand{\cia}{\textup{CI}_{\textup{adj}}}
\newcommand{\hsen}{\hse_\nm}
\newcommand{\hsef}{\hse_\fisher}
\newcommand{\hsel}{\hse_\lin}
\newcommand{\hsea}{\hse_{\textup{adj}}}
\newcommand{\hta}{\htau_{\textup{adj}}}
\newcommand{\hsept}{\hse_{\textup{pt}}}
\newcommand{\htpt}{\htau_{\textup{pt}}}
\def\hg{\hat\gamma}
\def\hvl{\hat v_\lin}
\def\mq{\mathcal Q}
\def\meani{N^{-1}\sumi}
\def\yiz{Y_i(z)}
\def\fl{* = \fisher, \lin}
\def\hgs{\hat\gamma_*}
\def\succi{\succeq_\infty}
\def\beginp{\begin{pmatrix}}
\def\endp{\end{pmatrix}}
\def\lmt{\texttt{lm}}
\def\nf{* = \neyman, \fisher}
\def\nfl{* = \neyman, \fisher, \lin}
\def\gp{\gamma_\fisher}
\newcommand{\mc}{\mathcal C}
\def\mi{\mathcal I}
\def\by{\bar Y}
\def\phim{\phi_{\textsc{m}}}
\def\mhls{Mahalanobis}
\def\ppinv{(\pp)^{-1}}
\def\vl{v_\lin}
\def\vf{v_\fisher}
\def\vn{v_\nm}
\def\cs{c_{*}}
\def\csi{c_{*}}
\def\cfi{c_{\fisher}}
\def\cni{c_{\nm}}
\def\cli{c_{\lin}}
\def\rs{\rightsquigarrow}
\def\pp{e_0e_1}
\def\sqrtn{\sqrt N}
\def\hts{\htau_*}
\def\begini{\begin{itemize}}
\def\endi{\end{itemize}}
\def\begine{\begin{enumerate}}
\def\ende{\end{enumerate}}
\def\ep{\epsilon}
\def\vxi{v_{x}}
\def\mi{\mathcal{I}}
\def\vx{v_x}
\def\hth{\hat\theta}
\def\hts{\htau_*}
\def\mn{\mathcal{N}}
\def\neyman{\textsc{n}}
\def\rtn{\sqrt N}
\def\hse{\hat{\textup{se}}}
\newcommand{\hgf}{\hat\gamma_\fisher}
\def\hgs{\hg_*}
\newcommand{\hglo}{\hat\gamma_{\lin,1}}
\newcommand{\hglz}{\hat\gamma_{\lin,0}}
\newcommand{\hglzz}{\hat\gamma_{\lin,z}}
\newcommand{\htf}{\hat\tau_\fisher}
\newcommand{\htl}{\hat\tau_\lin}
\def\hts{\hat\tau_*}
\newcommand{\htn}{\hat\tau_\nm}
\def\Zi{Z_i}
\def\Yi{Y_i}
\def\sumi{\sum_{i=1}^N}
\def\sumN{\sum_{i=1}^N}
\def\gq{\gamma_z}
\def\ml{\mathcal{L}}
\def\sm{{Supplementary Material}}
\def\hg{\hat\gamma}
\def\sxx{S^2_x}
\def\htx{\htau_x}
\def\ms{* = \nm, \fisher,\lin}
\newcommand{\htau}{ \hat\tau}
\def\htf{ \hat\tau_\fisher}
\def\hx{\hat x}
\def\mt{\mathcal{T}}
\newcommand{\mN}{\mathcal{N}}
\newcommand{\op}{o_{\pr}(1)}
\newcommand{\lin}{{\textsc{l}}}
\newcommand{\nm}{{\textsc{n}}}
\newcommand{\bg}{  \gamma }
\newcommand{\hY}{\hat Y}
\newcommand{\pr}{{\mathbb P}}
 \newcommand{\ot}[1]{1, \ldots,#1}
\def\T{{ \mathrm{\scriptscriptstyle T} }}
\newcommand{\cov}{\text{cov} }
\newcommand{\fisher}{{\textsc{f}}}
\newcommand{\asim}{\overset{\cdot}{\sim}}
\newcommand{\mL}{{\mathcal{L}}}
\def\begina{\begin{eqnarray*}}
\def\enda{\end{eqnarray*}}
\def\beginy{\begin{eqnarray}}
\def\endy{\end{eqnarray}}
\def\begine{\begin{enumerate}}
\def\ende{\end{enumerate}}
\theoremstyle{definition}
\newtheorem*{theorem*}{Theorem}
\newtheorem{theorem}{Theorem}
\newtheorem*{rmk*}{remark}
\newtheorem{proposition}{Proposition}
\newtheorem{lemma}{Lemma}
\newtheorem{condition}{Condition}
\newtheorem{definition}{Definition}
\newtheorem*{corollary*}{Corollary}
\apptocmd{\sloppy}{\hbadness 10000\relax}{}{} 
\newcites{sec}{References}
\begin{document}

 \def\spacingset#1{\renewcommand{\baselinestretch}%
{#1}\small\normalsize} \spacingset{1}


\if0\blind
{
  \title{\bf A randomization-based theory for preliminary testing of covariate balance in controlled trials}
  \author{Anqi Zhao\thanks{Department of Statistics and Data Science, National University of Singapore (NUS)} \ and 
Peng Ding\thanks{Department of Statistics, University of California, Berkeley. The authors gratefully acknowledge the Start-Up grant R-155-000-216-133 from NUS and the U.S. National Science Foundation (grant \# 1945136).  We thank the editor, the associate editor, and  three referees for their most constructive comments.}}
    \date{}
  \maketitle
} \fi

\if1\blind
{
  \bigskip
  \bigskip
  \bigskip
  \begin{center}
    {\LARGE\bf A randomization-based theory for preliminary testing of covariate balance in controlled trials}
\end{center}
  \medskip
} \fi

\bigskip
\begin{abstract}
Randomized trials balance all covariates on average and provide the gold standard for estimating treatment effects. Chance imbalances nevertheless exist more or less in realized treatment allocations and  intrigue an important question: what should we do in case the treatment groups differ with respect to some important baseline characteristics? A common strategy is to conduct a {\it preliminary test} of the balance of baseline covariates after randomization, and invoke covariate adjustment for subsequent inference if and only if the realized allocation fails some prespecified criterion. 
Although such practice is intuitive and popular among practitioners,  the existing literature has so far only evaluated its properties under strong parametric model assumptions in theory and simulation,  yielding results of limited generality. 
To fill this gap, we examine two strategies for conducting preliminary test-based covariate adjustment by regression, and evaluate the validity and efficiency of the resulting inferences from the randomization-based perspective. As it turns out, the preliminary-test estimator based on the analysis of covariance can be even less efficient than the unadjusted difference in means, and risks anticonservative confidence intervals based on normal approximation even with the robust standard error. 
The preliminary-test estimator based on the fully interacted specification is on the other hand less efficient than its counterpart under the {\it always-adjust} strategy, and yields overconservative confidence intervals based on normal approximation. 
In addition,  although the Fisher randomization test is still finite-sample exact for testing the sharp null hypothesis of no treatment effect on any individual,  it is no longer valid for testing the weak null hypothesis of zero average treatment effect in large samples even with properly studentized test statistics. These undesirable properties are due to the asymptotic non-normality of the preliminary-test estimators. Based on theory and simulation, we echo the existing literature and do not recommend the preliminary-test procedure for covariate adjustment in randomized trials. 
\end{abstract}

\noindent%
{\it Keywords:}  Causal inference;  design-based inference; efficiency; Fisher randomization test; regression adjustment; rerandomization
\vfill

\newpage
\spacingset{1.45} 

%
\section{Introduction}
%
\subsection{Preliminary test of covariate balance}
Randomized trials balance all observed and unobserved covariates on average, providing the gold standard for estimating treatment effects \citep{Fisher35, gerber2012field, rosenberger2015randomization}. 
Chance imbalances nevertheless exist more or less in realized treatment allocations \citep{altman, senn89, senn, morgan2012rerandomization},  complicating the interpretation of experimental results; see \citet[Section 7]{PostStratYu} for a canonical example from a clinical trial. 
What should we do in case the treatment groups differ with respect to some important baseline characteristics?

There are in general three common responses to this question. 
The {\it unadjusted} strategy 
trusts the stochastic balance ensured by the randomization mechanism, and uses the difference in average outcomes across treatment groups to estimate the average treatment effect without adjusting for covariates. 
The resulting inference is unbiased without any assumptions on the outcome-generating process \citep{Neyman23, CausalImbens, DingCLT}.   
The {\it always-adjust} strategy acknowledges the merits of covariate adjustment in improving the power and efficiency of inference, and always adjusts for all baseline covariates regardless of their balance across treatment groups \citep{Fisher35, freedman, Lin13, LD20}.
The  {\it preliminary-test} strategy takes a middle ground between the unadjusted and always-adjust strategies, and invokes covariate adjustment for inference if and only if the realized allocation fails some prespecified balance test \citep{posthoc, beach, altman2, permutt, mutz}.

Although the preliminary-test strategy is intuitive 
 and popular among practitioners, the theoretical literature holds a negative view on its implications on inference and discourages its use for experimental data in general. 
\cite{altman}, \cite{begg}, and \cite{senn} questioned the logical foundation of preliminary tests for randomized trials, and argued that such practice is illogical by testing a null hypothesis that is known to be true. 
In particular, with the randomization mechanism stochastically balancing all covariates between treatment groups on average, any significant result is by definition a false positive \citep{begg}.
The \textsc{consort} guidelines \citep{consort} echoed \cite{altman} and suggested that preliminary tests are superfluous and can be misleading for interpreting the experimental results. 
They accordingly recommended reporting tables of baseline demographic and clinical characteristics by experimental condition but discouraged testing for baseline covariate balance; see also \cite{boer}.  
\cite{senn89} examined the effect of covariate imbalance on the type I error rates of a test of treatment efficacy in randomized trials under a joint Gaussian model of outcome and a single covariate, and concluded that preliminary tests of covariate balance are not guaranteed to control the type I error rates. 
\cite{permutt} examined the operating characteristics of the \phts procedure under a similar model, and concluded that the \phts procedure has lower power than the always-adjust strategy.
\cite{mutz} provided an informative account of competing opinions in the medical and political science literature regarding the use of covariate balance test for experimental data, 
and concluded that one should not perform such tests unless the experimental data are compromised by complications like faulty randomization or differential attrition.
Despite the strong objections to preliminary tests in the clinical trial literature, however, most theoretical investigations so far were conducted under restricted parametric models \citep{senn89, permutt, mutz}, limiting the generality of subsequent findings. 
To fill this gap, we take a unified look at the role of covariate balance in controlled trials and clarify the implications of the preliminary-test procedure from the randomization-based perspective.

\black
\subsection{Our contributions}
This article makes several contributions. 
First, we extend the discussion of \cite{permutt} and \cite{mutz} to more general, non-parametric settings and formalize two schemes for preliminary test-based regression adjustment.
The proposed procedures use the {\mhls} distance of the difference in covariate means to form the covariate balance criterion, and invoke the additive \citep{Fisher35} and fully interacted \citep{Lin13} specifications for regression adjustment, respectively, in case the realized allocation fails the balance test.  

Second, we establish the sampling properties of the resulting point and interval estimators from the randomization-based perspective, and quantify their validity and efficiency relative to the unadjusted and always-adjust counterparts. 
The main findings are twofold. 
First, the preliminary-test estimator based on \cite{Fisher35}'s analysis of covariance can be even less efficient than the unadjusted difference in means and, moreover, risks anticonservative confidence intervals based on normal approximation even with the robust standard error.
Second, the preliminary-test estimator based on \cite{Lin13}'s fully interacted specification is less efficient than its counterpart under the always-adjust strategy and yields overconservative  confidence intervals based on normal approximation.

We then extend the discussion to the Fisher randomization test, and examine the validity of the preliminary-test estimators for testing not only the sharp null hypothesis of no treatment effect on any individual \citep{Fisher35, rubin80, CausalImbens} but also the weak null hypothesis of zero average treatment effect \citep{wd, zdfrt, colin}.  
As it turns out, the asymptotic non-normality of the preliminary-test estimators renders the usual strategy of studentization by Eicker--Huber--White heteroskedasticity-robust standard errors no longer sufficient to ensure the asymptotic validity of the tests for the weak null hypothesis. This illustrates the additional complications incurred by the preliminary-test procedure.  

Based on theory and simulation, we discourage the use of the preliminary-test procedure for covariate adjustment in randomized trials and recommend always adjusting for baseline covariates by \cite{Lin13}'s specification for both estimation and hypothesis testing.

\subsection{Notation and definitions}
We use the notation from \cite{zdrep} and \cite{zdca} to facilitate discussion. 
For a given set of tuples $\{(u_i, v_i): u_i \in \mathbb R, \ v_i \in \mathbb R^J; \ i = \ot{N}\}$, denote by $\lmt(u_i \sim v_i)$ the least squares  fit of the linear regression of $u_i$ on $v_i$ over $i = \ot{N}$. 
We focus on the numeric outputs of  least squares without any modeling assumption and evaluate their sampling properties from the randomization-based perspective.

Let $\qa $ denote the $(1-\alpha/2)$th quantile of the standard normal distribution. 
Let $\mi(\cdot)$ denote the indicator function. 
For two random variables $A$ and $B$ with cumulative distribution functions $F_A(t) = \pr(A\leq t)$ and $F_B(t) = \pr(B\leq t)$ for $t\in \mathbb R$, let 
\begina
\mix\beginp
A&: &\pi\\
B&: &1-\pi
\endp
\enda
denote the mixture of $A$ and $B$ with weights $(\pi, 1-\pi)$ for $\pi \in [0,1]$. The corresponding cumulative distribution function is $F(t) = \pi F_A(t) + (1-\pi) F_B(t)$. 

Assume that $\hth_1 \in \mathbb R$ and $\hth_2 \in \mathbb R$ are two consistent estimators for some parameter $\theta \in \mathbb R$. 
We say  $\hth_1$ is  {\it asymptotically more efficient} than $\hth_2$ if asymptotically, $\hth_1 - \theta$ has smaller central quantile ranges than  $\hth_2 - \theta$.
An asymptotically more efficient estimator has smaller asymptotic variance. 
For $\hth_1$ and $\hth_2$ that are both asymptotically normal, $\hth_1$ is asymptotically more efficient than $\hth_2$ if and only if the asymptotic variance of $\hth_1$ is smaller than that of $\hth_2$.

To avoid excessive notation, we state the theorems in general terms in the main paper and relegate the rigorous yet technical statements to the {\sm}. 
In particular, all theoretical results in the main paper assume a completely randomized treatment-control trial with a given set of regularity conditions for asymptotic analysis.
We give the definition of completely randomized treatment-control trials in Section \ref{sec:setup} and relegate the regularity conditions to the {\sm}.

\section{Basic setting of the treatment-control trial}\label{sec:setup}
We review in this section the basic setting of completely randomized treatment-control trials under the potential outcomes framework \citep{Neyman23}, and give an overview of the classical results on regression-based causal inference. 
We then formalize the preliminary-test procedure as the focus of our discussion.

\subsection{Potential outcomes and treatment effects}
Consider an intervention of two levels, $z = 0$ (control), $1$ (treatment), and a finite population of $N$ units, indexed by $i =1,\ldots, N$.  
We invoke the potential outcomes framework and denote by $\yiz  \in  \mathbb R$  the potential outcome of unit $i$ under treatment level $z \in \{0,1\}$. This notation assumes the stable unit treatment value assumption \citep{rubin80} that the potential outcomes for any unit do not vary with the treatments assigned to other
units, and, for each unit, there are no different forms or versions of each treatment level,
which lead to different potential outcomes. 
The individual treatment effect is $\tau_i  = Y_i(1) - Y_i(0)$, and the finite population average treatment effect is $\tau = N^{-1} \sumN  \tau_i$. 
Let $x_i = (x_{i1}, \ldots, x_{iJ})^\T  \in \mathbb R^J$ be the $J$-dimensional covariate vector for unit $i$. 
We center the $x_i$'s with $\bar x = N^{-1}\sumi x_i = 0_J$ to simplify the presentation.

Let $Z_i \in \{0,1\}$ denote the treatment level received by unit $i$.
For some prespecified, fixed treatment sizes $N_1 >0$ and $N_0 = N-N_1 >0$, complete randomization draws completely at random $N_1$ units to receive treatment and assigns the remaining $N_0$ units to control. 
The observed outcome is $ Y_i  = \Zi\Yi(1) + (1-\Zi)\Yi(0) $ for unit $i$.

Let $X = (x_1, \ldots, x_N)^\T  \in \mathbb R^{N\times J}$, $Z = (Z_1, \dots, Z_N)^\T \in \mathbb R^N$, and $Y = (Y_1, \ldots, Y_N)^\T \in \mathbb R^N$ summarize the covariates, treatment assignments, and observed outcomes of all $N$ units. 
Let $e_z = N_z /N$ for $z=0,1$. 

\subsection{Regression analysis and review of classical results}\label{sec:review}
Let $\hY(z) = \meaniz   Y_i $ denote the average observed outcome under treatment level $z$. 
The difference in means $\hat{\tau}_\nm  = \hY(1) - \hY(0)$ is unbiased for $\tau$ under complete randomization \citep{Neyman23}, and can be computed as the coefficient of $Z_i$ from the simple linear regression $\lmt(Y_i \sim 1 + Z_i)$. 

The presence of covariates promises the opportunity to improve estimation efficiency. 
\citet{Fisher35} suggested an estimator $\htau_\fisher$ for $\tau$, which equals  the coefficient of $Z_i$ from the additive linear regression  $\lmt(Y_i \sim 1 + Z_i + x_i)$. 
\citet{Lin13} recommended an alternative estimator, $\htau_\lin$, as the coefficient of $Z_i$ from the fully interacted linear regression  $\lmt(Y_i \sim   1 + Z_i + x_i + Z_i x_i)$. 
We use the subscripts \textsc{n}, \textsc{f}, and \textsc{l} to signify \cite{Neyman23}, \cite{Fisher35}, and \cite{Lin13}, respectively. 

Standard theory ensures that 
$\hts \ (\nfl)$ are all asymptotically normal with mean $\tau$ 
and the asymptotic variance of $\htl$ is the smallest among the three \citep{Neyman23, DingCLT, Freedman08a, Lin13}; we give the explicit forms of their asymptotic variances in the supplementary material. 
This ensures the asymptotic efficiency of $\htl$ over $\htn$ and $\htf$.
In addition, the corresponding Eicker--Huber--White heteroskedasticity-robust standard errors from least squares, denoted by $\hse_*$ for $\nfl$, are asymptotically  conservative for estimating the true sampling standard errors 
of the $\hts$'s.
This justifies the large-sample Wald-type inference based on $(\hts, \hses)$ and normal approximation, with the resulting $1-\alpha$ large-sample confidence intervals taking the form 
\begina
\cis = \left[\hts - \qa  \hses, \ \hts + \qa  \hses \right] \quad(\nfl).
\enda
Asymptotically, the limit of $\hsel/\hses$ is less than or equal to 1 for $\nf$ \citep[Corollary 3]{zdfrt} such that $\cil$ is the shortest among $\{\cis: \nfl\}$. Importantly, these theoretical guarantees are all randomization-based and hold regardless of how well the linear models represent the truth.  

\subsection{Preliminary-test procedure}\label{sec:ptp}
In the context of regression analysis of randomized trials,
the {\it preliminary-test procedure} suggests that regression adjustment for covariates should only be conducted if the realized treatment allocation fails to meet some prespecified covariate balance criterion \citep{permutt}.
Let $\hta$ and $\hsea$ denote the covariate-adjusted estimator and its associated standard error based on some prespecified regression specification.
The resulting preliminary-test procedure proceeds with inference in the following two steps:
\begine[(i)]
\item test for covariate balance based on $(X, Z) = (x_i, Z_i)_{i=1}^N$; 
\item conduct inference based on $(\hta, \hsea)$ if the test is rejected, and based on  $(\htn, \hsen)$ if otherwise. 
\ende

Let $\phi = \phi(X, Z) \in \{0,1\}$ denote the corresponding covariate balance test, with $\phi= 0$ if the test is rejected and $\phi = 1$ if otherwise. 
Intuitively, $\phi=1$ indicates that the realized allocation is balanced by meeting the prespecified balance criterion, whereas $\phi = 0$ indicates otherwise. 
The estimator used by the preliminary-test procedure, referred to as the {\it preliminary-test estimator}, is then
\begina
\htpt = \left\{
\begin{array}{ll}
\htn &\text{if $\phi = 1$;}\\
\hta &\text{if $\phi = 0$,}
\end{array}
\right. 
\enda
with standard error
\begina
\hsept = \left\{
\begin{array}{ll}
\hsen &\text{if $\phi = 1$;}\\
\hsea &\text{if $\phi = 0$.}
\end{array}
\right.
\enda
We use the subscript \text{pt} to represent ``preliminary test". The definition of $\phi$ ensures $$\htpt = \phi \cdot \htn + (1-\phi)\cdot \hta, \quad \hsept = \phi\cdot \hsen + (1-\phi)\cdot \hsea.$$  

Recall $\cin$ as the  $1-\alpha$ large-sample confidence interval based on $(\htn, \hsen)$ and normal approximation. Let $\cia = [\hta - \qa  \hsea, \ \hta + \qa  \hsea]$ be the analog based on the covariate-adjusted estimator and its associated standard error. 
Following the same spirit as the definitions of $\htpt$ and $\hsept$, a common strategy is to 
construct the large-sample preliminary-test confidence interval as 
\begina
\cipt =\left\{
\begin{array}{ll}
\cin &\text{if $\phi = 1$;}\\
\cia &\text{if $\phi = 0$.}
\end{array}
\right.
\enda

The definitions of $\hta$ and $\hsea$ above are general and compatible with all common covariate adjustment strategies in practice. 
We categorize the commonly used preliminary-test procedures into two categories depending on whether $(\hta, \hsea)$ adjusts for all or only a subset of the covariates. 

\subsubsection{Preliminary-test procedure without variable selection}
The first category of procedures adjusts for all covariates if the realized allocation fails the balance test. Given the convenient regression-adjusted estimators $(\htf, \hsef)$ and $(\htl, \hsel)$, a straightforward example is to define $(\hta, \cia) = ( \htf, \cif)$ or $(\hta, \cia) = (\htl, \cil)$. The resulting preliminary-test estimator is then $\htpt=\phi \cdot \htn + (1-\phi)\cdot \hts$ for $*\in\{\fisher,\lin\}$, equaling a mixture of $\htn$ and $\hts$. 

Let $
\htx= \hx(1) - \hx(0)$ 
denote the difference in covariate means with $\hx(z) = \meaniz x_i$ for $z = 0,1$. 
Its Mahalanobis distance from the origin,  denoted by 
\begina
M = \htx^\T \{\cov(\htx)\}^{-1}\htx, 
\enda defines a measure of covariate balance across all covariates 
\citep{morgan2012rerandomization}.
A possible test of covariate balance is to consider a treatment allocation as balanced if $M$ is less than a prespecified threshold $a \geq 0$ with  
\begina
\phi = \phim= \mi(M < a).
\enda
\cite{zdrep} showed that such a test is asymptotically equivalent to a multivariate two-sample $t$-test based on $(X, Z) = (x_i, Z_i)_{i=1}^N$.

\subsubsection{Preliminary-test procedure with variable selection}\label{sec:vs}
The second category of procedures  tests balance marginally for each covariate and adjusts for only the significantly imbalanced covariates in  analysis.  A common choice for the covariate-wise balance test is to conduct a two-sample $t$-test for each covariate and declare the covariate imbalanced if  the associated $p$-value is less than a prespecified threshold such as 0.05. 

\bigskip


We will focus on preliminary-test procedures based on $\phim$ without variable selection for the rest of this article due to its conceptual straightforwardness, and derive in the following sections its sampling properties from the randomization-based perspective.
The preliminary-test procedures with variable selection, on the other hand, involve post-selection inference and are hence technically more challenging. 
Despite the limited gain in efficiency from simulation evidence when the number of covariates is small \citep{posthoc, permutt, raab},
we conjecture that such variable-selection procedures may have merits over the always-adjust strategy with high-dimensional covariates, where naive regression adjustment suffers from overfitting.  
We relegate the formal theory to future research.

\black

\section{Preliminary test based on the Mahalanobis distance}\label{sec:pt}
The preliminary-test procedure based on $\phim= \mi(M < a)$ uses the Mahalanobis distance $M$ to measure the covariate balance and conducts regression adjustment if and only if $M$ is greater than or equal to some prespecified threshold $a \geq 0$. 
We focus on the two special cases of $(\hta, \hsea) = (\htf, \hsef)$ and $(\hta, \hsea) = (\htl, \hsel)$ for regression adjustment when $M \geq a$, and derive the sampling properties of the resulting $\htpt$ and $\cipt$ for inferring $\tau$. 
To simplify the presentation, 
let $(\htptf, \hseptf, \ciptf ) $ and $(\htptl, \hseptl, \ciptl ) $ denote the values of $(\htpt, \hsept, \cipt)$  when the analyzer uses the additive and fully interacted specifications for regression adjustment, respectively. We have
\beginy
\label{eq:htpts}
(\htpts, \hsepts, \cipts)= \left\{
\begin{array}{cl}
(\htn, \hsen, \cin) &\text{if $M < a$}\\
(\hts, \hses, \cis) &\text{if $M \geq a$}
\end{array}
\right. 
\endy 
 with $\cipts = [\htpts - \qa  \hsepts, \ \htpts + \qa  \hsepts ]$ for $\fl$.

\subsection{Sampling distributions of the point estimators}
Let 
$\ep \sim \mn(0,1)$ be a standard normal random variable,
$D = (D_1, \ldots, D_J)^\T \sim \mN(0_J,I_J)$ be a $J$-dimensional standard normal random vector, 
and 
\begina
\mL \sim D_1 \mid \{D^\T D < a\}, \quad \ml'  \sim D_1 \mid \{D^\T D \geq a \}
\enda be two truncated normal random variables independent of $\epsilon$. 
Intuitively, the truncated normal distributions characterize the distributions of $\htpts$ when $M<a$ and $M \geq a$, respectively, with $\mL$ first introduced by \cite{LD2018} in the theory of rerandomization.
Let $\pi_a  = \pr( D^\T D <a )$ denote the asymptotic probability of $M < a$ \citep{LD2018}.  For $\nfl$, let $v_*$ denote the asymptotic variance of $\hts$ in the sense that $\sqrt N(\hts - \tau) \rs \mn(0, v_*)$; we give the explicit forms of $v_*$ in the supplementary material. 
Proposition \ref{prop:htpt} below is our first new result in this article and establishes the asymptotic sampling distribution of $\htpts$ as a mixture of two possibly non-normal distributions for $\fl$.

\begin{proposition}\label{prop:htpt}
As $N\to \infty$, we have 
\begina
\sqrt N(\htpts-\tau) &\rs& \mix\beginp
\vl  ^{1/2} \epsilon +  (\vn - \vl)^{1/2}  \ml&: & \pi_a\\
\vl  ^{1/2} \epsilon +  (\vs - \vl)^{1/2}   \ml' &: & 1-\pi_a
\endp
\enda
for $\fl$. 
\end{proposition}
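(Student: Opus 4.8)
The plan is to combine a joint finite-population central limit theorem with an orthogonal decomposition and a rotational-invariance argument, and then read off the limiting law by conditioning on the balance event.

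First I would establish the joint weak convergence of $(\sqrt N(\htn - \tau),\ \sqrt N\,\htx)$ to a mean-zero jointly normal pair $(\xi_\nm, G)$, where $G \sim \mn(0,\Vx)$ with $\Vx = \limN N\,\cov(\htx)$ positive definite under the regularity conditions and $\xi_\nm \sim \mn(0, \vn)$; this is the standard design-based CLT referenced in Section \ref{sec:review}. Combining it with the classical linearization $\sqrt N(\hts - \tau) = \sqrt N(\htn - \tau) - \gamma_*^\T \sqrt N\,\htx + \op$ for $\fl$, where $\gamma_*$ is the probability limit of the corresponding regression coefficient on the covariates, and with $M = (\sqrt N\,\htx)^\T\{N\,\cov(\htx)\}^{-1}(\sqrt N\,\htx) \rs G^\T \Vx^{-1} G = D^\T D$ for $D = \Vx^{-1/2} G \sim \mn(0, I_J)$, I would obtain the joint weak limit of the triple $(\sqrt N(\htn - \tau),\ \sqrt N(\hts - \tau),\ M)$ as $(\xi_\nm,\ \xi_*,\ D^\T D)$ with $\xi_* = \xi_\nm - \gamma_*^\T G$. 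Since $D^\T D$ has a continuous law, the map $(u,v,m)\mapsto u\,\mi(m<a) + v\,\mi(m\geq a)$ is continuous off the set $\{m = a\}$, which has limiting probability $\pr(D^\T D = a) = 0$, so the continuous mapping theorem gives $\sqrt N(\htpts - \tau) \rs \xi_\nm\,\mi(D^\T D < a) + \xi_*\,\mi(D^\T D \geq a)$.

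Next I would decompose $\xi_\nm$ by $L^2$-projection onto the linear span of the coordinates of $G$: write $\xi_\nm = \vl^{1/2}\epsilon + \gamma_\lin^\T G$ with $\epsilon \sim \mn(0,1)$ independent of $G$. The crucial input is the randomization-based characterization of $\htl$ from Section \ref{sec:review}: $\gamma_\lin = \Vx^{-1}C$ with $C = \limN N\,\cov(\htn,\htx)$ is exactly the coefficient minimizing the asymptotic variance of $\htn - \gamma^\T\htx$, hence the projection coefficient, so the residual $\xi_\nm - \gamma_\lin^\T G$ is uncorrelated with $G$ (hence independent of it, by joint normality) and has variance exactly $\vl$. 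Consequently $\xi_* = \vl^{1/2}\epsilon + (\gamma_\lin - \gamma_*)^\T G$ with the same $\epsilon \perp G$: since $\hts - \htn$ is asymptotically a linear function of $\htx$ alone, subtracting it leaves the component of $\sqrt N(\htn - \tau)$ orthogonal to $\htx$ untouched, which is precisely why the same $\vl^{1/2}\epsilon$ term appears in both branches of the mixture.

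Finally I would compute the law of $\xi_\nm\,\mi(D^\T D < a) + \xi_*\,\mi(D^\T D \geq a)$ by splitting on $\{D^\T D < a\}$ and its complement. Writing $\gamma_\lin^\T G = b_\lin^\T D$ and $(\gamma_\lin - \gamma_*)^\T G = c_*^\T D$ with $b_\lin = \Vx^{1/2}\gamma_\lin$ and $c_* = \Vx^{1/2}(\gamma_\lin - \gamma_*)$, one has $\|b_\lin\|^2 = \gamma_\lin^\T \Vx\,\gamma_\lin = \vn - \vl$ and $\|c_*\|^2 = \var(\xi_*) - \vl = \vs - \vl$, using $\var(\xi_*) = \vs$. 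On each event I would invoke the rotational invariance of $D$ together with that of the conditioning sets $\{D^\T D < a\}$ and $\{D^\T D \geq a\}$: an orthogonal $R$ with $Rb_\lin = \|b_\lin\|e_1$ gives $(b_\lin^\T D,\ D^\T D) \stackrel{d}{=} (\|b_\lin\| D_1,\ D^\T D)$, and likewise for $c_*$, reducing each branch to the single coordinate $D_1$. Using $\epsilon \perp D$, $\pr(D^\T D < a) = \pi_a$, and the definitions $\mL \sim D_1 \mid \{D^\T D < a\}$, $\ml' \sim D_1 \mid \{D^\T D \geq a\}$, the conditional law on $\{D^\T D < a\}$ becomes that of $\vl^{1/2}\epsilon + (\vn - \vl)^{1/2}\mL$ and the conditional law on $\{D^\T D \geq a\}$ becomes that of $\vl^{1/2}\epsilon + (\vs - \vl)^{1/2}\ml'$; weighting these by $\pi_a$ and $1-\pi_a$ gives exactly the stated mixture.

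I expect the main obstacle to be the second step: verifying rigorously that the $\htx$-orthogonal component of $\sqrt N(\htn - \tau)$ has limiting variance exactly $\vl$ and carries over verbatim to $\sqrt N(\hts - \tau)$. This rests on the efficient-estimator characterization of $\htl$ and on checking that the $\op$ remainder in the linearization does not perturb the orthogonal decomposition in the limit; the joint CLT, the continuous-mapping step at the continuity point $a$, and the rotational-invariance computation are otherwise routine.
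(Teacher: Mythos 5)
Your proof is correct, but it takes a more self-contained route than the paper. The paper's own argument is very short: it writes the unconditional distribution function of $\sqrt N(\htpts-\tau)$ as a weighted sum of the conditional ones given $\{M<a\}$ and $\{M\geq a\}$, uses $\pr(M<a)\to\pia$, and then simply invokes Proposition \ref{prop:cond}, whose conditional limits are in turn imported from the rerandomization literature via Lemma \ref{lem:cond} and the conditional weak-convergence tool (Lemma \ref{lem:weak_convergence}, a generalization of \citealp[Proposition A1]{LD2018}). You instead re-derive those conditional laws from first principles: the joint CLT for $(\sqrt N(\htn-\tau), \sqrt N \htx)$ together with the exact identity $\hts=\htn-\hgs^\T\htx$ and consistency of $\hgs$ gives joint convergence of $(\sqrt N(\htn-\tau),\sqrt N(\hts-\tau),M)$; the continuous mapping theorem applied to $(u,v,m)\mapsto u\,\mi(m<a)+v\,\mi(m\geq a)$ (valid since the limit puts no mass on $\{m=a\}$) then yields the unconditional limit without ever needing a conditional weak-convergence lemma; and the $L^2$-projection plus rotational-invariance computation recovers the $\vl^{1/2}\epsilon+(\vn-\vl)^{1/2}\ml$ and $\vl^{1/2}\epsilon+(\vs-\vl)^{1/2}\ml'$ forms. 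Your ``crucial input'' is indeed true and checks out directly: the projection coefficient of the limit of $\sqrt N(\htn-\tau)$ on the limit of $\sqrt N\htx$ is $v_x^{-1}c_\nm=e_1\gamma_0+e_0\gamma_1=\gamma_\lin$, and the residual variance is $\vn-c_\nm^\T v_x^{-1}c_\nm=\vl$ by the relation $v_*-\vl=c_*^\T v_x^{-1}c_*$ in Lemma \ref{lem:cre}, so no extra work beyond citing that lemma is really needed there. The trade-off: the paper's proof is two lines but leans on Proposition \ref{prop:cond} and external conditional asymptotics, while yours is longer but sidesteps the conditioning machinery for the unconditional statement and makes transparent why the identical $\vl^{1/2}\epsilon$ component appears in both branches (the covariate-orthogonal part of $\htn$ is untouched by adjustment). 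Only cosmetic caveats: your $c_*=V_x^{1/2}(\gamma_\lin-\gamma_*)$ clashes notationally with the paper's covariance vector $c_*$, and note that $M$ is built from the known design-based $\cov(\htx)$, so no estimated covariance needs to be handled in the convergence $M\rs D^\T D$.
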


Proposition \ref{prop:htpt} ensures that $\sqrt N(\htpts-\tau)$ converges in distribution to a mixture of $\vl  ^{1/2} \epsilon +  (\vn - \vl)^{1/2}  \ml $ and
$\vl  ^{1/2} \epsilon +  (\vs - \vl)^{1/2}   \ml' $ with weights $(\pi_a, 1-\pi_a)$ for $\fl$. For $\htptl$, the second component reduces to $\vl^{1/2} \epsilon\sim \mn(0, \vl)$ as the coefficient of $\ml'$ equals $0$.

Theorem \ref{thm:htpt} below is a direct consequence of Proposition \ref{prop:htpt} and quantifies the asymptotic relative efficiency of $\htpts \ (\fl)$. 
 
\begin{theorem}\label{thm:htpt}
As $N\to\infty$, 
\begine[(i)]
\item\label{item:htpt1} $\htpts \ (\fl)$ are less efficient than $\htl$; 
\item\label{item:htpt2} $\htptl$ is more efficient than $\htn$ and $\htptf$;
\item\label{item:htpt3} $\htptf$ may be less efficient than $\htn$. 
\ende 
\end{theorem}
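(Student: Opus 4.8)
The plan is to translate Proposition~\ref{prop:htpt} into pointwise comparisons between the cumulative distribution functions of the absolute values of the competing limiting laws: since every limit appearing in Proposition~\ref{prop:htpt} is symmetric about zero, ``$\hth_1-\tau$ has smaller central quantile ranges than $\hth_2-\tau$'' is, in the limit, equivalent to ``$\pr(|\hth_1-\tau|\le t)\ge \pr(|\hth_2-\tau|\le t)$ for all $t\ge 0$'', so it suffices to order the mixture laws in this peakedness sense. I would first record two tools. Tool (a), a peakedness lemma: if $A$ is symmetric about zero with a unimodal density and $B$ is any symmetric random variable independent of $A$, then $\pr(|A+B|\le t)\le\pr(|A|\le t)$ for all $t$, and moreover $c\mapsto\pr(|A+cB|\le t)$ is nonincreasing on $[0,\infty)$; this follows by conditioning on $B$, since $h(b):=\pr(|A+b|\le t)$ is even and nonincreasing in $|b|$ by symmetry and unimodality of the density of $A$, whence $\pr(|A+B|\le t)=\E\{h(B)\}\le h(0)$ and the scale monotonicity is immediate. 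Tool (b), a decomposition identity which, for any $v\ge\vl$, writes $\mn(0,v)$ as exactly the same \emph{type} of mixture as in Proposition~\ref{prop:htpt}:
\begina
\mn(0,v) &\overset{d}{=}& \mix\beginp
\vl^{1/2}\ep + (v-\vl)^{1/2}\mL &:& \pi_a\\
\vl^{1/2}\ep + (v-\vl)^{1/2}\ml' &:& 1-\pi_a
\endp .
\enda
This holds because $D_1$ equals $\mL$ on $\{D^\T D<a\}$ and $\ml'$ on $\{D^\T D\ge a\}$, and $\ep$ is independent of all of $D$ (hence of the truncation event), so that $\vl^{1/2}\ep+(v-\vl)^{1/2}D_1\sim\mn(0,v)$ splits componentwise.

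With the tools in hand, part~\ref{item:htpt1} is immediate: the limit of $\sqrt N(\htl-\tau)$ is $\mn(0,\vl)\overset{d}{=}\vl^{1/2}\ep$, whereas by Proposition~\ref{prop:htpt} each of the two mixture components of the limit of $\sqrt N(\htpts-\tau)$ equals $\vl^{1/2}\ep$ plus an independent symmetric term with nonnegative coefficient $(\vn-\vl)^{1/2}$ or $(\vs-\vl)^{1/2}$ (using $\vl\le\min(\vn,\vf)$ from the classical results reviewed in Section~\ref{sec:review}); tool (a) makes each component, hence their mixture, no more peaked than $\vl^{1/2}\ep$, so $\htpts$ ($\fl$) are less efficient than $\htl$. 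For part~\ref{item:htpt2}, I would represent the limit of $\sqrt N(\htn-\tau)$ via tool (b) with $v=\vn$ and the limit of $\sqrt N(\htptf-\tau)$ via Proposition~\ref{prop:htpt}; then the limits of $\sqrt N(\htptl-\tau)$, $\sqrt N(\htptf-\tau)$, and $\sqrt N(\htn-\tau)$ are mixtures sharing the \emph{same} first component $\vl^{1/2}\ep+(\vn-\vl)^{1/2}\mL$ of weight $\pi_a$, with second components (weight $1-\pi_a$) equal to $\vl^{1/2}\ep$, $\vl^{1/2}\ep+(\vf-\vl)^{1/2}\ml'$, and $\vl^{1/2}\ep+(\vn-\vl)^{1/2}\ml'$, respectively; by tool (a) the first of these is the most peaked, so the limit of $\sqrt N(\htptl-\tau)$ is the most peaked overall, which gives part~\ref{item:htpt2}, strictly whenever $0<\pi_a<1$ and the covariates are not useless ($\vn>\vl$, and $\vf>\vl$ for the comparison with $\htptf$).

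For part~\ref{item:htpt3}, I would invoke the classical fact \citep{Freedman08a} that the data-generating process can make $\htf$ less efficient than $\htn$, i.e.\ $\vf>\vn$; in any such instance, comparing the limit of $\sqrt N(\htptf-\tau)$ (Proposition~\ref{prop:htpt}) with that of $\sqrt N(\htn-\tau)$ (tool (b) with $v=\vn$), the two mixtures share the first component, while the second components $\vl^{1/2}\ep+(\vf-\vl)^{1/2}\ml'$ and $\vl^{1/2}\ep+(\vn-\vl)^{1/2}\ml'$ carry coefficients $(\vf-\vl)^{1/2}>(\vn-\vl)^{1/2}\ge 0$, so the scale-monotonicity part of tool (a) makes the $\htptf$ limit strictly less peaked than $\mn(0,\vn)$; hence $\htptf$ is less efficient than $\htn$ in that regime.

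The main obstacle is setting up the two tools, in particular noticing that $\mn(0,v)$ itself admits the mixture representation in tool (b)---this is precisely what turns every comparison into a componentwise one---after which the three parts are bookkeeping. Two points need care: tool (a) requires $A=\vl^{1/2}\ep$ to have a symmetric unimodal density, which holds since it is Gaussian, with the degenerate case $\vl=0$ excluded by the regularity conditions or handled directly (there $\htpts$, $\htn$, and $\htl$ share the same limit); and strictness of the efficiency comparisons---needed for part~\ref{item:htpt3} and desirable elsewhere---holds as soon as $0<\pi_a<1$ and the covariates are genuinely predictive, which is exactly what $\vf>\vn$ supplies in part~\ref{item:htpt3}.
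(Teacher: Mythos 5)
Your proposal is correct, and it reaches the conclusions by a genuinely different (and somewhat more elementary) route than the paper. The paper works entirely inside the peakedness calculus of its Lemmas \ref{lem:peak_sum}--\ref{lem:peak_mix}: for part (i) it notes $\vl^{1/2}\ep \succeq \vl^{1/2}\ep + (\vn-\vl)^{1/2}\ml$ and $\vl^{1/2}\ep \succeq \vl^{1/2}\ep + (\vs-\vl)^{1/2}\ml'$ and then mixes; for the comparison of $\htptl$ with $\htn$ it uses the truncated-normal peakedness fact $\ml \succeq D_1 \sim \ep$ (Lemma \ref{lem:tdl}, due to Li--Ding) to pass from $\vl^{1/2}\ep + (\vn-\vl)^{1/2}\ml$ to $\vl^{1/2}\ep + (\vn-\vl)^{1/2}D_1 \sim \vn^{1/2}\ep$ and again mixes. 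Your distinctive move is tool (b): writing $\mn(0,v)$ itself as the $(\pi_a, 1-\pi_a)$ mixture of $\vl^{1/2}\ep + (v-\vl)^{1/2}\ml$ and $\vl^{1/2}\ep + (v-\vl)^{1/2}\ml'$ by splitting $D_1$ on the balance event, so that every comparison becomes componentwise with the $\ml$-components shared and cancelling; the only probabilistic input is then your tool (a), a scalar Anderson-type inequality for a symmetric unimodal $A$ plus an independent symmetric shift, with its scale-monotonicity refinement. This buys you two things: you never need the truncated-normal peakedness results ($\ml \succeq \ep$ or $\ep \succeq \ml'$, the latter requiring the Gaussian correlation inequality in the paper), and the same componentwise scheme delivers, explicitly, both the $\htptl$-versus-$\htptf$ half of part (ii) and part (iii) (taking $\vf > \vn$ as in Freedman's examples and using strict scale monotonicity), which the paper's written proof leaves as immediate consequences of its displays. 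What the paper's heavier machinery buys in exchange is reusability: the peakedness lemmas and Lemma \ref{lem:tdl} are needed anyway for the conditional coverage results (Theorems \ref{thm:coverage_conditional} and \ref{thm:coverage_conditional_app}) and extend to vector-valued estimators, whereas your tool (a) is inherently one-dimensional. Your caveats are also consistent with the paper: the stated "less efficient" is proved there only in the weak (peakedness) sense, with strictness exactly under your conditions $0<\pi_a<1$, $\vl>0$, and informative covariates, and the degenerate case $\vl=0$ is harmless since all limits then coincide.
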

Theorem \ref{thm:htpt}\eqref{item:htpt1}, combined with the classical results in Section \ref{sec:review}, ensures that the $\htl$ under the always-adjust strategy is asymptotically the most efficient among $\{\htn, \htf, \htl, \htptf, \htptl\}$. 
Theorem \ref{thm:htpt}\eqref{item:htpt1}--\eqref{item:htpt2} together ensure that the preliminary-test procedure, when using the fully interacted specification, improves efficiency over the unadjusted $\htn$ asymptotically yet is always dominated by $\htl$. 
This is intuitively because of its mixed use of $\htn$ and $\htl$ depending on the covariate balance, where $\htl$ ensures asymptotic efficiency over $\htn$ \citep{Lin13}. 
Theorem \ref{thm:htpt}\eqref{item:htpt3}, on the other hand, suggests that the preliminary-test procedure using the additive specification may be even worse than $\htn$. 
This is intuitively the consequence of $\htf$ being not necessarily more efficient than $\htn$ \citep{Freedman08a}.  
We illustrate these results by simulation in Section \ref{sec:simulation}.

\subsection{Coverage rates of the large-sample confidence intervals} 
Recall that  $
\cipts 
$ 
denotes the $1-\alpha$ large-sample confidence interval based on $(\htpts, \hsepts)$ and normal approximation for $\fl$.
We quantify in this subsection their asymptotic coverage rates for inferring $\tau$. 

Recall that  $
\cis 
$ 
denotes the $1-\alpha$ large-sample confidence interval based on $(\hts, \hses)$ for $\nfl$.  
They are {\it asymptotically valid} in the sense that each $\cis$ covers $\tau$ with probability at least $1-\alpha$ as $N$ tends to infinity and the coverage rates are asymptotically exact if $\tau_i$'s are constant over $i = \ot{N}$.
We call a confidence interval {\it overconservative} if its coverage rate exceeds the nominal level even under the constant treatment effects, and call a confidence interval {\it anticonservative} if its coverage rate can be less than the nominal level. 
Theorem \ref{thm:coverage} below extends the classical results on $\cis \ (\nfl)$ to the preliminary-test intervals and states the asymptotic anti- and overconservativeness of $\ciptf$ and $\ciptl$, respectively. 

\begin{theorem}\label{thm:coverage}
As $N\to\infty$, 
\begine[(i)]
\item\label{item:overconservative_L} $\ciptl$ is overconservative; 
\item\label{item:invalid_F} $\ciptf$ is overconservative when either $e_0=e_1=1/2$ or the individual effects $\tau_i$ are constant across $i = \ot{N}$, but can be anticonservative in the presence of heterogeneous treatment effects and unequal treatment group sizes;  
\item\label{item:width} $\cil$ is the shortest among $\{\cin, \cif, \cil, \ciptf, \ciptl\}$; $\ciptl$ is  no wider than $\cin$, whereas $\ciptf$ can be wider than $\cin$. 
\ende
\end{theorem}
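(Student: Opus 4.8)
The proof has three parts, matching the three items, and the underlying tool throughout is Proposition \ref{prop:htpt} together with the classical conservativeness of $\hse_*$. First I would set up the limiting objects. Write $G = \lim \hsepts^2 / (\text{scaling})$ — more precisely, since $\hse_\nm$ and $\hse_*$ are each asymptotically conservative for $v_\nm/N$ and $v_*/N$ respectively, the studentized quantity $\sqrt N(\htpts - \tau)/(\sqrt N \hsepts)$ has a limiting distribution that is a mixture (with weights $\pi_a$, $1-\pi_a$) of ratios of the non-normal numerators from Proposition \ref{prop:htpt} to the square roots of (conservative estimates of) $v_\nm$ and $v_*$. The coverage rate of $\cipts$ is then $\pr(|W| \le \qa)$ where $W$ is this limiting ratio, so the whole theorem reduces to analyzing when this probability is $\ge 1-\alpha$ and how the interval half-width $\qa \hsepts$ compares across methods.

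For item \eqref{item:overconservative_L}: when $\phi_\mm = 1$ the estimator is $\htn$ with its own conservative SE, giving a conditional coverage $\ge 1-\alpha$ with equality under constant effects; when $\phi_\mm = 0$ the estimator is $\htl$ with its conservative SE, again $\ge 1-\alpha$. The subtlety is that the events $\{M<a\}$ and $\{M \ge a\}$ are asymptotically functions of $D$, which is \emph{correlated} with the $\mathcal L$, $\mathcal L'$ appearing in Proposition \ref{prop:htpt} (indeed $\mathcal L \sim D_1 \mid D^\T D < a$). But conditioning is already baked into the mixture representation, so on each branch one compares a truncated-normal-perturbed numerator against a conservative denominator. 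I would show that on the $\{M<a\}$ branch the numerator $v_\lin^{1/2}\epsilon + (v_\nm - v_\lin)^{1/2}\mathcal L$ has variance \emph{no larger} than $v_\nm$ (since $\mathcal L$ is a truncated standard normal with variance $\le 1$, and $\epsilon \perp \mathcal L$), while the denominator limit is $\ge v_\nm$; hence that branch is conservative, and symmetrically the $\{M \ge a\}$ branch is $\htl$ with conservative SE. Averaging the two conservative conditional coverages with weights $(\pi_a, 1-\pi_a)$ gives overconservativeness of $\ciptl$. Under constant effects all the conservative inequalities become equalities on the $\htn$ branch but the truncation still shrinks variance, so strict overconservativeness persists — that is the content of ``overconservative'' as defined (coverage exceeds nominal even under constant effects).

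For item \eqref{item:invalid_F}: the $\{M \ge a\}$ branch now uses $\htf$, whose numerator is $v_\lin^{1/2}\epsilon + (v_\fisher - v_\lin)^{1/2}\mathcal L'$ where $\mathcal L'$ is a normal truncated to $\{D^\T D \ge a\}$, which has variance \emph{greater} than $1$. So the numerator variance on this branch can exceed the conservative target $v_\fisher$ unless $v_\fisher = v_\lin$. The key algebraic fact I would invoke (from the classical variance formulas relegated to the supplement) is that $v_\fisher = v_\lin$ precisely when $e_0 = e_1 = 1/2$, and that under constant treatment effects the robust SE for $\htf$ is exactly (not just asymptotically) conservative by a larger margin that absorbs the truncation inflation. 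When effects are heterogeneous \emph{and} $e_0 \ne e_1$, neither cushion is available, and I would exhibit — or cite the simulation/supplement — a configuration where the variance of $\mathcal L'$ times $(v_\fisher - v_\lin)$ overwhelms the conservativeness gap, pushing the coverage below $1-\alpha$. The hard part is this last step: showing anticonservativeness requires a genuine lower-bound construction rather than an inequality chase, because one must verify that the over-dispersed truncated component is not masked by the (sometimes substantial) conservative slack in $\hse_\fisher$; I expect this is where the authors do a careful computation in the supplement, possibly reducing to a one-dimensional inequality in $e_1$ and the ratio of the between-arm to within-arm variance components.

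For item \eqref{item:width}: this is the most mechanical part. The half-widths are $\qa \hse_*$, and since $\hse_\lin \le \hse_*$ asymptotically for $* = \nm, \fisher$ (Corollary 3 of \cite{zdfrt}, quoted in Section \ref{sec:review}), $\cil$ is shortest among $\cin, \cif, \cil$. For the preliminary-test intervals, $\hsept = \phi \hse_\nm + (1-\phi)\hse_*$ is a convex combination of $\hse_\nm$ (or $\hse_\fisher$) and itself-in-the-other-branch, so $\hseptl = \phi\hse_\nm + (1-\phi)\hse_\lin \le \hse_\nm$ pointwise (since $\hse_\lin \le \hse_\nm$), giving $\ciptl$ no wider than $\cin$; and it is at least as wide as $\cil$ since on the $\phi=1$ branch it uses the larger $\hse_\nm$. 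For $\ciptf$: on the $\phi = 0$ branch it uses $\hse_\fisher$, which can exceed $\hse_\nm$ when $\htf$ is less efficient than $\htn$ (Freedman's phenomenon), so $\hseptf$ can exceed $\hse_\nm$ and $\ciptf$ can be wider than $\cin$. I would assemble the ordering $\cil \subseteq \{\cin, \cif, \ciptl\}$ and note $\ciptf$'s width is sandwiched between $\hse_\nm$-width and $\max(\hse_\nm, \hse_\fisher)$-width, completing the comparison. Throughout, the only real obstacle is item \eqref{item:invalid_F}'s anticonservative direction; items \eqref{item:overconservative_L} and \eqref{item:width} follow from variance-comparison bookkeeping on the two mixture branches.
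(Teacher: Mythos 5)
Your skeleton matches the paper's: the paper proves Theorem \ref{thm:coverage} by writing $\pr(\tau\in\cipts)$ as the $(\pi_a,1-\pi_a)$-weighted average of the conditional coverages given $\phim=1$ and $\phim=0$, and then reading those off from the conditional limits of the studentized statistics (Propositions \ref{prop:cond} and \ref{prop:htpt_studentized_cond}, i.e.\ Theorems \ref{thm:coverage_conditional} and \ref{thm:coverage_conditional_app}); the width claims come from the same branch-wise comparison of $\hsen,\hsef,\hsel$, exactly as in your item \eqref{item:width}. However, two of your steps would fail as written. First, on each branch you argue ``numerator variance $\le$ target, denominator limit $\ge$ target, hence conservative.'' Coverage is $\pr(|W|\le\qa)$ for a \emph{non-normal} limit $W$, and a variance inequality does not imply such a quantile/probability inequality. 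The paper's actual engine is peakedness: $\ml\succeq\epsilon\succeq\ml'$ (the first from \cite{LD2018}, the second proved via the Gaussian correlation inequality, Lemmas \ref{lem:gci}--\ref{lem:tdl}), together with the fact that adding an independent normal preserves peakedness (Lemma \ref{lem:peak_sum}); this is what turns the branch comparison $\vl^{1/2}\epsilon+(\vn-\vl)^{1/2}\ml\succeq\vn^{1/2}\epsilon$, and the conservative SE limit $\kappa_*\le 1$, into the coverage inequalities. Your variance bookkeeping must be upgraded to this ordering for both the overconservative and the ``$\le\pr(\kappa_*^{1/2}\epsilon\in[-\qa,\qa])$'' directions.

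Second, your mechanism for the constant-treatment-effect exception in item \eqref{item:invalid_F} is wrong. Constant effects force $\gamma_1=\gamma_0$, hence $c_\fisher=0_J$ and $\vf=\vl$ (so $\rho_\fisher=1$ and the $\ml'$ term vanishes), and simultaneously $S^2_{\tau,\fisher}=0$ (so $\kappa_\fisher=1$ and $\hsef$ is asymptotically \emph{exact}, not conservative). There is therefore no ``truncation inflation'' to absorb and no conservativeness margin to absorb it with: conditionally on $\phim=0$ the coverage of $\cif$ is exactly $1-\alpha$, and the overconservativeness of $\ciptf$ under constant effects comes entirely from the overcoverage of $\cin$ on the $\phim=1$ branch. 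Relatedly, ``$\vf=\vl$ precisely when $e_0=e_1=1/2$'' overstates the fact: $\vf=\vl$ iff $c_\fisher=S_x^2(e_1^{-1}-e_0^{-1})(\gamma_1-\gamma_0)=0_J$, for which equal group sizes is sufficient but not necessary. Finally, for the ``can be anticonservative'' claim, note the paper does not perform the analytic lower-bound construction you anticipate in the supplement; it establishes the conditional limit $\kappa_\fisher^{1/2}\{\rho_\fisher^{1/2}\epsilon+(1-\rho_\fisher)^{1/2}\ml'\}$ and exhibits undercoverage numerically (Table \ref{tb:cr}), so your plan to ``exhibit or cite'' a configuration is consistent with the paper, provided the two gaps above are repaired.
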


Theorem \ref{thm:coverage}\eqref{item:overconservative_L} justifies the Wald-type inference based on $\ciptl$ but also points out the issue of overconservativeness. 
Intuitively, it follows from the definition in \eqref{eq:htpts} that the coverage rate of $\ciptl$ is a weighted average of the coverage rate of $\cin$ when $M<a$ and the coverage rate of $\cil$ when $M\geq a$:
\begina
\pr(\tau \in \ciptl) \ = \   \pr(\tau \in \cin \mid M < a) \cdot \pr( M < a) + \pr(\tau \in\cil \mid  M \geq a)\cdot \pr( M \geq a).
\enda
Standard theory ensures that $\cil$ is asymptotically valid when $M\geq a$, whereas $\cin$ is asymptotically overconservative when $M<a$. This clarifies the source of overconservativeness in the coverage rate of $\ciptl$. We relegate the details to Theorem \ref{thm:coverage_conditional} in Section \ref{sec:cond} to avoid repetition.

Theorem \ref{thm:coverage}\eqref{item:invalid_F} points out the possible anticonservativeness of $\ciptf$ in the presence of heterogeneous treatment effects such that normal approximation alone is not sufficient for interval estimation based on $(\htptf, \hseptf)$. 
Intuitively, the same reasoning as above ensures
\begina
\pr(\tau \in \ciptf) \ = \  \pr(\tau \in \cin \mid M < a) \cdot \pr( M < a) + \pr(\tau \in\cif \mid  M \geq a)\cdot \pr( M \geq a),
\enda
with $\cin$ being asymptotically overconservative when $M <a$. Our theory in Theorem \ref{thm:coverage_conditional} later nevertheless suggests that 
 $\cif$ can be asymptotically anticonservative when $M\geq a$. This implies that the overall coverage rate of $\ciptf$ may be less than the nominal level asymptotically, especially when $a$ is small such that $\pr(M \geq a)$ dominates the weighted average.
The issue becomes more severe when viewed from a conditional perspective, which we discuss in detail in Section \ref{sec:cond}. 
Two exceptions are when $e_0 = e_1 =1/2$ with equal-sized treatment groups or when $\tau_i$'s are constant across all $i$.
They both ensure that $\cif$ is asymptotically valid when $M \geq a$ such that $\ciptf$ is overall overconservative as a result of the overcoverage of $\cin$ when $M<a$. 
These results complement \cite{permutt}, who studied $\ciptf$ under a joint Gaussian model of outcome and a single covariate and derived its overconservativeness under constant treatment effects. 
We illustrate the novel result on possible anticonservativeness by simulation in Section \ref{sec:simulation}.

Theorem \ref{thm:coverage}\eqref{item:width} suggests that $\ciptl$ is asymptotically narrower than $\cin$ despite the overconservativeness but is wider than $\cil$. 
This, together with the classical results from Section \ref{sec:review}, ensures that $\cil$ is the narrowest among $\{\cin, \cif,\cil,  \ciptl, \ciptf\}$,  ensuring asymptotic validity without overconservativeness. It is our recommendation for interval estimation under complete randomization. 

Corrections for the anti- and overconservativeness of $\cipts \ (\fl)$  can be made by using preliminary test-specific confidence intervals based on the asymptotic distributions in Proposition \ref{prop:htpt}.
The resulting intervals, while ensuring better calibration than $\cipts$, are computationally more complicated and asymptotically still wider than $\cil$. 
We thus do not recommend them and relegate the details to the {\sm}. 

\section{Fisher randomization test}\label{sec:frt}
The discussion so far concerns the point and interval estimation of the average treatment effect $\tau$. 
The Fisher randomization test ({\frt}), on the other hand, provides a convenient way to test 
not only the sharp null hypothesis of no treatment effect on any individual \citep{Fisher35, rubin80, CausalImbens} but also the weak null hypothesis of zero average treatment effect \citep{wd, zdfrt, colin}. 
We extend in this section the discussion to {\frt}s with test statistics from the preliminary-test procedure and evaluate their validity for testing the sharp and weak null hypotheses, respectively.

A test statistic $T = T(Z, Y, X)$ is a function of the treatment assignments, observed outcomes, and  covariate vectors. 
Let $\mathcal Z$ denote the set of all $N!$ permutations of the realized assignment vector $Z$. 
\citet{Fisher35} considered testing the sharp null hypothesis  
\begina
\HF: \tau_i = 0 \quad \text{for all} \ \ i = \ot{N}
\enda
and proposed  the {\frt} to compute the $p$-value as 
\begin{equation}\label{eq:pfrt}
p_\frt  
 = |\mathcal Z|^{-1}\sum_{\bm z \in \mathcal Z} \mi\left\{ \big|T(\bm z, Y, X )\big| \geq \big|T(Z,Y,X)\big|\right\} 
\end{equation}
for a two-sided test.  
Common choices of the test statistic include $\hts$ and $\hts/\hses$ for $\nfl$.

When applied to test statistics based on $(\htpts, \hsepts)$ from the preliminary-test procedure, this means we compute the value of $M$ and $(\hts, \hses) \ (\nfl)$ for each $\bm z \in \mathcal Z$, denoted by $M(\bm z)$ and $(\hts(\bm z), \hses(\bm z))$ respectively, and form $T(\bm z, Y, X)$ using 
\begina
\big( \htpts(\bm z), \hsepts(\bm z) \big)= \left\{
\begin{array}{cll}
\big(\htn(\bm z), \hsen(\bm z) \big) &\text{if $M (\bm z)< a$}\\
\big(\hts(\bm z), \hses(\bm z) \big) &\text{if $M(\bm z) \geq a$}
\end{array}
\right.
\enda
for $\fl$. 
The resulting inference is finite-sample exact for testing $\hf$ regardless of the choice of the test statistic $T$. 
Recall that the confidence interval constructed based on $(\htptf, \hseptf)$ and normal approximation may be asymptotically invalid for inferring $\tau$. The {\frt} for $\hf$ thus remains one way of drawing valid inference from $(\htptf, \hseptf)$  without additional modifications. \cite{edwards} also made this point. 

The nice properties of the {\frt} motivate endeavors to generalize its use to other types of null hypothesis. That is, for an arbitrary null hypothesis $H_0$, we can pretend that we are testing the sharp null hypothesis, and compute $p_\frt$ based on imputation under $\hf$.   We then reject $H_0$ if $p_\frt$ is less than the prespecified significance level \citep{wd}. 
Whereas the finite-sample exactness is in general lost for null hypotheses other than $\hf$, a careful choice of the test statistic can nevertheless deliver asymptotically valid inference based on $\pfrt$. 
Formally, for an arbitrary null hypothesis $H_0$, a test statistic $T$ is {\it asymptotically valid} for testing $H_0$ if under $H_0$,  
$$
\limN  \pr(  p_\frt \leq \alpha   ) \leq \alpha \qquad \text{for all }\alpha\in(0,1).
$$ 
%
%
%
Under this general definition of asymptotic validity, 
\cite{DD18}, \citet{wd}, and \citet{zdfrt} applied the {\frt} to testing the weak null hypothesis 
$$\hn: \tau = 0$$
and showed that the studentized $\hts/\hses \  (\nfl)$ are asymptotically valid, whereas the unstudentized $\hts \ (\nfl)$ are not. 
This illustrates the utility of studentization by robust standard errors in delivering asymptotically valid   {\frt}s for testing $\hn$. 
Theorem \ref{thm:frt} below extends the theory to test statistics based on the \phts procedure and gives a negative result on the validity of $\htpts$ and $\htpts/\hsepts \ (\fl)$ for testing $\hn$.

\begin{theorem}\label{thm:frt}
None of $\htpts$ and $\htpts/\hsepts \ (\fl)$ are asymptotically valid for testing $\hn$. 
\end{theorem}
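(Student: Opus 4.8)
The plan is to reduce the asymptotic validity of the Fisher randomization test to a comparison between two explicit limiting distributions, and then to construct finite populations for which that comparison goes the wrong way. Fix $T$ to be one of the four statistics $\htpts$, $\htpts/\hsepts$ with $\fl$. Under $\hn$, Proposition \ref{prop:htpt} together with the probability limits of $N\hsen^2$ and $N\hses^2$ gives $T \rs \mathcal{D}$, where $\mathcal{D}$ is a mixture, with weights $(\pia, 1-\pia)$, of (i) the conditional limit of $T$ given $\{M < a\}$, a convolution of a centered normal with the truncated normal $\mL$, and (ii) the conditional limit of $T$ given $\{M \geq a\}$, a convolution of a centered normal with $\ml'$ (collapsing to a pure normal when $* = \lin$). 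The randomization reference distribution of $T$ is, by construction, the sampling distribution of $T$ in the imputed population with $Y_i(1) = Y_i(0) = Y_i$, which has identically zero individual effects; applying Proposition \ref{prop:htpt} there, and using that under constant effects the additive and fully interacted adjustments are asymptotically equivalent and their Eicker--Huber--White standard errors are consistent rather than merely conservative, one gets that this reference law converges, conditionally on the data and in probability, to a mixture $\mathcal{D}^*$ of the same shape but with each variance replaced by its imputed-population counterpart.

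\textbf{From domination to validity.} A routine argument along the lines of \citet{wd} and \citet{zdfrt} then gives $\limN \pr(\pfrt \leq \alpha) = \pr\{ |\mathcal{D}| \geq q^*_\alpha \}$, where $q^*_\alpha$ is the $(1-\alpha)$ quantile of $|\mathcal{D}^*|$; hence $T$ is asymptotically valid for $\hn$ if and only if every central quantile of $\mathcal{D}$ is at most the corresponding quantile of $\mathcal{D}^*$, i.e.\ $|\mathcal{D}|$ is stochastically dominated by $|\mathcal{D}^*|$. Validity therefore fails as soon as one exhibits, for each $T$, a sequence of populations and a level $\alpha_0 \in (0,1)$ with $q_{\alpha_0} > q^*_{\alpha_0}$, where $q_{\alpha_0}$ is the $(1-\alpha_0)$ quantile of $|\mathcal{D}|$; continuity of the two limiting distribution functions (each is a convolution with a normal) then upgrades this to $\limN \pr(\pfrt \leq \alpha_0) > \alpha_0$.

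\textbf{Constructing the counterexamples.} Although $\mathcal{D}$ and $\mathcal{D}^*$ carry the same weights, their components differ: the imputed-population Neyman-type variances effectively interchange $e_0$ and $e_1$ relative to the true ones, and the probability limit of $N\hsef^2$ can lie strictly \emph{below} $v_\fisher$ under heterogeneous effects and unequal group sizes --- the mechanism behind Theorem \ref{thm:coverage}\eqref{item:invalid_F}. For the two $\fisher$-based statistics, take such a population: the $\{M \geq a\}$ component of $\mathcal{D}$ is then overdispersed relative to that of $\mathcal{D}^*$ (which is a pure normal), and a tail expansion as $\alpha_0 \downarrow 0$ --- where the single heaviest component of each mixture dominates --- yields $q_{\alpha_0} > q^*_{\alpha_0}$. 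For the $\lin$-based and unstudentized statistics the $\{M \geq a\}$ components of $\mathcal{D}$ are no wider than those of $\mathcal{D}^*$, so one instead uses the $e_0 \leftrightarrow e_1$ mismatch in the $\{M < a\}$ components: a population with mild effect heterogeneity in which the covariate is far more predictive of the potential outcomes in the smaller arm makes the conditional-on-balance component of $\mathcal{D}$ strictly more dispersed than that of $\mathcal{D}^*$ over a band of moderate quantiles, again giving $q_{\alpha_0} > q^*_{\alpha_0}$.

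\textbf{Main obstacle.} The crux is that $\mathcal{D}$ and $\mathcal{D}^*$ are genuinely non-normal, so the comparison cannot be settled by variances alone --- for $\htptl/\hseptl$ the failure occurs even when $\mathcal{D}$ has the \emph{smaller} variance. The technical work is thus (a) pinning down clean closed forms for all four component laws, supplied by Proposition \ref{prop:htpt} together with the imputed-population variance formulas, and (b) verifying, for the constructed population, a strict inequality between a specific quantile of $|\mathcal{D}|$ and the same quantile of $|\mathcal{D}^*|$; the tail-expansion route reduces (b) to a comparison of two scalar constants in the $\fisher$ cases, while in the $\lin$ cases one checks a single moderate quantile directly.
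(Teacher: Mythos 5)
Your overall architecture is the same as the paper's: obtain the unconditional sampling limit of the statistics from Proposition \ref{prop:htpt} (with studentized versions via the probability limits of the standard errors), obtain the randomization limit by applying the same results to the imputed population with $Y_i(1)=Y_i(0)=Y_i$ (this is exactly Proposition \ref{prop:htpt_studentized_randomization dist}, whose second mixture component is a pure normal because the imputed effects are constant), and then observe that asymptotic validity would force the absolute randomization limit to stochastically dominate the absolute sampling limit, which need not hold for all populations. The paper stops there, arguing (as in the proof of Theorem \ref{thm:frt_cond}) that the relevant scale parameters ($\rho_\nm$ versus $\trn$, and $\kappa_\fisher,\rho_\fisher$) can fall on either side, and exhibits violating populations only numerically ($\mathcal P_1$, $\mathcal P_2$ in Section \ref{sec:simulation}). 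You go further and try to make the violating populations explicit, which is where the problems are.

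Two of your mechanisms do not hold up. First, the claim that $\plim N\hsef^2$ can lie strictly below $\vf$ contradicts Lemma \ref{lem:cre}(ii) and Lemma \ref{lem:se}: $N\hses^2 = v_* + S^2_{\tau,*} + \op \geq v_* + \op$ for all $*$, so the robust standard error never deflates; the anticonservativeness in Theorem \ref{thm:coverage}\eqref{item:invalid_F} comes instead from the conditional law of $\sqrtn(\htf-\tau)$ given $M\geq a$ being wider than $\mn(0,\vf)$ (the $\ml'$ convolution), not from the standard error undershooting $\vf$. Second, the ``tail expansion as $\alpha_0\downarrow 0$'' for the studentized $\fisher$ statistic goes the wrong way precisely in the heterogeneous-effect populations you propose: there $\kappa_\fisher<1$ strictly, the $\{M\geq a\}$ component of the sampling limit is $\kappa_\fisher^{1/2}\{\rho_\fisher^{1/2}\ep+(1-\rho_\fisher)^{1/2}\ml'\}$, whose tail decays at the Gaussian rate with variance $\kappa_\fisher\{\rho_\fisher+(1-\rho_\fisher)\}=\kappa_\fisher<1$ (conditioning in $\ml'$ only inflates the unit-variance Gaussian tail by the constant $(1-\pia)^{-1}$), while the randomization limit contains an exact $\mn(0,1)$ component with weight $1-\pia$; hence for sufficiently small $\alpha_0$ the quantile inequality you need reverses, and any crossing must be located at moderate quantile levels and verified quantitatively --- exactly the step your outline defers (your $\lin$/unstudentized constructions via the $e_0\leftrightarrow e_1$ swap are plausible but likewise asserted rather than checked). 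So the reduction and the identification of the randomization limit match the paper, but the added counterexample analysis --- the part that would have made the argument more complete than the paper's admittedly informal one --- is where the proposal is incorrect or incomplete.
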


Theorem \ref{thm:frt} suggests that studentization by robust standard errors alone is no longer sufficient to ensure the asymptotic validity of test statistics from the preliminary-test procedure for testing $\hn$.  
This is a consequence of the asymptotic non-normality of $\htpts \ (\fl)$ and illustrates the complications incurred by the preliminary-test procedure on subsequent analysis.
We illustrate the possibly liberal type I error rates by simulation in Section \ref{sec:simulation}.

In cases where inference based on the hypothesis testing of $\hn$ is still desired for outputs from the preliminary-test procedure, we recommend using the prepivoting procedure proposed by \cite{colin} to restore the asymptotic validity of $\htpts$ and $\htpts/\hsepts \ (\fl)$.
The resulting {\frt}s simultaneously deliver finite-sample exact inference for $\hf$ and asymptotically valid inference for $\hn$.
We relegate the details to the {\sm}. 

The preliminary-test intervals $\ciptf$ and $\ciptl$ give another approach to testing $\hn$, recalling the duality between hypothesis testing and interval estimation. By Theorem \ref{thm:coverage},  hypothesis testing based on $\ciptl$ preserves the type I error rates asymptotically, whereas that based on $\ciptf$ does not. 
This differs from Theorem \ref{thm:frt}, illustrating the discrepancy between the two approaches in case of non-normal asymptotic distributions.
In addition, the prepivoted {\frt}s are finite-sample exact for the sharp null hypothesis whereas the confidence interval-inverted tests are not.

\section{A trio of covariate adjustment}\label{sec:trio}
Sections \ref{sec:pt} and \ref{sec:frt} clarify the sampling properties of the \phts procedure from the randomization-based perspective.
In particular, 
the confidence interval $\ciptf$ based on \cite{Fisher35}'s analysis of covariance and normal approximation may be anticonservative for interval estimation,
and studentization alone can no longer ensure the asymptotic validity of $\htpts / \hsepts \ (\fl)$ for testing the weak null hypothesis.
This illustrates the complications incurred by the preliminary-test procedure on subsequent inference. 
We now put these results in perspective, and unify the preliminary-test procedure with regression adjustment and rerandomization as two prevalent approaches to handling covariate imbalance in randomized trials.

\subsection{Regression adjustment for covariate imbalance}

Regression adjustment by \cite{Fisher35} and \cite{Lin13} can be viewed as adjusting for covariate imbalance in the analysis stage.
Specifically, let $\hgf$ be the coefficient vector of $x_i$ from $\lmt(Y_i \sim 1+Z_i +x_i)$  over $i = \ot{N}$, and  let $\hg_\lin =e_0\hglo+ e_1\hglz$, where $\hglzz$ denotes the coefficient vector of $x_i$ from $\lmt(Y_i \sim 1 + x_i)$ over $\{i: Z_i = z\}$. 
\citet[][Proposition 1]{zdfrt} showed that
\begina
\hts =  \htn - \hgs^\T  \htx \qquad (\fl).
\enda
The two covariate-adjusted estimators are thus effectively variants of $\hat{\tau}_\nm $ with corrections for the difference in covariate means.

\subsection{Rerandomization based on the Mahalanobis distance}
Rerandomization, on the other hand, enforces covariate balance in the design stage \citep{morgan2012rerandomization}, and accepts an allocation if and only if it satisfies some prespecified covariate balance criterion.
\cite{morgan2012rerandomization} and \cite{LD2018} studied a special type of rerandomization, known as ReM, that uses the Mahalanobis distance $M$ as the balance criterion and accepts a randomization if and only if $M < a$ for some prespecified threshold $a \geq 0$. 
It can be viewed as a 
more proactive alternative of the preliminary-test procedure, using results from the covariate balance test to guide not downstream analysis but upstream design, preventing covariate imbalance from the very beginning. 

\cite{LD20} established the duality between ReM and regression adjustment for improving efficiency under the treatment-control trial, and showed that ReM does not affect the asymptotic efficiency of the efficiently adjusted $\htl$. 
This justifies the large-sample Wald-type inference based on $\cil$ under ReM, identical to that under complete randomization.
\cite{zdfrt} further clarified the impact of ReM on the {\frt}s for both $\hf$ and $\hn$. In particular, they showed that $\htn/\hsen$ is no longer asymptotically valid for testing $\hn$ under ReM, whereas the covariate-adjusted $\hts/\hses \ (\fl)$ still are. This highlights the value of covariate adjustment for valid {\frt}s of $\hn$ under ReM, and provides the intuition behind the asymptotic invalidity of $\htpts/\hsepts \ (\fl)$   in Theorem \ref{thm:frt} as a result of the asymptotic invalidity of $\htn/\hsen$ when $M<a$.  See \cite{zach3} for a comparison of testing power under ReM and complete randomization.

\subsection{A unified look and final recommendation}
\begin{table}[t]\caption{\label{tb:summary}Impact of regression adjustment, the preliminary-test procedure, and ReM on subsequent inference. Assume $\phim = \mi(M <a)$ as the covariate balance criterion for both the preliminary-test procedure and ReM. This ensures $\htpts = \htn$ under ReM for $\fl$.}

\renewcommand{\arraystretch}{1.2}
\begin{center}
\begin{tabular}{c|c|cc }\hline
&& Asymptotic sampling & Validity of CI based \\
Procedure&Estimator& distribution of $\sqrtn (\htau - \tau)$ & on $\htau \pm \qa  \hse$ \\\hline
Unadjusted &$\htn$ & $\mn(0, \vn)$ &yes \\\hline
\multirow{2}{*}{Regression adjustment}&$\htf$ & $\mn(0, \vf)$ &yes \\
 &$\htl$ & $\mn(0, \vl)$ &yes \\\hline
\multirow{3}{*}{Preliminary test } & $\htptf$ & $\mix\beginp
\vl  ^{1/2} \epsilon +  (\vn - \vl)^{1/2}  \ml&: & \pi_a\\
\vl  ^{1/2} \epsilon +  (\vf - \vl)^{1/2}   \ml' &: & 1-\pi_a
\endp $ & no \\
 & $\htptl$ &$\mix\beginp
\vl  ^{1/2} \epsilon +  (\vn - \vl)^{1/2}  \ml&: & \pi_a\\
\vl  ^{1/2} \epsilon&: & 1-\pi_a
\endp $  & overconservative \\\hline
&$\htn $ & $ \vl  ^{1/2} \epsilon +  (\vn-\vl)^{1/2}  \ml$ & overconservative \\
ReM &$ \htf  $ &$  \vl  ^{1/2} \epsilon +  (\vf-\vl)^{1/2}  \ml$ &   overconservative \\
&$ \htl $ &  $\mn(0, \vl)$ & yes  \\\hline
\end{tabular}
\end{center}

\end{table}

Table \ref{tb:summary} summarizes the impact of regression adjustment, the preliminary-test procedure, and ReM on subsequent point and interval estimation. 
The results on regression adjustment and ReM are not new, whereas those on the preliminary-test procedure are a summary of our new results from the previous sections. 
For point estimation, $\hts \  (\nfl)$ and $\htpts \ (\fl)$ are all consistent under complete randomization, with $\htl$ being asymptotically the most efficient.
The preliminary-test procedure improves the efficiency over $\htn$ if and only if we use the fully interacted specification for regression adjustment when $M \geq a$. 
ReM retains the consistency of $\hts \ (\nfl)$ and improves the asymptotic efficiency of $\htn$ and $\htf$. They are nevertheless still asymptotically less efficient than $\htl$, whose asymptotic efficiency is unaffected by ReM.
This illustrates the superiority of $\htl$ for point estimation under both complete randomization and ReM. 

For interval estimation, the
$\cis$'s $(\nfl)$ based on $(\hts, \hses)$ and normal approximation are asymptotically valid under complete randomization, with $\cil$ having the smallest width.  
ReM has no effect on $\cil$ asymptotically yet renders $\cin$ and $\cif$ overconservative. 
The preliminary-test procedure, on the other hand, yields $\ciptl$ that is overconservative and $\ciptf$ that is possibly anticonservative. 
Among $\{\cin, \cif, \cil, \ciptl \}$ that ensure the nominal coverage rates asymptotically, 
the $\cil$ based on the fully interacted specification has the smallest asymptotic width without overconservativeness under both complete randomization and ReM.  
It is our recommendation for interval estimation under both complete randomization and ReM.

For hypothesis testing based on the {\frt}, all eight estimators are finite-sample exact for testing $\hf$ but not asymptotically valid for testing $\hn$. 
Table \ref{tb:summary_robust} further summarizes the utility of studentization for ensuring the asymptotic validity under $\hn$. 
Compared with their unstudentized counterparts, 
studentization restores the asymptotic validity of 
$\hts/\hses \ (\nfl)$ for testing $\hn$ under complete randomization and that of $\hts / \hses  \ (\fl)$ under ReM. 
The $\htn/\hsen$ under ReM and the $\htpts/\hsepts \ (\fl)$ under the preliminary-test procedure, on the other hand, remain invalid even after studentization. 
This illustrates the complications incurred by the preliminary-test procedure and ReM on the {\frt}. 

\begin{table}[t]\caption{\label{tb:summary_robust}Impact of studentization on the {\frt}. Assume $\phim = \mi(M <a)$ as the covariate balance criterion for both the preliminary-test procedure and ReM.  This ensures $\htpts = \htn$ under ReM for $\fl$.
}
\begin{center}\renewcommand{\arraystretch}{1.2}
\begin{tabular}{cccc}\hline
&&  Finite-sample  & Asymptotically  \\
 Procedure &Test statistic & exact  for $\hf$ &valid for $\hn$\\\hline
Unadjusted &$\htn/\hsen$ & yes &  yes \\\hline
\multirow{2}{*}{Regression adjustment} &$\htf/\hsef$ & yes &  yes \\
 &$\htl/\hsel$ & yes &  yes\\\hline
\multirow{2}{*}{Preliminary test} & $\htptf/\hseptf$ &  yes & no\\
 & $\htptl / \hseptl$ &yes & no\\\hline
 &$\htn/\hsen $ & yes &  no \\
ReM &$ \htf/\hsef $ & yes &  yes \\
&$\htl/\hsel  $ & yes &  yes \\\hline
\end{tabular}
\end{center}
\end{table}

Juxtapose Tables \ref{tb:summary} and \ref{tb:summary_robust}. Regression adjustment by the fully interacted specification dominates the preliminary-test procedure and ReM in terms of both asymptotic efficiency for estimation and asymptotic validity for testing $\hn$. 

Despite being dominated by the fully interacted adjustment in terms of asymptotic efficiency, however, ReM improves the actual balance in realized allocations, which is of great value in terms of (i) simplifying the interpretation of experimental results, (ii) closing the gap between the adjusted and unadjusted estimators \citep[Section S5.3]{zdca}, and (iii) reducing conditional bias \citep[Section 4.3]{vif}, to name just a few. We thus recommend the combination of ReM and fully interacted adjustment for both estimation and hypothesis testing based on the {\frt}.

The preliminary-test procedure, on the other hand, incurs substantial complications on the usual inferences based on normal approximation and studentization. 
In particular, the $\htptf$ based on the additive specification is asymptotically less efficient than $\htl$, and not valid for either constructing confidence interval based on normal approximation or testing $\hn$ by the {\frt} even after studentization. 
The $\htptl$ based on the fully interacted specification is asymptotically less efficient than $\htl$, overconservative for constructing confidence interval based on normal approximation, and invalid for testing $\hn$ by the {\frt} even after studentization. 
We thus do not recommend the preliminary-test procedure for either estimation or hypothesis testing based on the {\frt}.

\section{A conditional perspective}\label{sec:cond}
Recall that \cite{begg} and \cite{senn} argued critically against the logical foundation of preliminary tests.
Both authors, nevertheless, acknowledged the value of covariate balance information in yielding more informed inference {\it conditionally}.
Intuitively, the conditional perspective evaluates the sampling properties of the procedure of interest over repeated samples that are {\it similar} to the one that is actually realized, promising more relevant quantification of the uncertainty with regard to the data at hand.
We extend in this section their discussion to conditional inference based on the preliminary-test procedure under the randomization-based framework, 
and 
consider the conditional properties of the preliminary-test procedure based on $\phim$.
The resulting theory, as it turns out, provides important building blocks for understanding the results under unconditional inference in Sections \ref{sec:pt}--\ref{sec:trio}. 

\subsection{Similarity of allocations based on covariate balance test}
The discussion in Sections \ref{sec:pt}--\ref{sec:trio} takes an unconditional perspective, and evaluates the sampling properties of the preliminary-test procedure over all possible complete randomizations of treatment sizes $(N_0, N_1)$. 
The conditional perspective instead focuses only on the subset of randomizations that are {\it similar} to the one that is actually realized. 

Recall that $\mz$ denotes the set of all possible permutations of the assignment vector $Z$. 
Its nonrepetitive elements constitute all possible complete randomizations of treatment sizes $(N_0, N_1)$. 
Motivated by the way the preliminary-test procedure determines the inference specification by the value of $\phim$, we define $\phim(\bm z)$ as the value of $\phim$ if $Z= \bm z$ for each $\bm z \in \mz$, and view two allocations $\bm z$ and $\bm z'$ as similar if $\phim(\bm z) = \phim(\bm z')$.
This divides all possible complete randomizations into two types: those that are balanced with $\phim(\bm z) = 1$ and those that are unbalanced with $\phim(\bm z) = 0$. 

The conditional inference then evaluates the sampling properties of $\htpts$ and $\cipts$ over all allocations that are similar to the realized $Z$ in the sense of $\phim(\bm z) = \phim$. 
Intuitively, this means that we evaluate the distribution of $\htpts$ and the coverage rates of $\cipts$ over all balanced allocations with $\phim(\bm z) = 1$ if the realized allocation is balanced with $\phim = 1$, and over all unbalanced allocations with $\phim(\bm z) = 0$ if the realized allocation is unbalanced with $\phim = 0$.

\subsection{Sampling distributions  conditioning on $\phi_\textup{M}$}
Proposition \ref{prop:cond} below states the asymptotic sampling distributions conditioning on $\phim$. 

\begin{proposition}\label{prop:cond}
\begine[(i)]
\item Conditioning on $\{\phim = 1\}$,  we have $\htpts=\htn$  for $\fl$, with 
\begina
\rtn  ( \htpts - \tau)  \mid \left\{\phim = 1 \right\}&\rs&    \vl  ^{1/2} \epsilon +  (v_\nm -\vl)^{1/2} \ml \quad(\fl).
\enda
\item\label{item:htl_cond} Conditioning on $\{\phim = 0\}$,  we have $\htpts=\hts$  for $\fl$, with 
\begina
\rtn  ( \htpts - \tau)  \mid \left\{\phim = 0 \right\}&\rs&   \vl  ^{1/2} \epsilon +  (v_*-\vl)^{1/2} \ml' \quad(\fl). 
\enda
\ende
\end{proposition}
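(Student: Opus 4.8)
The plan is to reduce the claim to the joint asymptotics of a single pair and then push a conditioning argument through it. By construction $\htpts = \phim\cdot\htn + (1-\phim)\cdot\hts$, so on $\{\phim = 1\}$ we have $\htpts = \htn$ surely and on $\{\phim = 0\}$ we have $\htpts = \hts$ surely; it thus suffices to find the limiting law of $\rtn(\htn - \tau)$ conditional on $\{M < a\}$ and of $\rtn(\hts - \tau)$ conditional on $\{M \geq a\}$ for $\fl$. First I would record the master decompositions: using $\hts = \htn - \hgs^\T\htx$ from \citet[Proposition 1]{zdfrt}, the probability limits $\hgs \toinp \gamma_*$, the bound $\rtn\htx = \Op$, and the analogous $\htl = \htn - \hgl^\T\htx$,
\begin{align*}
\rtn(\htn - \tau) &= \rtn(\htl - \tau) + \gamma_\lin^\T\, \rtn\htx + \op, \\
\rtn(\hts - \tau) &= \rtn(\htl - \tau) + (\gamma_\lin - \gamma_*)^\T\, \rtn\htx + \op \qquad (\fl),
\end{align*}
so that everything runs through $(\rtn(\htl - \tau),\ \rtn\htx)$ and the quadratic form $M = \htx^\T\{\cov(\htx)\}^{-1}\htx$.

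The three ingredients I would invoke, all available from the rerandomization literature, are: (a) a joint central limit theorem giving $(\rtn(\htl - \tau),\ \rtn\htx) \rs (\vl^{1/2}\epsilon,\ \Sigma_x^{1/2} D)$ with $\epsilon \sim \mn(0,1)$ and $D \sim \mN(0_J, I_J)$ \emph{independent}, where $\Sigma_x := \limN N\cov(\htx)$ --- the asymptotic orthogonality of Lin's estimator to the covariate imbalance is the crux, see \citet{LD2018, LD20}; (b) consequently $M \rs D^\T D$ jointly with (a); and (c) the variance identities $\gamma_\lin^\T\Sigma_x\gamma_\lin = \vn - \vl$ and $(\gamma_\lin - \gamma_\fisher)^\T\Sigma_x(\gamma_\lin - \gamma_\fisher) = \vf - \vl$, obtained by expanding $\vn$ and $\vf$ through the decompositions above and using that the cross terms vanish by the independence in (a).

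Then I would pass to the conditional limits. For any fixed vector $b$, rotational invariance of the standard normal gives $(b^\T D,\ D^\T D) \overset{\mathrm{d}}{=} (\|b\| D_1,\ D^\T D)$, hence $b^\T D \mid \{D^\T D < a\} \overset{\mathrm{d}}{=} \|b\|\cdot\ml$ and $b^\T D \mid \{D^\T D \geq a\} \overset{\mathrm{d}}{=} \|b\|\cdot\ml'$. Taking $b = \Sigma_x^{1/2}\gamma_\lin$, so $\|b\|^2 = \vn - \vl$ by (c), and $b = \Sigma_x^{1/2}(\gamma_\lin - \gamma_*)$, so $\|b\|^2 = \vs - \vl$ by (c) (and $=0$ when $* = \lin$), while noting that in the joint limit $\epsilon$ is independent of $D$ --- hence of $M$ and of all linear functionals of $\rtn\htx$ --- the limit of $\rtn(\htl - \tau)$ is untouched by the conditioning and remains $\vl^{1/2}\epsilon$. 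Since $\pr(M < a) \to \pi_a \in (0,1)$ and $\{D^\T D = a\}$ is Lebesgue-null, so that $\{M<a\}$ and $\{M\geq a\}$ are continuity sets for the joint limit, a Portmanteau argument upgrades joint weak convergence to convergence of the conditional laws, and summing the two asymptotically independent pieces delivers
\begin{align*}
\rtn(\htpts - \tau) \mid \{\phim = 1\} &\rs \vl^{1/2}\epsilon + (\vn - \vl)^{1/2}\ml, \\
\rtn(\htpts - \tau) \mid \{\phim = 0\} &\rs \vl^{1/2}\epsilon + (\vs - \vl)^{1/2}\ml' \qquad (\fl),
\end{align*}
as claimed; averaging these with weights $(\pi_a, 1-\pi_a)$ reproduces Proposition \ref{prop:htpt}, a useful internal check.

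I expect the main obstacle to be making this conditioning step rigorous: one has to upgrade joint weak convergence of the triple $(\rtn(\htl - \tau),\ \gamma_\lin^\T\rtn\htx,\ M)$, and of its $\hts$ counterpart, to convergence of the laws conditioned on $\{M < a\}$ and $\{M \geq a\}$, which leans on $\pi_a \in (0,1)$ and on negligibility of the event boundary. The finite-population verification of the asymptotic independence in (a) and the variance bookkeeping in (c) are the other spots needing care, but both are standard in the design-based rerandomization literature cited above.
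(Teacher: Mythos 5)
Your proposal is correct and follows essentially the same route as the paper: the paper reduces the claim to the numerical equivalences $\htpts=\htn$ on $\{\phim=1\}$ and $\htpts=\hts$ on $\{\phim=0\}$ and then cites Lemma \ref{lem:cond}, which rests on the conditional weak-convergence Lemma \ref{lem:weak_convergence} and the joint CLT with $c_\lin = 0_J$ from \cite{zdfrt} and \cite{LD2018}. What you write out explicitly --- the decomposition through $(\htl,\htx)$, the identities $v_*-\vl=(\gamma_\lin-\gamma_*)^\T\Sigma_x(\gamma_\lin-\gamma_*)$, rotational invariance giving $\ml$ and $\ml'$, and the continuity-set upgrade to conditional laws --- is precisely the machinery those citations encapsulate, so there is no gap, only a more self-contained rendering of the same argument.
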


Proposition \ref{prop:cond} establishes the asymptotic sampling distribution of $\htpts$ as a convolution of independent normal and truncated normal random variables conditioning on $\phim$.
The forms of the asymptotic distributions in Proposition \ref{prop:cond}(i) follow from the numerical equivalence $\htpts=\htn$ given $\phim =1$ and the asymptotic conditional distribution of $\htn$ given $\phim =1$ established by \cite{LD2018}. Similarly, the forms of the asymptotic distributions in Proposition \ref{prop:cond}(ii) follow from the numerical equivalence $\htpts=\hts$ given $\phim =0$ and the asymptotic conditional distributions of $\hts \ (\fl)$ given $\phim =0$ as a novel result.
The truncation is intuitively the result of the restrictions imposed by $\phim = 1$ and $\phim = 0$, respectively. 
Juxtapose Proposition \ref{prop:cond} with Proposition \ref{prop:htpt}.
The asymptotic unconditional distribution of $\htpts$ in Proposition \ref{prop:htpt} is essentially a mixture of the asymptotic conditional distributions of $\htpts$ given $\phim = 1$ and $\phim = 0$, respectively, with $\pia$ giving the asymptotic probability of $\phim = 1$. 
This gives the intuition behind Proposition \ref{prop:htpt} from a conditional perspective.

Further observe that Proposition \ref{prop:cond}\eqref{item:htl_cond} ensures that $\sqrtn(\htl - \tau)$ is asymptotically $\mn(0, \vl)$ given  $\phim =0$, identical to its asymptotic distribution without conditioning. 
This, together with $\sqrtn (\htl - \tau) \mid \{\phim = 1\} \rs \mn(0, \vl)$ by \cite{LD20}, ensures that the covariate balance status has no effect on the efficiently adjusted $\htl$ asymptotically, echoing the duality between regression adjustment and ReM in improving efficiency \citep{LD20}.

Theorem \ref{thm:cond} below follows from Proposition \ref{prop:cond} and quantifies the conditional relative efficiency of $\htpts \ (\fl)$.

\begin{theorem}\label{thm:cond}
\begine[(i)]
\item Conditioning on $\{\phim = 1\}$,  $\htpts = \htn \ (\fl)$ are asymptotically less efficient than $\htl$. 
\item Conditioning on $\{\phim = 0\}$, $\htptl=\htl$ whereas $\htptf =\htf$ is asymptotically less efficient than $\htl$. 
\ende
\end{theorem}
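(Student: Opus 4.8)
The plan is to read the two conditional limiting distributions straight off Proposition \ref{prop:cond} and compare their \emph{central quantile ranges} with those of the $\mn(0,\vl)$ limit of $\htl$, using the elementary fact that convolving a centered Gaussian with an independent symmetric random variable can only widen the central quantile ranges.

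First I would record the inputs. By Proposition \ref{prop:cond}, conditioning on $\{\phim = 1\}$ we have $\htpts = \htn$ for $\fl$ with limit $\vl^{1/2}\epsilon + (\vn - \vl)^{1/2}\ml$, while conditioning on $\{\phim = 0\}$ we have $\htpts = \hts$ for $\fl$ with limit $\vl^{1/2}\epsilon + (v_* - \vl)^{1/2}\ml'$; in particular $\htptl = \htl$ given $\{\phim = 0\}$ with $\mn(0,\vl)$ limit since the second coefficient vanishes, and the discussion after Proposition \ref{prop:cond} also records that $\rtn(\htl - \tau)$ has the $\mn(0,\vl)$ limit conditional on $\{\phim = 1\}$, so $\htl$ is the within-event benchmark in both cases. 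From the classical ordering reviewed in Section \ref{sec:review}, $\vl \leq \vn$ and $\vl \leq \vf$, so $(\vn - \vl)^{1/2}$ and $(\vf - \vl)^{1/2}$ are real and nonnegative. Next I would note the symmetry: since $D \sim \mn(0_J, I_J)$ is spherically symmetric, both events $\{D^\T D < a\}$ and $\{D^\T D \geq a\}$ are invariant under $D_1 \mapsto -D_1$, so $\ml$ and $\ml'$ are symmetric about the origin, and both are independent of $\epsilon \sim \mn(0,1)$. Hence each conditional limit is a convolution of $\vl^{1/2}\epsilon$ with an independent, symmetric random variable.

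The key step is then a one-dimensional convolution inequality (a special case of Anderson's lemma): if $A \sim \mn(0,v)$ and $B$ is symmetric about $0$ and independent of $A$, then $\pr(|A + B| \leq t) \leq \pr(|A| \leq t)$ for every $t \geq 0$, because $b \mapsto \pr(A + b \in [-t,t])$ is the mass that $\mn(0,v)$ assigns to a fixed-length interval, hence is symmetric in $b$ and maximized at $b = 0$ by unimodality, and averaging over $B$ preserves the bound. Applying this with $A = \vl^{1/2}\epsilon$ and $B = (\vn - \vl)^{1/2}\ml$ gives part (i): conditional on $\{\phim = 1\}$, $\htpts = \htn$ has central quantile ranges no smaller than those of $\mn(0,\vl)$, the limit of $\htl$. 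Applying it with $B = (\vf - \vl)^{1/2}\ml'$ gives the $\htptf = \htf$ half of part (ii), while the $\htptl = \htl$ half of part (ii) is the degenerate case $B \equiv 0$. Strict inefficiency holds whenever the added dispersion term is nondegenerate, i.e. $\vn > \vl$ in part (i) and $\vf > \vl$ in part (ii), which is the generic situation under the regularity conditions; there the convolution inequality is strict on an open interval of $t$.

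The main obstacle is not a hard estimate but making the central-quantile-range comparison precise under the paper's informal efficiency definition: one must translate ``smaller central quantile ranges'' into $\pr(|A+B| \le t) \le \pr(|A| \le t)$ for all $t$, verify this is exactly what the Anderson-type inequality delivers for the Gaussian-plus-independent-symmetric structure, and dispose of the boundary cases $\vn = \vl$ and $\vf = \vl$ where the compared limits coincide. Everything else reduces to quoting Proposition \ref{prop:cond} and the variance ordering $\vl \le \min(\vn, \vf)$ from Section \ref{sec:review}.
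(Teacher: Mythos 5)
Your proposal is correct and follows essentially the same route as the paper: read the conditional limits off Proposition \ref{prop:cond}, note that the benchmark $\htl$ has the $\mn(0,\vl)$ limit on both events, and conclude by the fact that convolving a centered Gaussian with independent symmetric noise can only reduce peakedness (enlarge central quantile ranges). The only difference is cosmetic: the paper cites this fact as Lemma \ref{lem:peak_sum}(ii) in the peakedness framework of Definition \ref{def:eff2}, whereas you re-derive its one-dimensional Anderson-type special case directly; also, since the paper's efficiency ordering is the weak one, your worry about the boundary cases $\vn=\vl$ or $\vf=\vl$ is not needed.
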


Recall the inferiority of $\htpts \ (\fl)$ to $\htl$ for unconditional inference from Theorem \ref{thm:htpt}.
Theorem \ref{thm:cond} ensures that such inferiority extends to the conditional inference given $\phim$, corroborating the advantage of the always-adjust strategy over the preliminary-test procedure. 
Intuitively, the results follow from the numerical equivalence of $\htpts$ to $\htn$ and $\hts \ (\fl)$ given $\phim = 0$ and $\phim=1$, respectively, and the conditional inferiority of $\htn$ and $\htf$ to $\htl$ as established by \cite{LD20}.

We focus on the conditional inference given $\phim$ due to its connection with the preliminary-test procedure. 
Alternatively, we can also consider the conditional inference given $\htx$, which leaves $\htl$ the only consistent estimator among $\{\htn, \htf, \htl, \htptf, \htptl\}$. 
This extends the discussion of \cite{senn89}, who formalized the notion of {\it conditional size} as the probability of a type I error given a particular value of $\htx$ and gave its explicit form under a joint Gaussian model of outcome and a single covariate with known dispersion parameters. See also \cite{begg},  \cite{senn}, and \cite{vif}.

\subsection{Coverage rates conditioning on $\phi_{\textup{M}}$}
Recall that $\cipts \ (\fl)$ denote the $1-\alpha$ large-sample  confidence intervals based on $(\htpts, \hsepts)$  and normal approximation. 
Theorem \ref{thm:coverage} evaluates their asymptotic  coverage rates over all possible  complete randomizations of treatment group sizes $(N_0, N_1)$. 
Theorem \ref{thm:coverage_conditional} below instead clarifies their respective conditional coverage rates over all allocations with the same covariate balance status as the realized $Z$. The results, in addition to being informative in themselves, also provide the building blocks for understanding Theorem \ref{thm:coverage}.

\begin{theorem}\label{thm:coverage_conditional}
\begine[(i)] 
\item Conditioning on $\{\phim = 1\}$, we have $\cipts = \cin \ (\fl)$ are both asymptotically overconservative.

\item 
Conditioning on $\{\phim = 0\}$, we have $\cipts = \cis \ (\fl)$ with $\ciptl $ being asymptotically valid. The $\ciptf$ is asymptotically valid if either $e_0 = e_1 = 1/2$ or $\tau_i$'s are constant across $i = \ot{N}$, but can be asymptotically anticonservative if otherwise. 
\ende
\end{theorem}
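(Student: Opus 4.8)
The plan is to derive everything from Proposition~\ref{prop:cond} by a conditional Slutsky argument, using the elementary observation that conditioning on $\{\phim = c\}$ for $c\in\{0,1\}$ — an event whose probability has positive limit $\pi_a$ or $1-\pi_a$ — leaves unchanged the limit of any sequence that converges in probability to a constant; in particular $N\hsepts^2$ tends, conditionally on either value of $\phim$, to the same conservative limit $v_{*,\ehw}\ge v_*$ it has unconditionally. The two numerical identities in the statement, $\cipts = \cin$ given $\{\phim = 1\}$ (for both $\fl$, since then $\htpts = \htn$ regardless of the analyzer's adjustment specification) and $\cipts = \cis$ given $\{\phim = 0\}$, are immediate from~\eqref{eq:htpts} together with $\{\phim = 1\} = \{M < a\}$. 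It then remains, in each conditional regime, to compare the limiting law of $\sqrtn(\htpts - \tau)$ supplied by Proposition~\ref{prop:cond} with that limiting value of $N\hsepts^2$; note that the coefficients $(\vn-\vl)^{1/2}$ and $(\vf-\vl)^{1/2}$ there are real because $\vl = \min_\gamma v(\gamma)$ is the smallest asymptotic variance over affine adjustments $\htn - \gamma^\T\htx$, where $v(\gamma)=\var(Y_i(1)-\gamma^\T x_i)/e_1+\var(Y_i(0)-\gamma^\T x_i)/e_0-S^2_\tau$ is a positive-definite quadratic in $\gamma$.

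For part (i), Proposition~\ref{prop:cond}(i) gives $\sqrtn(\htpts - \tau)\mid\{\phim = 1\}\rs W_1 := \vl^{1/2}\epsilon + (\vn - \vl)^{1/2}\ml$, while $N\hsen^2$ tends conditionally to $v_{\nm,\ehw} = \vn + S^2_\tau \ge \vn$, with $S^2_\tau$ the finite-population variance of the individual effects. Since the ball-truncated normal satisfies $\ml\preci\epsilon'$ for an independent standard normal $\epsilon'$ (the concentration property of $\ml$ from \cite{LD2018}), convolution with $\vl^{1/2}\epsilon$ yields $W_1\preci\mn(0,\vn)\preci\mn(0,v_{\nm,\ehw})$, so the conditional coverage of $\cin$ tends to $\pr(|W_1|\le\qa v_{\nm,\ehw}^{1/2})\ge 1-\alpha$. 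Under constant $\tau_i$ one has $S^2_\tau = 0$, so the threshold is $\qa\vn^{1/2}$, and the comparison $\ml\preci\epsilon'$ is strict whenever the covariates are nondegenerate ($\vl<\vn$); hence the conditional coverage strictly exceeds $1-\alpha$ even then, which is overconservativeness.

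For part (ii), consider first $\ciptl = \cil$: Proposition~\ref{prop:cond}(ii) with $* = \lin$ reduces the limit to $\vl^{1/2}\epsilon\sim\mn(0,\vl)$ since $(\vl-\vl)^{1/2}=0$, so the conditional limit equals the unconditional one, and with $N\hsel^2\toinp v_{\lin,\ehw}\ge\vl$ (equality under constant effects) the conditional coverage tends to a value $\ge 1-\alpha$: valid. For $\ciptf = \cif$, Proposition~\ref{prop:cond}(ii) with $* = \fisher$ gives the limit $W_\fisher := \vl^{1/2}\epsilon + (\vf - \vl)^{1/2}\ml'$ with $\vf\ge\vl$. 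Writing $\gamma_z$ for the population within-arm least-squares slope of $Y_i(z)$ on $x_i$, the population pooled-OLS slope is $\gamma_\fisher = e_1\gamma_1 + e_0\gamma_0$ whereas the Lin slope (the minimizer of $v$) is $\gamma_\lin = e_0\gamma_1 + e_1\gamma_0$, so $\vf = v(\gamma_\fisher) = v(\gamma_\lin)=\vl$ exactly when $\gamma_\fisher = \gamma_\lin$, i.e.\ when $e_0 = e_1 = 1/2$ or $\gamma_1 = \gamma_0$, the latter being forced by constant effects. In those cases $W_\fisher = \mn(0,\vl)$ and the argument for $\ciptl$ applies, giving validity. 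Otherwise $\vf > \vl$, the complement-truncated normal $\ml'$ is strictly less peaked than a standard normal, so $W_\fisher$ is strictly less peaked than $\mn(0,\vf)$ and $\pr(|W_\fisher|\le\qa\vf^{1/2}) < 1-\alpha$; to conclude that $\cif$ can be anticonservative it then suffices to exhibit one sequence of finite populations with $e_0\ne e_1$ and $\gamma_1\ne\gamma_0$ for which this deficit survives the slack $v_{\fisher,\ehw}\ge\vf$, i.e.\ $\pr(|W_\fisher|\le\qa v_{\fisher,\ehw}^{1/2}) < 1-\alpha$. For that I would take the covariate strongly predictive of $Y_i(0)$ and the individual effects affine in the covariate, so $(\vf-\vl)/\vf$ stays bounded away from $0$ while $v_{\fisher,\ehw}/\vf$ stays close to $1$, and then tune the threshold $a$ to make the dispersion of $\ml'$ large enough.

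The routine ingredients are the two numerical identities, the conditional Slutsky steps, the stability of probability limits under conditioning on $\{\phim = c\}$, and the peakedness comparison $\ml\preci\epsilon'$. The main obstacle is the anticonservative construction for $\cif$: one must control three distinct finite-population functionals — $\vl$, $\vf$, and the Eicker--Huber--White limit $v_{\fisher,\ehw}$ — simultaneously with the dispersion of the complement-truncated normal $\ml'$, and there is genuine tension, since the effect heterogeneity that inflates $\vf$ above $\vl$ (the source of the anticonservatism) also tends to inflate $v_{\fisher,\ehw}$ above $\vf$ (which works against it), so the example has to be engineered so that the former dominates.
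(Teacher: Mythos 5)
Your argument is correct and is essentially the paper's own proof: you condition on $\phim$, combine the conditional limits of Proposition \ref{prop:cond} with the conditional consistency of $N\hse_*^2$ for $v_*+S^2_{\tau,*}$ (Lemma \ref{lem:se}) via a conditional Slutsky step, and then sign the coverage through the peakedness of the truncated normals, which is exactly the route taken through Proposition \ref{prop:htpt_studentized_cond} in Section \ref{sec:cr_app} (only note that your $\preci$ symbols point opposite to the paper's peakedness order $\succeq$, though your intended direction is clear from the conclusions you draw). The only ingredients you take for granted are that $\epsilon$ is more peaked than $\ml'$, which the paper establishes via the Gaussian correlation inequality in Lemma \ref{lem:tdl}, and an explicit population witnessing the anticonservative case, which the paper likewise does not construct analytically but supports through the exact limit formula of Theorem \ref{thm:coverage_conditional_app} and the simulations of Section \ref{sec:simulation}, so your sketched construction (unequal group sizes with effects affine in a prognostic covariate) is at the same level of rigor as the paper's treatment.
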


Theorem \ref{thm:coverage_conditional} follows from Proposition \ref{prop:cond} and clarifies the possible over- and anticonservativeness of $\cipts \ (\fl)$ conditioning on $\phim$. 
Recall that the unconditional coverage rate is essentially a weighted average of the conditional coverage rates given $\phim = 1$ and $\phim = 0$, respectively. 
These conditional over- and anticonservativeness thereby elucidate the source of the unconditional over- and anticonservativeness in Theorem \ref{thm:coverage}.

In particular, Proposition \ref{prop:cond} ensures that $\sqrtn (\htpts -\tau) \ (\fl)$ converge in distribution to $\vl  ^{1/2} \epsilon +  (\vn-\vl)^{1/2}  \ml$ given $\phim = 1$, which has narrower central quantile ranges than $\mn(0, \vn)$ unless $\vn = \vl$ \citep{LD2018, LD20}. 
This underlies the overconservativeness of $\cipts = \cin$ based on normal approximation given $\phim = 1$. 

Similarly, Proposition \ref{prop:cond} ensures that $\sqrtn (\htptf -\tau)$ converges in distribution to $\vl  ^{1/2} \epsilon +  (\vf-\vl)^{1/2}  \ml'$ given $\phim = 0$, which has wider central quantile ranges than $\mn(0, \vf)$ unless $\vf = \vl$. 
This underlies the possible anticonservativeness of $\ciptf = \cif$ based on normal approximation when $\phim = 0$. 
Two exceptions are when $e_0 = e_1 = 1/2$ or $\tau_i$'s are constant such that $\vf = \vl$. The resulting $\ciptf =\cif$ is asymptotically valid given $\phim = 0$. 

Intuitively, balanced allocations with $\phim(\bm z) = 1$ narrow the asymptotic distributions of $\htpts = \htn \ (\fl)$ such that the confidence intervals based on normal approximation overcover. 
In contrast, unbalanced allocations with $ \phim(\bm z) = 0$ widen the asymptotic distribution of $\htptf = \htf$ such that the confidence interval based on normal approximation undercovers.
This provides the intuition behind Theorems  \ref{thm:coverage} and \ref{thm:coverage_conditional}.

\subsection{Conditional FRT}
Recall the setting of the standard {\frt} in Section \ref{sec:frt}.
The original proposal by \cite{Fisher35} computes $\pfrt$ by \eqref{eq:pfrt} using all possible permutations of $Z$.
The conditional {\frt}, in contrast, computes the $p$-value using only allocations that are {\it similar} to $Z$ by some prespecified criterion \citep{jonathan, zach}. 
Under the definition of similarity by covariate balance test, this means 
that we permute $Z$ over all possible allocations $\bm z$ with $\phim(\bm z) = \phim$, denoted by 
\begina
\mz(\phim) = \{\bm z: \phim(\bm z) = \phim\},
\enda
and compute the resulting $p$-value as 
\begin{equation}\label{eq:p_cond}
p_{\frt}(\phim)
 = |\mathcal Z(\phim)|^{-1}\sum_{\bm z \in \mathcal Z(\phim)} \mi\left\{ \big|T(\bm z, Y, X )\big| \geq \big|T(Z,Y,X)\big|\right\}
\end{equation}
for a two-sided test given the observed value of $\phim$.
Theorem \ref{thm:frt_cond} below evaluates the validity of $\pfrt(\phim)$ for testing $\hf$ and $\hn$.

\begin{theorem}\label{thm:frt_cond}
\begine[(i)]
\item Conditioning on $\{\phim = 1\}$, we have $\htpts/\hsepts = \htn/\hsen \ (\fl)$ are both finite-sample exact for testing $\hf$, but neither are asymptotically valid for testing $\hn$.
\item Conditioning on $\{\phim = 0\}$, we have $\htpts/\hsepts = \hts /\hses \ (\fl)$ are both finite-sample exact for testing $\hf$, but only $\htptl /\hseptl$ is asymptotically valid for testing $\hn$.
\ende 
\end{theorem}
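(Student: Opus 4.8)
The plan is to treat the two conditioning events separately, and within each to split off the routine statement about $\hf$ from the asymptotic statement about $\hn$. The structural input used everywhere is the numerical identity built into \eqref{eq:htpts}: every $\bm z$ in the reference set $\mz(\phim)$ has $\phim(\bm z)=\phim$, so $\big(\htpts(\bm z),\hsepts(\bm z)\big)$ equals $\big(\htn(\bm z),\hsen(\bm z)\big)$ when $\phim=1$ and equals $\big(\hts(\bm z),\hses(\bm z)\big)$ for $(\fl)$ when $\phim=0$; hence the conditional {\frt} in \eqref{eq:p_cond} driven by $\htpts/\hsepts$ coincides sample by sample with the one driven by $\htn/\hsen$, resp.\ $\hts/\hses$. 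The $\hf$ parts of (i) and (ii) are then the usual permutation argument: under $\hf$ every potential outcome equals the observed outcome, so $T(\bm z,Y,X)$ is a fixed function of $\bm z$, and under complete randomization $Z$ is, conditionally on $\{\phim(Z)=c\}$ with $X$ held fixed, distributed as $\unif\big(\{\bm z:\phim(\bm z)=c\}\big)$; therefore $\pfrt(\phim)$ is stochastically no smaller than $\unif(0,1)$ given $\{\phim=c\}$ for every finite $N$, which is finite-sample exactness.

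For $\hn$ conditional on $\{\phim=1\}$: under complete randomization the law of the data given $\{\phim=1\}=\{M<a\}$ is identical to that under ReM with Mahalanobis threshold $a$, and the reference set $\{\bm z:\phim(\bm z)=1\}$ is exactly the ReM-admissible set, so the conditional {\frt} given $\{\phim=1\}$ \emph{is} the ReM {\frt}. \cite{zdfrt} showed that $\htn/\hsen$ is not asymptotically valid for $\hn$ under ReM; since $\htptf/\hseptf=\htptl/\hseptl=\htn/\hsen$ on $\{\phim=1\}$, neither is asymptotically valid, which gives (i).

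For $\hn$ conditional on $\{\phim=0\}$ — the new part — the first step is to identify the limiting reference distribution. Applying the conditional central limit theorem of Proposition \ref{prop:cond} to the finite population obtained by imputing all potential outcomes under $\hf$: that population has constant (indeed zero) individual effects, so its Fisher and Lin variances coincide, call the common value $v^\dagger$, and for $(\fl)$ the imputed $v_*^\dagger=v^\dagger$ as well; hence the truncated-normal term $\ml'$ in Proposition \ref{prop:cond}\eqref{item:htl_cond}, whose coefficient becomes $(v_*^\dagger-v^\dagger)^{1/2}=0$, drops out, and the permutation distribution of $\sqrtn\hts(\bm z)$ over $\{\bm z:\phim(\bm z)=0\}$ converges to $\mn(0,v^\dagger)$. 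A companion permutation law of large numbers, together with the classical consistency of both the additive- and the interacted-regression EHW standard errors under constant effects, gives $N\hses^2(\bm z)\to v^\dagger$, so the studentized reference distribution converges weakly to $\mn(0,1)$ and the {\frt} rejection region given $\{\phim=0\}$ is asymptotically $\{|\hts/\hses|>\qa\}$ for $(\fl)$ — precisely the region of the confidence-interval-inverted test of $\hn$ based on $\cipts$. The conclusions then follow from part (ii) of Theorem \ref{thm:coverage_conditional}, which gives $\cipts=\cis$ on $\{\phim=0\}$ with $\ciptl=\cil$ asymptotically valid and $\ciptf=\cif$ possibly anticonservative (there the true conditional limit $\vl^{1/2}\epsilon+(\vf-\vl)^{1/2}\ml'$ from Proposition \ref{prop:cond}\eqref{item:htl_cond} is genuinely non-normal with central quantile ranges exceeding those of $\mn(0,\vf)$ when the effects are heterogeneous and $e_0\neq e_1$, while studentization only rescales). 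Hence $\htptl/\hseptl=\htl/\hsel$ is asymptotically valid and $\htptf/\hseptf=\htf/\hsef$ can overshoot its nominal level, giving (ii).

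The main obstacle is the control of the conditional reference distribution in the $\{\phim=0\}$ case: one needs a conditional central limit theorem and a concentration bound for $\hts(\bm z)$ and $\hses^2(\bm z)$ holding uniformly over the reference set $\{\bm z:\phim(\bm z)=0\}$ under the $\hf$-imputed outcomes, together with the observation that conditioning on $\{M(\bm z)\geq a\}$ leaves the probability limits of the within-group sample moments entering $\hses^2$ unchanged (it only keeps $\sqrtn$ times the covariate mean difference bounded). Once the reference law is shown to converge weakly to $\mn(0,1)$ with limit continuous at $\qa$, reducing the {\frt} to the normal-quantile test, and hence to Theorem \ref{thm:coverage_conditional}, is routine, and the explicit finite population witnessing the anticonservativeness of $\htf/\hsef$ is the one already used there.
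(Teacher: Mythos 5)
Your proposal is correct, and its core coincides with the paper's argument. For part (ii) you do exactly what the paper does: you compute the conditional randomization (reference) limit by applying the conditional CLT to the $\hf$-imputed pseudo-population, observe that constant imputed effects force the Fisher and Lin pseudo-variances to coincide so the $\ml'$ term vanishes and the studentized reference law is standard normal given $\phim=0$ (this is Proposition \ref{prop:htpt_rand_cond}, proved in the paper by transferring the sampling-distribution results of Proposition \ref{prop:cond} and Lemma \ref{lem:se} to the pseudo-population via \cite{zdfrt}), and then compare against the conditional sampling limit, which via Theorem \ref{thm:coverage_conditional}(ii) / Proposition \ref{prop:htpt_studentized_cond} gives validity for $\htptl/\hseptl$ (since $|\epsilon|$ dominates $|\kappa_\lin^{1/2}\epsilon|$) and possible invalidity for $\htptf/\hseptf$; the "main obstacle" you flag (a conditional CLT and variance concentration over the reference set) is precisely what the paper settles by the pseudo-population transfer argument, so nothing is missing beyond what the paper itself treats as citable. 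The only genuine difference is part (i): you reduce the conditional {\frt} given $\phim=1$ to the {\frt} under ReM and invoke \cite{zdfrt}'s invalidity of $\htn/\hsen$, whereas the paper re-derives this by juxtaposing the conditional sampling limit $\kappa_\nm^{1/2}\{\rho_\nm^{1/2}\epsilon+(1-\rho_\nm)^{1/2}\ml\}$ with the conditional randomization limit $\trn^{1/2}\epsilon+(1-\trn)^{1/2}\ml$ and noting that $\rho_\nm$ and $\trn$ are not ordered across population sequences, so stochastic dominance can fail. Your reduction is valid (the conditional law of $Z$ given $\phim=1$ is exactly ReM and the reference set is the ReM-admissible set, so the two tests and their error rates coincide), and it is in fact the intuition the paper states in Section 5; it buys brevity at the cost of outsourcing the counterexample to the cited work, while the paper's explicit comparison makes the mechanism ($\rho_\nm$ versus $\trn$) visible and self-contained. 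You also spell out the finite-sample exactness for $\hf$ under the conditional reference set, which the paper leaves implicit.
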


Juxtapose Theorem \ref{thm:frt_cond} with Theorem \ref{thm:frt}. 
The conditional {\frt} restores the asymptotic validity of $\htptl / \hseptl$ for testing $\hn$ when $\phim = 0$. 
This is intuitively the consequence of $(\htptl, \hseptl) = (\htl, \hsel)$ and the asymptotic normality of $\htl$ given $\phim = 0$. The $\htptf/ \hseptf$  nevertheless remains invalid for $\hn$ in both cases even asymptotically.
This is intuitively due to the asymptotic non-normality of $\htn$ and $\htf$ given $\phim = 1$ and $\phim = 0$, respectively. We relegate the details to the {\sm}. 

This concludes our discussion on the conditional inference based on the preliminary-test procedure. The conditional perspective restores the asymptotic validity of $\ciptl$ for interval estimation and $\htptl / \hseptl$ for testing $\hn$ given $\phim = 0$, yet leaves all other problems with the preliminary-test procedure unsolved.

\section{Simulation}\label{sec:simulation}

We now illustrate the finite-sample properties of the \phts procedure by simulation.
The results are coherent with the asymptotic theory in Theorems \ref{thm:htpt}--\ref{thm:frt}, featuring the overall efficiency of $\htl$ over $\htpts \ (\fl)$, the possible anticonservativeness of $\ciptf$, and the invalidity of $\htpts$ and $\htpts/\hsepts \ (\fl)$ for testing $\hn$ by the {\frt}. 

\subsection{Efficiency of point estimation}

Consider a finite population of $N=500$ units, $i = \ot{N}$, subject to a treatment-control trial of treatment sizes $(N_0,N_1) = (400, 100)$. 
For each $i$, we draw a $J=5$ dimensional covariate vector $x_i =  (x_{i1}, \dots, x_{i5})^\T$ with $x_{ij}$ as independent Uniform$(-1,1)$, and generate the potential outcomes as  
$Y_i(0) \sim \mathcal{N}( - \sum_{j=1}^5 x^3_{ij} , 0.1^2)$ and $Y_i(1) \sim \mathcal{N}( \sum_{j=1}^5 x^3_{ij}  , 0.4^2)$. 

Fixing $\{Y_i(0), Y_i(1), x_i\}_{i=1}^N$ in the simulation, we draw a random permutation of $N_1$ 1's and $N_0$ 0's  to obtain the allocation under complete randomization, and compute the point estimators under the unadjusted, always-adjust, and preliminary-test strategies, respectively.
%
%
%
Figure \ref{fig:2} shows the distributions of $\hts  -\tau  \ (\ms)$ and $\htpts  -\tau  \ (\fl)$  over $10000$ independent allocations.
For comparison, we also include the distributions of $\hts - \tau \ (\ms)$ under ReM, which are summarized over the subsets of allocations that satisfy the balance criterion $M < a$.
We vary the threshold $a$ for balancedness from the 20\% percentile of the $\chi_5^2$ distribution, which considers approximately $20\%$ of the allocations as balanced, to  the 80\% percentile of the $\chi_5^2$ distribution, which considers approximately $80\%$ of the allocations as balanced. 
The message is coherent across the two thresholds: neither $\htptf$ nor $\htptl$ is as efficient as $\htl$, whereas $\htptl$ improves efficiency over $\htn$, both in line with Theorem \ref{thm:htpt}. 
ReM, on the other hand, has no effect on $\htl$ but improves the efficiency of $\htn$ and $\htf$. 
Compare subplots (a) and (b) under different values of the threshold $a$. 
Increasing the threshold decreases the efficiency of $\htptl$ and $\hts \ (\nf)$ under ReM when everything else stays the same. 

\begin{figure}

\centering 

 \begin{subfigure}{0.8\textwidth}
\includegraphics[width =\textwidth]{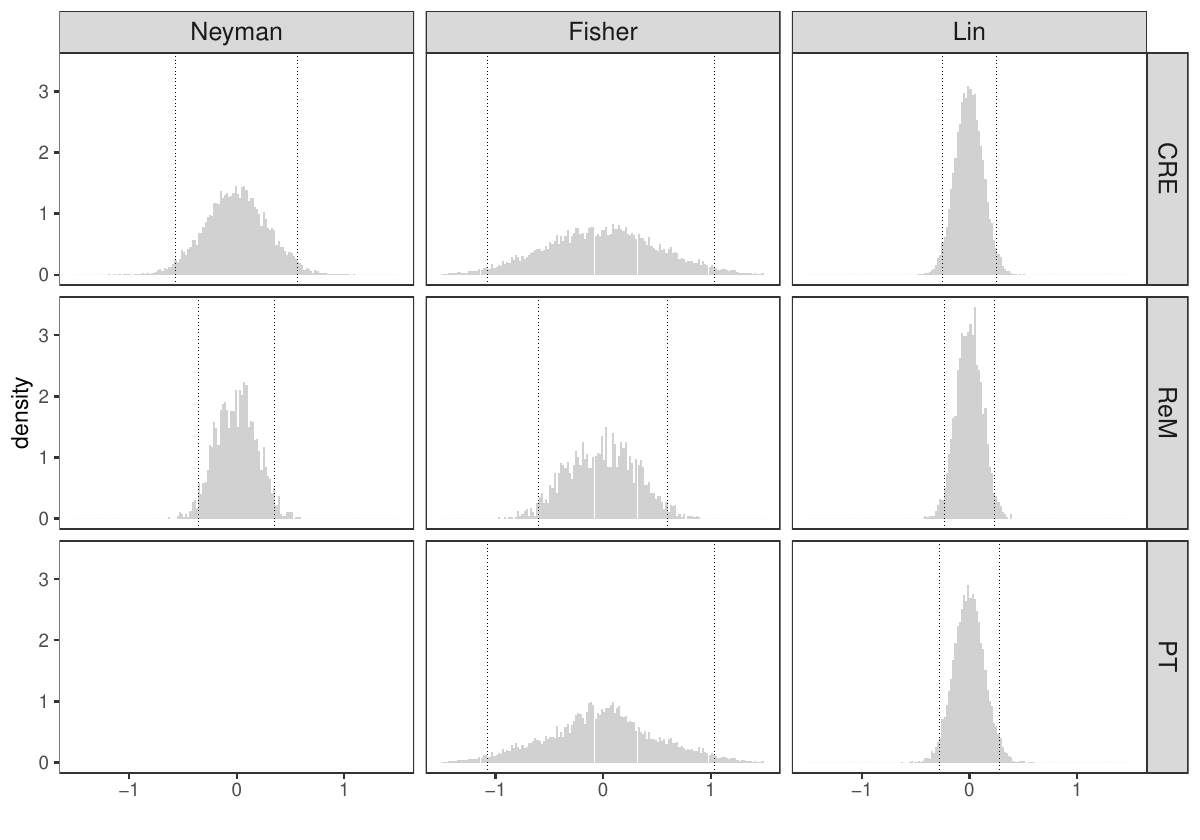}
\caption{$a$ equals the  20\% percentile of the $\chi^2_5$ distribution.}
\label{fig:a}
\end{subfigure}

\bigskip
\bigskip

\begin{subfigure}{0.8\textwidth}
\includegraphics[width =\textwidth]{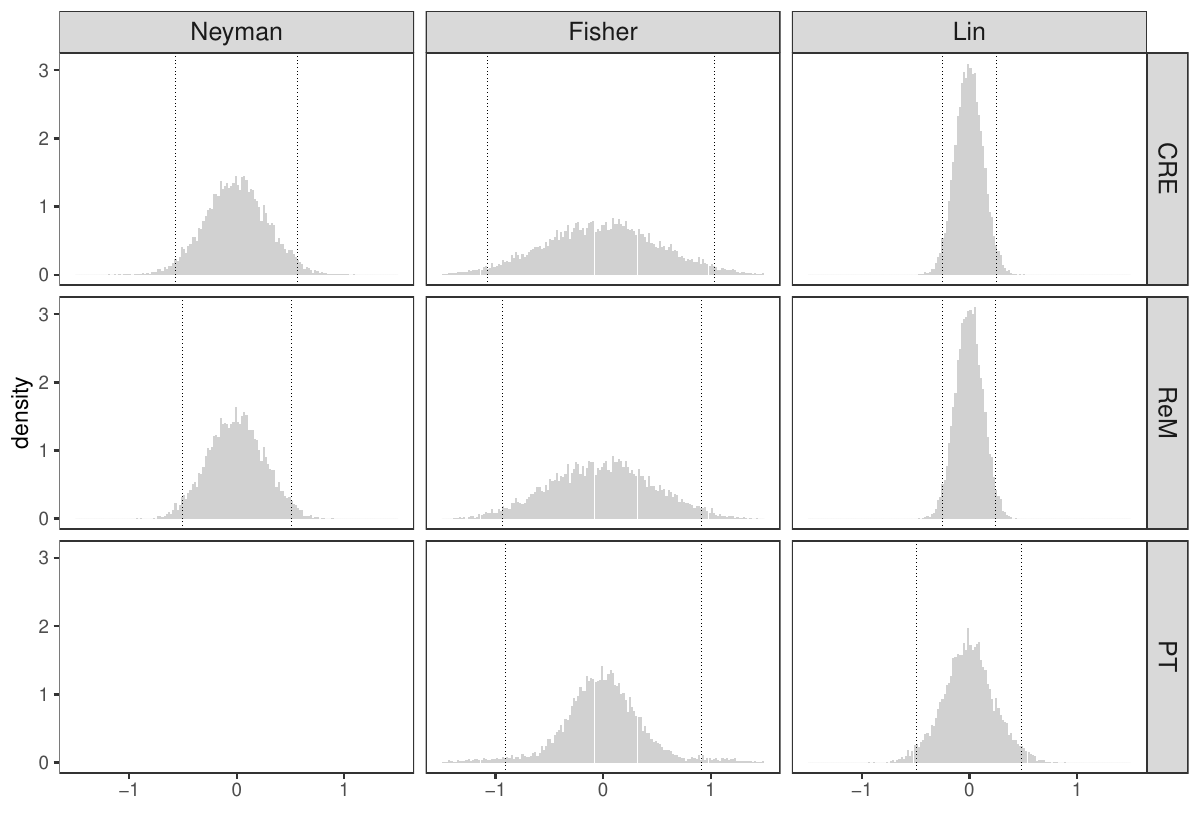}
\caption{$a$ equals the   80\% percentile of the $\chi^2_5$ distribution.}
\label{fig:b}
\end{subfigure}

\bigskip

\caption{\label{fig:2}Distributions of $\hts  - \tau \ (\ms)$ and $ \htpts - \tau \ (\fl)$  over $10000$ independent complete randomizations. 
The vertical lines correspond to the $0.025$ and $0.975$ empirical quantiles, respectively.}
\end{figure}

\subsection{Coverage rates of the confidence intervals}

We illustrate in this subsection the possible anticonservativeness of $\ciptf$. 
Consider a finite population of $N=2000$ units, $i = \ot{N}$, subject to a treatment-control trial of treatment sizes $(N_0, N_1) = (1900, 100)$. 
We generate the covariates $x_i$'s as independent $\mn(0,1)$, and the potential outcomes as 
$
Y_i(0) = -2.5 x_i + \ep_i$ and $Y_i(1) = x_i + \ep_i$,
where $\ep_2, \ldots, \ep_N$ are independent $\mn(0,\sigma_\ep^2)$ and $\ep_1$ is chosen to ensure $\sumi \ep_i x_i = 0$.  Trials and errors suggest that such ground truths render the resulting $\cif$ more likely to undercover when $\phim =0$, which in turn increases the possibility of $\ciptf$ being anticonservative based on the comments after Theorem \ref{thm:coverage}. We vary $\sigma_\ep$ in the simulation to illustrate that the anticonservativeness can occur over a range of finite populations. 

\renewcommand{\arraystretch}{1.2}
\begin{table}[t]\caption{\label{tb:cr}Coverage rates of $\cin$, $\cif$, and $\ciptf$ over $10000$ independent complete randomizations}
\centering
(a) Anticonservative $\ciptf$'s. 

\smallskip

\begin{tabular}{cc|ccc|cc|cc}
  \hline
  && \multicolumn{3}{c|}{Unconditional} & \multicolumn{2}{c|}{$\phim = 1$} & \multicolumn{2}{c}{$\phim = 0$} \\
  $\pia$ & $\sigma_\ep$ & $ \cin$ & $ \cif$& $\ciptf$ & $ \cin =\ciptf$ & $\cif$ & $\cin$ & $\cif = \ciptf$  \\
  \hline
0.75 & 1.5 & 96.8 & 95.4 & 94.0 & 98.0 & 99.8 & 93.3 & 81.9  \\ 
  0.75 & 2.0 & 96.3 & 95.5& 93.7 & 97.2 & 99.6 & 93.7 & 83.1  \\ 
  0.75 & 2.5 & 96.1 & 95.5& 93.7 & 96.8 & 99.1 & 94.0 & 84.6  \\ \hline
  0.80 & 1.5 & 96.8 & 95.4 & 93.9 & 97.8 & 99.7 & 92.7 & 78.0  \\ 
  0.80 & 2.0 & 96.3 & 95.5 & 93.6 & 97.1 & 99.4 & 93.3 & 79.8  \\ 
  0.80 & 2.5 & 96.1 & 95.5& 93.8 & 96.6 & 98.7 & 93.8 & 82.3  \\ \hline
  0.85 & 1.5 & 96.8 & 95.4 & 93.9 & 97.7 & 99.5 & 91.7 & 72.1  \\ 
  0.85 & 2.0 & 96.3 & 95.5 & 93.7  & 97.0 & 99.0 & 92.5 & 75.5\\ 
  0.85 & 2.5 & 96.1 & 95.5& 93.9 & 96.6 & 98.4 & 93.2 & 78.8  \\ 
   \hline
\end{tabular}

\bigskip

(b) $\ciptf$'s that ensure the nominal coverage rate. 

\smallskip

\begin{tabular}{cc|ccc|cc|cc}
  \hline
  && \multicolumn{3}{c|}{Unconditional} & \multicolumn{2}{c|}{$\phim = 1$} & \multicolumn{2}{c}{$\phim = 0$} \\
  $\pia$ & $\sigma_\ep$ & $ \cin$ & $ \cif$& $\ciptf$ & $ \cin =\ciptf$ & $\cif$ & $\cin$ & $\cif = \ciptf$  \\
  \hline
0.05 & 0.5 & 98.5 & 95.2 & 95.2 & 100.0 & 100.0 & 98.5 & 94.9  \\ 
0.05 & 1.0 & 97.6 & 95.4& 95.3 & 99.6 & 100.0 & 97.5 & 95.1  \\ \hline
  0.50 & 0.5 & 98.5 & 95.2 & 95.2  & 100.0 & 100.0 & 97.0 & 90.3 \\ 
  0.50 & 1.0 & 97.6 & 95.4 & 95.0 & 99.2 & 100.0 & 95.9 & 90.7  \\ \hline
  0.95 & 0.5 & 98.5 & 95.2 & 95.6 & 99.6 & 99.2 & 79.1 & 24.8  \\ 
  0.95 & 1.0 & 97.6 & 95.4  & 95.0 & 98.4 & 98.8 & 82.8 & 35.9 \\ 
   \hline
\end{tabular}
\end{table}

Fixing $\{Y_i(0), Y_i(1), x_i\}_{i=1}^N$ in the simulation, we draw a random permutation of $N_{1}$ 1's and $N_{0}$ 0's to obtain the observed outcomes and construct the confidence intervals based on normal approximation.

Table \ref{tb:cr} shows the coverage rates of $\cin$, $\cif$, and $\ciptf$ at confidence level $1-\alpha = 0.95$ over $10000$ independent allocations at different combinations of $(a, \sigma_\ep)$.
In particular, Table \ref{tb:cr}(a) summarizes the cases where $\ciptf$ is anticonservative over a specific range of $a$.
The undercoverage by $\ciptf$ is obvious both unconditionally and conditionally given $\phim = 0$.
Table \ref{tb:cr}(b) shows other cases where $\ciptf$ ensures the nominal coverage rate.
Despite the substantial conditional undercoverage of $\ciptf = \cif$ given $\phim = 0$ at $\pia = 0.95$ in the last two rows, the overall unconditional coverage rate is still above $95\%$ as a result of the overcoverage of $\ciptf = \cin$ given $\phim=1$ and the small weight attached to the case of $\phim=0$ in the mixture.

Importantly, the anticonservative or almost exact unconditional coverage rates  of $\ciptf$ in Table \ref{tb:cr} are due to the way we generate the finite population. Overcoverage by $\ciptf$ is observed for the majority of other finite populations we investigated, and suggests that the anticonservativeness of $\ciptf$ should in general be minor in practice even when the individual treatment effects are heterogeneous under treatment groups of unequal sizes. 
This is no surprise but the result of the two other counteracting forces in play.
In particular, when $\phim = 0$, we construct $\ciptf = \cif$ using $\hsef$, which is asymptotically conservative for the true sampling variance.
This, together with the overconservativeness of $\cin$ when $\phim = 1$, constitutes two sources of conservativeness that can largely offset the undercoverage due to the limit of $\sqrtn (\htf-\tau) \mid \{\phim = 0\}$ having wider central quantile ranges than $\mn(0,\vf)$,  rendering the actual coverage rate of $\ciptf$ usually at least $1-\alpha$.

\subsection{Validity for testing the weak null hypothesis}
We illustrate in this subsection the invalidity of $\htpts$ and $\htpts/\hsepts \ (\fl)$ for testing $\hn$. 
Consider a finite population of $N=100$ units, $i = \ot{N}$, subject to a treatment-control trial of treatment sizes $(N_0, N_1) = (90, 10)$.
We generate the following two sets of finite population to illustrate the invalidity of the unstudentized $\htpts \ (\fl)$ and the studentized $\htpts/\hsepts \ (\fl)$, respectively:
\begine[$\mathcal{P}_1$:]
\item the covariates $x_i$'s are independent ${\rm Uniform}(-1,1)$, and the potential outcomes are $Y_i(1) \sim \mathcal{N}( x_i^3 , 1)$ and $Y_i(0) \sim \mathcal{N}( -x_i^3 , 0.5^2)$;
\item the covariates  $x_i$'s are independent $\mn(0,1)$, and the potential outcomes are $Y_i(1) = \ep_i$ and $Y_i(0) = x_i + \ep_i$, where $\ep_2, \ldots, \ep_N$ are independent $\mn(0,1)$ and $\ep_1$ is chosen to ensure $\sumi \ep_i x_i = 0$.
\ende
We center $Y_i(0)$'s and $Y_i(1)$'s respectively to ensure $\tau= 0$, such that $\hn$ holds. 

Fixing $\{Y_i(0), Y_i(1), x_i\}_{i=1}^N$ in the simulation, we draw a random permutation of $N_{1}$ 1's and $N_{0}$ 0's to obtain the observed outcomes and conduct the {\frt}.
%
%
We repeat the procedure $1000$ times, with the $\pfrt$ approximated by $500$ independent permutations of the assignment vector in each replication. 
Figure \ref{fig::typeoneerror} shows the distributions of $\pfrt$  under $\HN $ for $\htpts$ and $\htpts/\hsepts \ (\fl)$ under $\mathcal{P}_1$ and $\mathcal{P}_2$, respectively.  
Assume significance level $\alpha = 0.05$. The unstudentized $\htpts \ (\fl)$ fail to preserve the correct type I error rates under  $\mathcal{P}_1$.
The studentized $\htpts/\hsepts \ (\fl)$  fail to preserve the correct type I error rates under  $\mathcal{P}_2$.

\begin{figure}[t!]

\centering

 \begin{subfigure}[t]{0.46\textwidth}
 \centering 
\includegraphics[scale=.8]{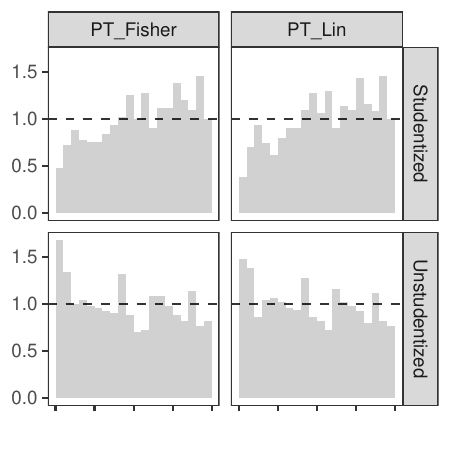}
\caption{Under $\mathcal{P}_1$, the two unstudentized test statistics fail to preserve the correct type I error rates at $\alpha = 0.05$.}
\label{fig:a}
\end{subfigure}
~
\begin{subfigure}[t]{0.46\textwidth}
\centering
\includegraphics[scale=.8]{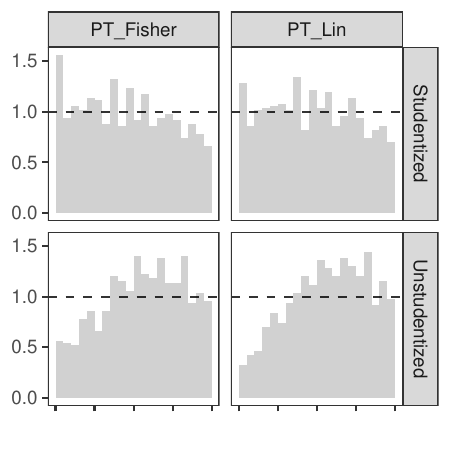}
\caption{Under  $\mathcal{P}_2$, the two studentized test statistics fail to preserve the correct type I error rates at $\alpha = 0.05$.}
\label{fig:b}
\end{subfigure}

\bigskip

\caption{\label{fig::typeoneerror}Empirical histograms of $p_{\frt}$ under $\hn$ with 20 bins in $(0,1)$. }
\end{figure}

\section{Discussion}\label{sec:discussion}
We studied the role of covariate balance in randomized trials and evaluated the properties of the preliminary-test procedure from the randomization-based perspective.
The results discourage the use of preliminary tests for covariate adjustment and corroborate the advantages of regression adjustment by \cite{Lin13}'s specification. 
We list below some possible extensions for future research. 

Importantly, our recommendation of the fully interacted adjustment by \cite{Lin13} is based on large sample considerations. 
In finite samples, the inclusion of the interactions almost doubles the model complexity compared with the additive adjustment, subjecting subsequent inference to  large finite-sample variability. 
We refer interested readers to \cite{zdca} for a systematic discussion.

\subsection{Covariate balance criteria based on statistical tests}
We focused on the Mahalanobis distance-based criterion for measuring covariate balance in defining the preliminary-test procedure without variable selection. 
We conjecture that the results extend to balance criteria based on significance tests \citep{cochran, permutt, hansen2008, zdrep} with minimal modifications. 
That is, we define the balance criterion based on results from one or more significance tests, and adjust for all covariates by $\htf$ or $\htl$ if the realized allocation fails the criterion.

\subsection{Variable selection by preliminary tests based on the outcome model}\label{sec:vs_outcome}
The discussion in Section \ref{sec:vs} focuses on variable selection based on preliminary tests of covariate balance. 
It is a special case of the broader class of statistical procedures that use a test of significance to guide the choice of specification for subsequent analysis \citep{bancroft}. 
An alternative, arguably more popular, strategy is to select the most prognostic covariates based on the outcome model. 
\cite{posthoc} and \cite{permutt} conducted simulation studies to compare these outcome model-based selection criteria with those based on covariate balance, and  concluded that the outcome model-based criteria often lead to superior performances than their covariate balance-based counterparts. 
\cite{raab}, on the other hand, suggested that the outcome model-based criteria may lead to overestimation of the treatment effect. 
\cite{hanzhong} studied  regression adjustment by lasso in high-dimensional settings, which is essentially a variable selection procedure based on the outcome model. 
We leave the theoretical comparison of such approaches with their covariate balance-based counterparts to future research.

\subsection{Covariate adjustment in nonlinear regression models}\label{sec:ext_nonlinear}
We focused on covariate adjustment by linear regression. 
Covariate adjustment in nonlinear regression models can be very different \citep{robinson}, where even omission of a balanced prognostic covariate can lead to bias \citep{christensen, chastang, altman3, raab94}.
\cite{beach} examined covariate selection with preliminary test in proportional hazards regression model using a simulation study, and concluded that adjustment can often decrease the precision under conditions typical in clinical trials. 
\cite{freedman} and \cite{guo} laid down the theoretical basis for randomization-based inference from nonlinear regression. 
We leave extensions to the preliminary-test procedure to future research.

\subsection{Preliminary test for regression specification}\label{sec:ext_specification}
We assumed that the analyzer predetermines to use either \cite{Fisher35}'s additive or \cite{Lin13}'s fully interacted specification for regression adjustment. 
Alternatively, we can use preliminary tests to determine the regression specifications for covariate adjustment.  An immediate example is to use data to determine whether a subset or all of the interaction terms can be omitted from the regression specification.  Alternatively, we can use preliminary testing to decide whether to use linear or non-linear regression specification. 
See also \cite{hanzhong} for an example of using lasso to determine data-dependent specification of the mean model.
Such model selection parallels the variable selection discussed in Section \ref{sec:vs_outcome} as a special case of the broader class of statistical procedures that use a test of significance to guide the choice of specification for subsequent analysis \citep{bancroft},  and can be performed either independently or in combination with preliminary test of covariate balance. 
We leave detailed theory to future research.

\subsection{Preliminary tests in observational studies}
\cite{cochran} studied the implications of covariate balance-based preliminary tests for covariate adjustment in observational studies, and suggested the practical value of preliminary tests when the number of covariates is large.
\cite{raab94} considered two outcome model-based procedures for variable selection in observational studies, and suggested that they are inferior to the always-adjust counterpart in general. 
This is coherent with our recommendation for the always-adjust strategy under randomized trials in low-dimensional settings. 
In a similar spirit, \cite{zach2} used balance tests to assess if a matched data set should be analyzed as a completely randomized or
rerandomized experiment.
We leave the theory to future research.

%
%
%

\bibliographystyle{chicago}
\bibliography{refs_pt}

\newpage 

\setcounter{equation}{0}
\setcounter{section}{0}
\setcounter{figure}{0}
\setcounter{example}{0}
\setcounter{proposition}{0}
\setcounter{corollary}{0}
\setcounter{theorem}{0}
\setcounter{table}{0}
\setcounter{condition}{0}
\setcounter{lemma}{0}
\setcounter{remark}{0}
\setcounter{definition}{0}

\renewcommand {\thedefinition} {S\arabic{definition}}
\renewcommand {\theproposition} {S\arabic{proposition}}
\renewcommand {\theexample} {S\arabic{example}}
\renewcommand {\thefigure} {S\arabic{figure}}
\renewcommand {\thetable} {S\arabic{table}}
\renewcommand {\theequation} {S\arabic{equation}}
\renewcommand {\thelemma} {S\arabic{lemma}}
\renewcommand {\thesection} {S\arabic{section}}
\renewcommand {\thetheorem} {S\arabic{theorem}}
\renewcommand {\thecorollary} {S\arabic{corollary}}
\renewcommand {\thecondition} {S\arabic{condition}}
\renewcommand {\thepage} {S\arabic{page}}
\renewcommand {\theremark} {S\arabic{remark}}

\setcounter{page}{1}

\begin{center}
\huge Supplementary Material
\end{center}

%
 
Section \ref{sec:notation_app} gives the notation, definitions, and regularity conditions omitted from the main text. 

Section \ref{sec:thm_app} gives the rigorous statements of Theorems \ref{thm:coverage} and \ref{thm:coverage_conditional} in the main text, along with some results on the corresponding $t$-statistics as the building blocks.

Section \ref{sec:correction_app} gives the procedures for correcting for the asymptotic coverage rates and type I error rates under the preliminary-test procedure.

Section \ref{sec:lemmas} states the key lemmas.

Sections \ref{sec:proof_dist_htpt}--\ref{sec:frt_app} give the proofs of the results on the asymptotic sampling distributions, coverage rates, and the Fisher randomization test ({\frt}), respectively.
To facilitate connection, we present the results for the unconditional and conditional inferences together, leading with the conditional part as the building blocks.

Assume centered covariates with $\bar x = \meani x_i = 0_J$ throughout to simplify the presentation. 
Let $\mq = [-\qa, \qa]$ denote the central $1-\alpha$ quantile range of the standard normal distribution with $\pr(\ep \in \mq) = 1-\alpha$ for $\ep\sim \mn(0,1)$. Let $|\ci|$ denote the length of a confidence interval $\ci$.  

\section{Notation, definitions, and regularity conditions}\label{sec:notation_app}

\subsection{Peakedness and asymptotic relative efficiency}

Definition \ref{def:peakedness} below gives the definition of peakedness \citep{sherman} that underlies the definition of asymptotic relative efficiency in the main text. 

\begin{definition}\label{def:peakedness}
For two symmetric random vectors $A$ and $B$ in $\mrm $, we say $A$ is {\it more peaked} than $B$ if $\pr(A\in \mc) \geq \pr(B \in \mc)$ for every symmetric convex set $\mc \in \mrm $, denoted by $A \succeq B$. 
\end{definition}

For $m = 1$, a more peaked random variable has smaller central quantile ranges.
For general $m\geq 1$ and $A, B \in \mathbb R^m$ with finite second moments, $A \succeq B$ implies $\cov(A) \leq \cov(B)$. 
For $A$ and $B$ in $\mathbb R^m$ that are both normal with zero means, $A \succeq B$ is equivalent to $\cov(A) \leq \cov(B)$. 
Intuitively, $A$ and $B$ have the same distribution, denoted by $A \sim B$, if and only if $A \succeq B$ and $B \succeq A$.  

For two sequences of random vectors $\{A_N\}_{N=1}^\infty$ and $\{B_N\}_{N=1}^\infty$ with $A_N\rs A$ and $B_N\rs B$ in $\mrm$, write $A_N \succi B_N$ if $A \succeq B$, and write $A_N \asim B_N$ if $A \sim B$. 

Definition \ref{def:eff2} below was introduced by \cite{zdrep} to quantify the asymptotic relative efficiency of estimators in terms of peakedness.
The definition is equivalent to the one by central quantile ranges in the main text when $m = 1$, but also extends to general $m$-dimensional estimators for $m \geq 2$.

\begin{definition}\label{def:eff2}
For two estimators $\hth_1$ and $\hth_2$ that are both consistent for some parameter $\theta \in \mrm$ as the sample size $N$ tends to infinity, we say 
\begine[(i)]
\item $\hth_1$ and $\hth_2$ are {\it asymptotically equally efficient} if $\sqrtn (\hth_1 - \theta) \asim \sqrtn (\hth_2 -\theta)$; 
\item $\hth_1$ is  {\it asymptotically more efficient} than $\hth_2$ if $\sqrtn (\hth_1 - \theta) \succi \sqrtn (\hth_2 - \theta)$. 
\ende 
 For notational simplicity, we will abbreviate  $\sqrtn (\hth_1 - \theta) \asim \sqrtn (\hth_2 -\theta)$ as $\hth_1 \asim \hth_2$, and $\sqrtn (\hth_1 - \theta) \succi \sqrtn (\hth_2 - \theta)$ as $\hth_1 \succi \hth_2$, respectively, when the meaning of $\theta$ is clear from the context. 
\end{definition}

\subsection{Explicit forms of $v_* \ (\nflt)$ and additional notation}
For $z = 0,1$, let $\gq $ be the coefficient vector of $x_i$ from $\lmt(\yiz \sim 1 + x_i)$.
This is a theoretical fit with the $\yiz$'s only partially observable. 
Let $
\gamma_\fisher = e_0\gamma_0 + e_1\gamma_1$. 

For $i = \ot{N}$, let 
\begina
Y_{i, \nm}(z) = \yiz, \quad Y_{i, \fisher}(z) = \yiz   -  x_i^\T\gp, \quad Y_{i, \lin}(z) =  \yiz   -  x_i^\T\bg_z,
\enda with $\meani Y_{i,*}(z) = \by(z)$ for $\nfl$. 
Let 
\begina
\tau_{i,*} = Y_{i, *}(1) - Y_{i,*}(0)\quad (\nfl)
\enda
with $\tau_{i, \nm} = \tau_{i, \fisher} = \tau_i$ and $\meani \tau_{i,*} = \tau$.
Let 
\begina
S^2_{z,*}= (N-1)^{-1}\sumi \{Y_{i,*}(z) - \by(z)\}^2, \quad S_{\tau,*}^2 =  (N-1)^{-1}\sumi (\tau_{i,*} - \tau)^2
\enda denote the finite population variances of $\{Y_{i,*}(z) : i = \ot{N}\}$ and $\{\tau_{i,*}: i = \ot{N}\}$, respectively, with  $
S_{\tau, \nm}^2= S_{\tau,\fisher}^2 = (N-1)^{-1}\sumi (\tau_i - \tau)^2$. 
Then 
\beginy\label{eq:vs}
v_* = e_0^{-1} S_{0,*}^2 +e_1^{-1} S_{1,*}^2  - S_{\tau,*}^2
\endy 
for $\nfl$. 
Let 
\begina
\kappa_* = \frac{v_*}{v_* + S_{\tau,*}^2} \leq 1, \quad \rho_* = \frac{\vl}{\vs} \leq 1
\enda 
for $\nfl$, with $\rho_\lin = 1$. 

Condition \ref{cond:ct} below formalizes the constant treatment effects condition. 

\begin{condition}\label{cond:ct}
The individual treatment effects $\tau_i$'s are constant across $i = \ot{N}$. 
\end{condition}

Condition \ref{cond:ct} ensures 
\begina
v_\fisher = v_\lin, \quad S_{\tau, *}^2 = 0 \quad (\nfl)  
\enda 
such that 
\begina
\kappa_* = 1 \quad(\nfl), \quad \rho_\fisher = 1.
\enda

\subsection{Regularity conditions for asymptotic analysis}

Let $\sxx = (N-1)^{-1}\sumi x_ix_i^\T$ be the finite population  covariance matrix of the centered covariate vectors.
Condition \ref{cond:asym} gives the regularity conditions for finite population asymptotics under complete randomization, embedding the study population in question into a sequence of finite populations with increasing $N$ \citep{DingCLT}.

\begin{condition}\label{cond:asym}  
As $N \to \infty$, for $z =  0,1$, 
(i) $e_z  = N_z /N$ has a limit in $(0,1)$, 
(ii) the first two finite population moments of $\{Y_i(0), Y_i(1), x_i \}_{i=1}^N $  have finite limits; $\sxx$ and its limit are both nonsingular; 
and (iii) there exists a $ c_0 < \infty$ independent of $N$ such that $N^{-1}\sum_{i=1}^N Y^4_i(z) \leq c_0$ and $N^{-1}\sum_{i=1}^N \|x_i\|^4_4 \leq c_0$.
\end{condition}
Condition \ref{cond:asym} ensures that $e_z $, $\gq$, 
$\gp$, $S_x^2$, $v_*$, and $S_{\tau, *}^2$ all have finite limits for $z = 0,1$ and $\nfl$. 
For notational simplicity, we will also use  the same symbols  to denote their respective limiting values when no confusion would arise. 

Lemma \ref{lem:cre} below reviews the asymptotic distributions of $\hts$ and $\htx$ under complete randomization \citep{Neyman23, Lin13, DingCLT, LD20, zdfrt}.

\begin{lemma}\label{lem:cre} 
Assume a completely randomized treatment-control experiment.
Then $v_x = N\cov(\htx) = \ppinv\sxx$. 
Further assume Condition \ref{cond:asym}. 
Then
\begine[(i)]
\item \begina
\rtn   \left(\begin{array}{cc}
\hts  - \tau \\
\htx
\end{array}
\right)
\ \rightsquigarrow \   
\mN \left\{ 
0_{J+1},  
  \left(\begin{array}{cc}
v_*  &  \csi^\T \\
 \csi & \vxi \end{array}\right) \right\} \quad(\nfl), 
 \enda
with  
\begina
\cni =\sxx(e_0^{-1}\gamma_0 + e_1^{-1} \gamma_1),\qquad  \cfi =   \sxx(e_1^{-1}-e_0^{-1})(\gamma_1-\gamma_0 ),\qquad \cli = 0_J
\enda
satisfying $v_* - \vl  = \cs^\T \vx^{-1} \cs \geq 0$. This ensures $\htl \succi \hts \ (\nf)$ with $\htl \asim \hts$ if and only if $c_* = 0_J$. 
\item $N\hse_*^2 - v_* = S_{\tau, *}^2 + \op$ with $S_{\tau, *}^2 \geq 0$ for $\nfl$. 
\ende
\end{lemma}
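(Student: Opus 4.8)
The plan is to treat Lemma~\ref{lem:cre} as an assembly of known finite-population results, supplying only the bookkeeping that ties them to this paper's notation. The variance identity is immediate: since $x_i$ does not change with treatment, $\htx$ is the difference-in-means estimator for a pseudo-experiment with potential outcomes $x_i(1)=x_i(0)=x_i$, so Neyman's exact variance formula \citep{Neyman23} gives $\cov(\htx)=(N_1^{-1}+N_0^{-1})\sxx$ because the individual-effect correction vanishes; multiplying by $N$ and using $N(N_1^{-1}+N_0^{-1})=N^2/(N_0N_1)=\ppinv$ yields $\vx=N\cov(\htx)=\ppinv\sxx$.

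For part~(i), I would first record the representation $\hts=\htn-\hgs^\T\htx$ for $\nfl$, with $\hg_\nm=0_J$, $\hgf\toinp\gf=e_0\gamma_0+e_1\gamma_1$, and $\hgl\toinp\gl=e_0\gamma_1+e_1\gamma_0$ (the crossed weights arise because $\hx(1)=e_0\htx$ and $\hx(0)=-e_1\htx$ under $\bar x=0_J$); this is \citet[Proposition~1]{zdfrt}, in the spirit of \cite{Lin13}. Next, the multivariate finite-population central limit theorem of \citet{DingCLT}, applied to the stacked vectors $(Y_i(z),x_i^\T)^\T$, delivers joint asymptotic normality of $\rtn(\htn-\tau,\htx^\T)^\T$ with limiting top-left entry $\vn$ of the form \eqref{eq:vs} and off-diagonal block $N\cov(\htn,\htx)$; a short finite-population covariance computation, using $\gamma_z=\sxx^{-1}S_{x,Y(z)}$ hence $\sxx\gamma_z=S_{x,Y(z)}$, evaluates this block as $\cni=\sxx(e_0^{-1}\gamma_0+e_1^{-1}\gamma_1)$. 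Since $\rtn\htx=\Op$, Slutsky transfers the limit to $(\hts-\tau,\htx)$ through the linear map $(a,b)\mapsto(a-\gamma_*^\T b,\,b)$ for $\fl$, so $c_*=\cni-\vx\gamma_*$; substituting $\gf$ and $\gl$ gives $\cfi=\sxx(e_1^{-1}-e_0^{-1})(\gamma_1-\gamma_0)$ and $\cli=0_J$. Finally, $\cli=0_J$ forces $\rtn(\htl-\tau)$ to be asymptotically independent of $\rtn\htx$; writing $\rtn(\hts-\tau)=\rtn(\htl-\tau)+(\gl-\gamma_*)^\T\rtn\htx+\op$ and comparing limiting variances gives $v_*=\vl+(\gl-\gamma_*)^\T\vx(\gl-\gamma_*)$, and since $\cni=\vx\gl$ we have $c_*=\vx(\gl-\gamma_*)$, whence $v_*-\vl=c_*^\T\vxinv c_*\ge 0$. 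Joint normality with smaller variance then gives $\htl\succi\hts$ for $\nf$, with $\htl\asim\hts$ exactly when $v_*=\vl$, i.e.\ $c_*=0_J$.

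For part~(ii), $\hse_*$ is the Eicker--Huber--White robust standard error for the coefficient of $Z_i$ in the corresponding regression. I would express $\hts$ (up to $\op/\rtn$) as the difference in means of the adjusted outcomes $Y_{i,*}(z)$ and show that $N\hse_*^2$ is asymptotically the Neyman-type variance estimator built from these adjusted outcomes, whose probability limit is $e_0^{-1}S_{0,*}^2+e_1^{-1}S_{1,*}^2=v_*+S_{\tau,*}^2$ by \eqref{eq:vs}; this is the classical overestimation-by-the-variance-of-individual-effects phenomenon \citep{Neyman23, Lin13, zdfrt}. Nonnegativity of $S_{\tau,*}^2$ is immediate from its definition.

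The step I expect to be the main obstacle is part~(ii) for the additive (Fisher) specification: because $\gf$ is a pooled coefficient, the working regression is design-misspecified, so the robust residuals are not the within-arm-centered adjusted outcomes and one must verify that the cross-arm discrepancy contributes only $\op$ to $N\hse_\fisher^2$, leaving the probability limit at exactly $v_\fisher+S_{\tau,\fisher}^2$. This needs the moment bounds of Condition~\ref{cond:asym}(iii) together with the consistency of $\hgf$ for $\gf$; the parallel checks for $\nm$ and $\lin$, and the CLT invocation in part~(i), are routine given the cited finite-population limit theorems.
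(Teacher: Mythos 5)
Your proof is correct, and it matches the route the paper implicitly relies on: the paper gives no proof of Lemma \ref{lem:cre}, presenting it as a review of classical randomization-based results with citations \citep{Neyman23, Lin13, DingCLT, LD20, zdfrt}, and your steps—Neyman's variance formula applied to the constant "potential outcomes" $x_i$ to get $\vx = \ppinv\sxx$, the decomposition $\hts = \htn - \hgs^\T\htx$ with $\hgf \toinp e_0\gamma_0 + e_1\gamma_1$ and $\hgl \toinp e_0\gamma_1 + e_1\gamma_0$, the finite-population CLT plus Slutsky yielding $c_* = \cni - \vx\gamma_*$ and $v_* - \vl = c_*^\T \vxinv c_*$, and the EHW limits $N\hses^2 \toinp e_0^{-1}S_{0,*}^2 + e_1^{-1}S_{1,*}^2 = v_* + S_{\tau,*}^2$—are exactly how those references establish the claims. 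The one step you flag as delicate, part (ii) for the additive specification, resolves as you suggest: because the intercept and $Z_i$ span both arm indicators, the EHW residuals are within-arm centered versions of $Y_i - x_i^\T\hgf$, and consistency of $\hgf$ for $\gf$ together with the fourth-moment bounds in Condition \ref{cond:asym} shows that replacing $\gf$ by $\hgf$ perturbs the relevant averages only by $\op$, so $N\hsef^2 \toinp \vf + S_{\tau,\fisher}^2$ as claimed.
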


\subsection{Background and regularity conditions for the FRT}
Definition \ref{def:proper} below formalizes the definition of asymptotic validity of the {\frt} for testing $\hn$ \citep{wd, zdfrt}.

\begin{definition}\label{def:proper}
A test statistic $T$ is {\it asymptotically valid} for testing $\hn$ if under $\hn$,  
$$
\limN  \pr(  p_\frt \leq \alpha   ) \leq \alpha \qquad \text{for all }\alpha\in(0,1)
$$ 
holds for all sequences of $\{Y_i(0), Y_i(1), x_i: i = \ot{N}\}$ and $(e_0, e_1)$ as $N$ tends to infinity.
\end{definition}

Recall that $\mathcal Z$ denotes the set of all $N!$ permutations of the realized assignment vector $Z$. 
Its nonrepetitive elements constitute the set of all possible treatment assignment vectors of treatment sizes $(N_0, N_1)$.
For each $\bm z \in \mz$, let $Y(\bm z)$ denote the potential value of $Y$ when $Z = \bm z$. 
Complete randomization induces a uniform distribution over $  \{ T(\bm z, Y(\bm z), X) : \bm z \in \mz\}$, known as the \emph{sampling distribution} of $T$.
The {\frt}, on the other hand, induces a uniform distribution over $  \{ T(\bm z, Y(Z), X) : \bm z \in \mz\}$ conditioning on $Z$,  known as the {\it randomization distribution} of $T$ \citep{wd, zdfrt, colin}. 

For a two-sided test with $\pfrt$ computed as in \eqref{eq:pfrt}, a statistic $T$ is asymptotically valid for testing $\hn$  if under $\hn$,  the sampling distribution of $|T|$ is stochastically dominated by its randomization distribution for almost all sequences of $Z$. 

Let $w_i(z) =  (S_x^2)^{-1}x_i Y_i(z)$ for $i = \ot{N}$ and $z = 0,1$. 
Condition \ref{cond:asym_frt} below gives the additional regularity conditions for finite population asymptotics of the randomization distributions.

\begin{condition}\label{cond:asym_frt}  
As $N \to \infty$, for $z =  0,1$, 
(i) 
the second moments of $\{  w_i(0),  w_i(1)\}_{i=1}^N$ have finite limits, 
and (ii) there exists a $ c_0 < \infty$ independent of $N$ such that $N^{-1}\sum_{i=1}^N \|w_i(z)\|_4^4 \leq c_0$.
\end{condition}

\subsection{Regularity conditions for the results in the main text}
Propositions \ref{prop:htpt}--\ref{prop:cond} and Theorems \ref{thm:htpt}, \ref{thm:coverage}, \ref{thm:cond}, and \ref{thm:coverage_conditional} assume Condition \ref{cond:asym}. 
 Theorems \ref{thm:frt} and \ref{thm:frt_cond} assume Conditions \ref{cond:asym}--\ref{cond:asym_frt}.

\section{Rigorous statements and additional results}\label{sec:thm_app}

\subsection{Rigorous statements of Theorems \ref{thm:coverage} and \ref{thm:coverage_conditional}}
Recall the definitions of $\kappa_*$ and $\rho_* \ (\nfl)$ from Section \ref{sec:notation_app}. 
Theorem \ref{thm:coverage_conditional_app} below complements Theorem \ref{thm:coverage_conditional} in the main text, and quantifies the conditional coverage rates of $\cis \ (\nfl)$ and $\cipts   \ (\fl)$ given $\phim$.

\begin{theorem}\label{thm:coverage_conditional_app}
\precreapp. 
\begine[(i)]
\item 
Conditioning on $\{\phim = 1\}$, we have $\cipts = \cin \ (\fl)$ with 
\beginy
 \limn \pr(\tau \in \cis \mid \phim = 1) &=&  \pr\left[\kappa_*^{1/2} \big\{ \rho_* ^{1/2}  \epsilon +  \left(1 -  \rho_* \right)^{1/2}    \ml \big\}  \in \mq\right] \nonumber\\
&\geq &  \pr\left(\kappa_*^{1/2}   \epsilon   \in \mq\right) \label{eq:overconservative_cond_0} \\
&\geq& 1-\alpha \quad(\nf), \label{eq:overconservative_cond}\\\nonumber\\
\limn \pr(\tau \in \cil \mid \phim = 1) &=&   \pr\left(\kappa_\lin^{1/2} \ep \in \mq\right) \ \geq\ 1-\alpha, \label{eq:overconservative_cond_2}\\\nonumber\\
\limn \frac{|\cis|}{|\cil|} &\leq& 1 \quad (\nf).\nonumber
\endy 
The  equality in \eqref{eq:overconservative_cond_0} holds if and only if $c_* = 0_J$ for $\nf$. 
The  equality in \eqref{eq:overconservative_cond} holds if and only if Condition \ref{cond:ct} holds. 
The equality in \eqref{eq:overconservative_cond_2} holds if Condition \ref{cond:ct} holds. 
\item 
Conditioning on $\{\phim = 0\}$, we have $\cipts = \cis \ (\fl)$ with 
\beginy
 \limn \pr(\tau \in \cis \mid \phim = 0) &=&  \pr\left[\kappa_*^{1/2} \big\{ \rho_* ^{1/2}  \epsilon +  \left(1 -  \rho_* \right)^{1/2}    \ml' \big\}  \in \mq\right] \nonumber\\
&\leq &  \pr\left(\kappa_*^{1/2}   \epsilon   \in \mq\right) \label{eq:anticonservative_cond_0} \\
&\geq& 1-\alpha \quad(\nf), \label{eq:anticonservative_cond}\\\nonumber\\
\limn \pr(\tau \in \cil \mid \phim = 0) &=&   \pr\left(\kappa_\lin^{1/2} \ep \in \mq\right)  \ \geq\ 1-\alpha, \label{eq:anticonservative_cond_2}\\\nonumber\\
\limn \frac{|\cis|}{|\cil|} &\leq& 1 \quad (\nf). \nonumber
\endy 
The equality in \eqref{eq:anticonservative_cond_0} holds if and only if  $c_* = 0_J$ for $\nf$. 
The equality in \eqref{eq:anticonservative_cond} holds if and only if Condition \ref{cond:ct} holds. 
The equality in \eqref{eq:anticonservative_cond_2} holds if Condition \ref{cond:ct} holds. 
\item 
Further assume Condition \ref{cond:ct}. 
Then 
\begina
 \limn \pr(\tau \in \cin \mid \phim = 1) &=&  \pr\left\{ \rho_\nm ^{1/2}  \epsilon +  \left(1 -  \rho_\nm \right)^{1/2}    \ml  \in \mq\right\} \\
&\geq &   \pr\left( \epsilon   \in \mq\right) = 1-\alpha,  \\
 \limn \pr(\tau \in \cin \mid \phim = 0) &=&  \pr\left\{  \rho_\nm ^{1/2}  \epsilon +  \left(1 -  \rho_\nm \right)^{1/2}    \ml'  \in \mq\right\} \\
&\leq &  \pr\left( \epsilon   \in \mq\right) = 1-\alpha,
\enda
whereas
\begina
\limn \pr(\tau \in \cis \mid \phim = t) \ =\ \pr\left(  \ep \in \mq\right)  \ = \ 1-\alpha 
\enda 
for $ t = 0,1$ and $\fl$. 
\ende
\end{theorem}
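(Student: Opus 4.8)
The plan is to reduce each coverage statement, for each fixed $t\in\{0,1\}$, to the limiting law of a studentized statistic conditional on the covariate-balance event $\{\phim=t\}$, and then read off over- versus anticonservativeness from peakedness comparisons with the standard normal. Under Condition \ref{cond:asym}, $\{\phim=t\}$ has limiting probability $\pia$ (if $t=1$) or $1-\pia$ (if $t=0$), both in $(0,1)$, and $\cis$ covers $\tau$ exactly when $\rtn(\hts-\tau)/(N\hses^2)^{1/2}\in\mq$. By Proposition \ref{prop:cond} — and, for the remaining combinations of $\nfl$ and $t\in\{0,1\}$, the rerandomization conditional central limit theorem of \cite{LD2018} applied to the acceptance events $\{M<a\}$ and $\{M\geq a\}$ — conditionally on $\{\phim=t\}$ we have $\rtn(\hts-\tau)\rs\vl^{1/2}\epsilon+(v_*-\vl)^{1/2}\ml^{(t)}$ for $\nfl$, writing $\ml^{(1)}:=\ml$ and $\ml^{(0)}:=\ml'$. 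Since $N\hses^2-v_*\toinp S_{\tau,*}^2$ unconditionally (Lemma \ref{lem:cre}(ii)) and $\pr(\phim=t)$ stays bounded away from $0$, this convergence in probability persists after conditioning. Slutsky's theorem at the continuity points $\pm\qa$ then gives
\[
\limn\pr(\tau\in\cis\mid\phim=t)=\pr\!\left\{\frac{\vl^{1/2}\epsilon+(v_*-\vl)^{1/2}\ml^{(t)}}{(v_*+S_{\tau,*}^2)^{1/2}}\in\mq\right\}\qquad(\nfl),
\]
and dividing through while using $\vl/(v_*+S_{\tau,*}^2)=\kappa_*\rho_*$ and $(v_*-\vl)/(v_*+S_{\tau,*}^2)=\kappa_*(1-\rho_*)$ rewrites the argument as $\kappa_*^{1/2}\{\rho_*^{1/2}\epsilon+(1-\rho_*)^{1/2}\ml^{(t)}\}$. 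This is the leading equality in \eqref{eq:overconservative_cond_0} (for $t=1$) and \eqref{eq:anticonservative_cond_0} (for $t=0$); for $\cil$, $(1-\rho_\lin)^{1/2}=0$, leaving $\kappa_\lin^{1/2}\epsilon$ as in \eqref{eq:overconservative_cond_2} and \eqref{eq:anticonservative_cond_2}.

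The inequalities then come from peakedness. With $\epsilon'$ an independent copy of $\epsilon$, $\rho_*^{1/2}\epsilon+(1-\rho_*)^{1/2}\epsilon'\sim\mn(0,1)$. The truncated normals obey $\ml\succeq\epsilon\succeq\ml'$ — conditioning a coordinate of $D$ on $\{D^\T D<a\}$ concentrates it, on $\{D^\T D\geq a\}$ spreads it — and convolving with the independent symmetric unimodal $\rho_*^{1/2}\epsilon$ preserves the peakedness order, so $\rho_*^{1/2}\epsilon+(1-\rho_*)^{1/2}\ml\succeq\mn(0,1)\succeq\rho_*^{1/2}\epsilon+(1-\rho_*)^{1/2}\ml'$; multiplying by $\kappa_*^{1/2}\leq1$ only sharpens peakedness, and $\kappa_*^{1/2}\epsilon\succeq\epsilon$. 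Evaluating these relations at the symmetric convex set $\mq$ yields \eqref{eq:overconservative_cond_0}--\eqref{eq:overconservative_cond}, \eqref{eq:anticonservative_cond_0}--\eqref{eq:anticonservative_cond}, the inequalities in \eqref{eq:overconservative_cond_2} and \eqref{eq:anticonservative_cond_2}, and $\pr(\epsilon\in\mq)=1-\alpha$ by definition of $\qa$. The width claim holds because $|\cis|/|\cil|=\hses/\hsel\toinp\{(v_*+S_{\tau,*}^2)/(v_\lin+S_{\tau,\lin}^2)\}^{1/2}\leq1$ for $\nf$ \citep[Corollary 3]{zdfrt}.

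The equality cases follow by tracking when these relations are tight. The comparison $\rho_*^{1/2}\epsilon+(1-\rho_*)^{1/2}\ml\succeq\mn(0,1)$ is strict unless $(1-\rho_*)^{1/2}=0$, i.e.\ $\rho_*=1$, i.e.\ $v_*=\vl$, which by Lemma \ref{lem:cre}(i) is equivalent to $c_*=0_J$; this characterizes equality in \eqref{eq:overconservative_cond_0} and \eqref{eq:anticonservative_cond_0}. When $c_*=0_J$ the limit is $\pr(\kappa_*^{1/2}\epsilon\in\mq)$, equal to $1-\alpha$ iff $\kappa_*=1$ iff $S_{\tau,*}^2=0$; since $S_{\tau,\nm}^2=S_{\tau,\fisher}^2=(N-1)^{-1}\sumi(\tau_i-\tau)^2$, this is exactly Condition \ref{cond:ct}, giving the equality case of \eqref{eq:overconservative_cond} and \eqref{eq:anticonservative_cond}. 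For $\cil$, Condition \ref{cond:ct} means $Y_i(1)=Y_i(0)+\tau$, so adding a constant leaves the least-squares slope unchanged and $\gamma_0=\gamma_1$, hence $S_{\tau,\lin}^2=0$ and $\kappa_\lin=1$, so equality holds in \eqref{eq:overconservative_cond_2} and \eqref{eq:anticonservative_cond_2}. Part (iii) is the specialization to Condition \ref{cond:ct}: then $\kappa_*=1$ for $\nfl$ and $\rho_\fisher=\rho_\lin=1$, so for $\fl$ the conditional limit of the studentized statistic is $\mn(0,1)$ and the coverage is $1-\alpha$ for $t=0,1$; for $\htn$, $\rho_\nm=\vl/\vn\leq1$ and the two displays reduce to $\pr\{\rho_\nm^{1/2}\epsilon+(1-\rho_\nm)^{1/2}\ml\in\mq\}\geq1-\alpha$ given $\phim=1$ and $\pr\{\rho_\nm^{1/2}\epsilon+(1-\rho_\nm)^{1/2}\ml'\in\mq\}\leq1-\alpha$ given $\phim=0$, by the peakedness of $\ml$ and the reversed relation $\epsilon\succeq\ml'$.

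The step I expect to be most delicate is the conditional central limit theorem feeding the first paragraph — jointly controlling $\rtn(\hts-\tau)$ and $N\hses^2$ conditionally on $\{\phim=t\}$ for every $\nfl$, including the companion statements for $\htn$ given $\{\phim=0\}$ and for $\htf$ given $\{\phim=1\}$ — but this is precisely the rerandomization machinery of \cite{LD2018} and \cite{LD20} transposed to the two complementary acceptance regions, so once Proposition \ref{prop:cond} and the peakedness lemmas of Section \ref{sec:lemmas} are in hand the remaining work is routine bookkeeping.
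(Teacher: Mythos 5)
Your proposal is correct and follows essentially the same route as the paper: reduce conditional coverage to the event $\{(\hts-\tau)/\hses \in \mq\}$, obtain the conditional limit via Proposition \ref{prop:cond} (and its analogue for the remaining estimator/balance-status combinations), combine with the conditional consistency of $N\hses^2$ and Slutsky to get the $\kappa_*^{1/2}\{\rho_*^{1/2}\epsilon+(1-\rho_*)^{1/2}\ml^{(t)}\}$ limits, and then read off the inequalities and equality cases from the peakedness chain $\ml\succeq\epsilon\succeq\ml'$ with Lemma \ref{lem:peak_sum}. The only cosmetic difference is that you derive the conditional convergence of $N\hses^2$ directly from its unconditional limit and $\pr(\phim=t)$ being bounded away from zero, where the paper invokes Lemma \ref{lem:se}; the content is the same.
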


Theorem \ref{thm:coverage_app} below complements Theorem \ref{thm:coverage} in the main text, and clarifies the source of over- and anticonservativeness of $\cipts \ (\fl)$ unconditionally. 

\begin{theorem}\label{thm:coverage_app}
\precreapp. 
\begine[(i)]
\item 
 \begina
\limn \pr(\tau \in \ciptl) &=& \pi_a \limn \pr(\tau \in \cin \mid \phim = 1) + (1-\pia) \limn \pr(\tau \in \cil \mid \phim = 0)\\
&\geq& 1-\alpha, 
\enda
where $
\limn \pr(\tau \in \cin \mid \phim = 1) \geq 1-\alpha$ and 
$\limn \pr(\tau \in \cil \mid \phim = 0)   \geq  1-\alpha$. 
\item \begina
\limn \pr(\tau \in \ciptf) &=& \pi_a \limn \pr(\tau \in \cin \mid \phim = 1) + (1-\pia) \limn \pr(\tau \in \cif\mid \phim = 0), 
\enda
where $
\limn \pr(\tau \in \cin \mid \phim = 1) \geq 1-\alpha$ whereas $\limn \pr(\tau \in \cif \mid  \phim = 0)$ can be less than $1-\alpha$. Two exceptions are $e_0 = e_1 = 1/2$ such that 
\begina
\limn \pr(\tau \in \cif \mid  \phim = 0)  \geq 1-\alpha, \quad \limn \pr(\tau \in \ciptf) \geq 1- \alpha, 
\enda
and Condition \ref{cond:ct} such that 
\begina
\limn \pr(\tau \in \cif \mid  \phim = 0)  = 1-\alpha, \quad \limn \pr(\tau \in \ciptf) \geq 1- \alpha.
\enda

\item 
Further assume Condition \ref{cond:ct}. 
Then 
\beginy
\limn \pr(\tau \in \cin \mid \phim = 1) &\geq& 1-\alpha, \label{eq:conservative_ct}\\
\limn \pr(\tau \in \cis\mid \phim = 0) &= &  1-\alpha \quad (\fl)  \nonumber
\endy
such that 
\beginy
\limn \pr(\tau \in \cipts) &=& \pi_a \limn \pr(\tau \in \cin \mid \phim = 1) + (1-\pia) \limn \pr(\tau \in \cis \mid \phim = 0)\nonumber\\
&\geq& 1-\alpha,\nonumber
\endy
The equality holds if and only if the equality in \eqref{eq:conservative_ct} holds with $c_\nm = 0_J$. That is, $\ciptf$ and $\ciptl$ are both overconservative under Condition \ref{cond:ct}, with the overcoverage by $\cin$ when $\phim = 1$ being the source of overconservativeness. 
\ende
\end{theorem}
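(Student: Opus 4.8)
The plan is to reduce the unconditional coverage statement to the conditional coverage rates established in Theorem~\ref{thm:coverage_conditional_app}. By the definition in~\eqref{eq:htpts}, $\cipts$ equals $\cin$ on the event $\{\phim=1\}=\{M<a\}$ and equals $\cis$ on $\{\phim=0\}=\{M\geq a\}$ for $\fl$, which yields the exact finite-$N$ identity
$$\pr(\tau\in\cipts)=\pr(\tau\in\cin\mid\phim=1)\,\pr(\phim=1)+\pr(\tau\in\cis\mid\phim=0)\,\pr(\phim=0).$$
To pass to the limit I would first fix the weights: Lemma~\ref{lem:cre} gives $\sqrtn\,\htx\rs\mN(0_J,\vx)$, hence $M=\htx^\T\{\cov(\htx)\}^{-1}\htx\rs D^\T D$ with $D\sim\mN(0_J,I_J)$; since $D^\T D$ is continuously distributed, the portmanteau theorem gives $\pr(\phim=1)\to\pi_a$ and $\pr(\phim=0)\to1-\pi_a$. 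Theorem~\ref{thm:coverage_conditional_app} supplies the two conditional limits, so each factor in the identity converges and continuity of multiplication delivers the mixture representation $\limn\pr(\tau\in\cipts)=\pi_a\limn\pr(\tau\in\cin\mid\phim=1)+(1-\pia)\limn\pr(\tau\in\cis\mid\phim=0)$ asserted in all three parts.

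For part~(i) I set $*=\lin$. Theorem~\ref{thm:coverage_conditional_app}(i) with $*=\nm$ gives $\limn\pr(\tau\in\cin\mid\phim=1)\geq1-\alpha$ via the chain~\eqref{eq:overconservative_cond_0}--\eqref{eq:overconservative_cond}, while~\eqref{eq:anticonservative_cond_2} gives $\limn\pr(\tau\in\cil\mid\phim=0)=\pr(\kappa_\lin^{1/2}\epsilon\in\mq)\geq1-\alpha$, the truncated term vanishing because $\rho_\lin=1$. Both conditional limits are at least $1-\alpha$, so their convex combination is too, establishing overconservativeness of $\ciptl$.

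For part~(ii) I set $*=\fisher$. The $\phim=1$ term again satisfies $\limn\pr(\tau\in\cin\mid\phim=1)\geq1-\alpha$, but Theorem~\ref{thm:coverage_conditional_app}(ii) only \emph{upper} bounds the $\phim=0$ term by $\pr(\kappa_\fisher^{1/2}\epsilon\in\mq)$ through~\eqref{eq:anticonservative_cond_0}, and this bound exceeds $1-\alpha$, leaving no lower bound; the wider truncated term $(1-\rho_\fisher)^{1/2}\ml'$ can drive $\limn\pr(\tau\in\cif\mid\phim=0)$ below $1-\alpha$, so $\ciptf$ may be anticonservative. For the two exceptions I would show $\rho_\fisher=1$: when $e_0=e_1=1/2$ the formula $\cfi=\sxx(e_1^{-1}-e_0^{-1})(\gamma_1-\gamma_0)$ from Lemma~\ref{lem:cre} vanishes, so $\vf-\vl=\cfi^\T\vx^{-1}\cfi=0$; under Condition~\ref{cond:ct} the notation section records $\vf=\vl$ and $\kappa_\fisher=1$. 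In either case the $\ml'$ term drops out and $\limn\pr(\tau\in\cif\mid\phim=0)=\pr(\kappa_\fisher^{1/2}\epsilon\in\mq)\geq1-\alpha$, with equality under Condition~\ref{cond:ct}, so the mixture is $\geq1-\alpha$.

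Part~(iii) specializes to Condition~\ref{cond:ct}, where $\kappa_*=1$ and $\rho_\fisher=1$ force $\limn\pr(\tau\in\cis\mid\phim=0)=\pr(\epsilon\in\mq)=1-\alpha$ for $\fl$ while~\eqref{eq:conservative_ct} keeps $\limn\pr(\tau\in\cin\mid\phim=1)\geq1-\alpha$, so the mixture exceeds $1-\alpha$. Since the $\phim=0$ term contributes exactly $(1-\pia)(1-\alpha)$, the total equals $1-\alpha$ iff $\pia\{\limn\pr(\tau\in\cin\mid\phim=1)-(1-\alpha)\}=0$, i.e.\ (for $a>0$) iff that conditional limit, which under Condition~\ref{cond:ct} reduces to $\pr\{\rho_\nm^{1/2}\epsilon+(1-\rho_\nm)^{1/2}\ml\in\mq\}$, equals $1-\alpha$; this happens precisely when $\rho_\nm=1$, equivalently $c_\nm=0_J$. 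The main obstacle I anticipate is not depth but care: making the limit-interchange rigorous despite the $N$-dependent conditioning events $\{\phim=t\}$, and, in part~(ii), tracking that the conditional chain is an inequality in exactly the direction that leaves room for undercoverage, so that anticonservativeness genuinely cannot be ruled out in general.
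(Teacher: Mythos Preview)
Your proposal is correct and follows essentially the same approach as the paper's own proof, which simply states that the result follows from the finite-$N$ mixture identity $\pr(\tau\in\cipts)=\pr(\phim=1)\pr(\tau\in\cin\mid\phim=1)+\pr(\phim=0)\pr(\tau\in\cis\mid\phim=0)$, the convergence $\pr(\phim=1)\to\pia$, and Theorem~\ref{thm:coverage_conditional_app}. You have merely spelled out in detail what the paper leaves implicit---the portmanteau argument for the weights, the specific inequalities~\eqref{eq:overconservative_cond_0}--\eqref{eq:anticonservative_cond_2} being invoked in each part, and the reason $\rho_\fisher=1$ in the two exceptional cases---so there is no genuine difference in strategy.
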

\subsection{Asymptotic sampling distributions of  the $t$-statistics}
Observe that 
\begina
\{\tau \in \cis\} &=& \left\{\frac{\hts -\tau}{\hses} \in \mq\right\} \quad (\nfl), \\
 \{\tau \in \cipts\} &=& \left\{\frac{\htpts -\tau}{\hsepts} \in \mq\right\} \quad (\fl).
\enda
Proposition \ref{prop:htpt_studentized_cond} below states the asymptotic conditional distributions of the studentized statistics, providing the basis for understanding Theorems \ref{thm:coverage_conditional_app} and \ref{thm:coverage_app}.

\begin{proposition}\label{prop:htpt_studentized_cond}
\precreapp. 
\begine[(i)]
\item Conditioning on $\{\phim = 1\}$, we have 
\begina
\frac{\htpts-\tau}{\hsepts} = \frac{\htn-\tau}{\hsen} \quad (\fl),
\enda
with 
\begina
\left. \frac{\hts-\tau}{\hses} \ \right| \  \{\phim = 1\} \ \ \rs \ \  
\kappa_*^{1/2} \left\{ \rho_* ^{1/2}  \epsilon +  \left(1 -  \rho_* \right)^{1/2}    \ml \right\} 
\ \ \succeq \ \  \ep \quad (\nfl).
\enda
\item Conditioning on $\{\phim = 0\}$, we have 
\begina
\frac{\htpts-\tau}{\hsepts} = \frac{\hts-\tau}{\hses} \quad (\fl),
\enda
with 
\begina
\left. \frac{\hts-\tau}{\hses} \ \right| \  \{\phim = 0\} &\rs& 
\kappa_*^{1/2} \left\{ \rho_* ^{1/2}  \epsilon +  \left(1 -  \rho_* \right)^{1/2}    \ml' \right\} \quad (\nf),\\
\left. \frac{\htl-\tau}{\hsel} \ \right| \  \{\phim = 0\} &\rs& 
\kappa_\lin^{1/2}   \epsilon \ \ \succeq \ \ \epsilon.
\enda
\item 
Further assume Condition \ref{cond:ct}. Then 
\begina
\left. \frac{\htn-\tau}{\hsen} \ \right| \  \{\phim = 1\} &\rs& 
 \rho_\nm ^{1/2}  \epsilon +  \left(1 -  \rho_\nm \right)^{1/2}    \ml, \\
\left. \frac{\htn-\tau}{\hsen} \ \right| \  \{\phim = 0\} &\rs& 
 \rho_\nm ^{1/2}  \epsilon +  \left(1 -  \rho_\nm \right)^{1/2}    \ml', 
\enda
whereas
\begina
 \left. \frac{\hts-\tau}{\hses} \ \right| \  \{\phim = t\} &\rs&  \ep
\enda
for $t = 0,1$ and $\fl$. 
\ende\end{proposition}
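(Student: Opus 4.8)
\textit{Proof proposal.} The plan is to combine a conditional central limit theorem for the numerator $\sqrtn(\hts-\tau)$ with a conditional law of large numbers for the denominator $N\hses^2$, and then rewrite the resulting ratio in the $(\kappa_*,\rho_*)$ parametrization, reading off the peakedness comparisons from the rerandomization theory of \cite{LD2018, LD20}.

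First, for the numerator I would establish the conditional weak limits
\begin{align*}
\sqrtn(\hts-\tau)\mid\{\phim=1\} &\rs \vl^{1/2}\epsilon + (v_*-\vl)^{1/2}\mL,\\
\sqrtn(\hts-\tau)\mid\{\phim=0\} &\rs \vl^{1/2}\epsilon + (v_*-\vl)^{1/2}\ml',
\end{align*}
for $\nfl$. For $\fl$ these are exactly Proposition \ref{prop:cond} once one uses $\htpts=\htn$ given $\phim=1$ and $\htpts=\hts$ given $\phim=0$; the remaining cases ($*=\nm$ given $\phim=0$, and $*=\fisher,\lin$ given $\phim=1$) follow by the identical argument. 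Concretely, by the joint CLT in Lemma \ref{lem:cre}(i) I would write $\sqrtn(\hts-\tau) = \cs^\T\vx^{-1}\sqrtn\,\htx + r_N$, where $r_N\rs\mn(0,\vl)$ is asymptotically independent of $\sqrtn\,\htx$ (the cross-covariance of the limit vanishes and $v_*-\cs^\T\vx^{-1}\cs=\vl$ by Lemma \ref{lem:cre}(i)). Since $M$ is a function of $\htx$ alone with $M\rs D^\T D$ for $D\sim\mn(0_J,I_J)$, conditioning on $\{\phim=t\}$ constrains only the $\sqrtn\,\htx$ piece; rotational invariance of the isotropic Gaussian limit $D$ then gives that the linear functional $\cs^\T\vx^{-1}\sqrtn\,\htx$, conditionally on $\{D^\T D<a\}$ (resp.\ $\{D^\T D\ge a\}$), converges to $(v_*-\vl)^{1/2}\mL$ (resp.\ $(v_*-\vl)^{1/2}\ml'$), using $\|\vx^{-1/2}\cs\|^2=\cs^\T\vx^{-1}\cs=v_*-\vl$. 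Adding the asymptotically independent residual $\vl^{1/2}\epsilon$ gives the displays; for $*=\lin$ one has $\cs=0_J$, so both limits collapse to $\mn(0,\vl)$.

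Next, for the denominator I would show $N\hses^2\mid\{\phim=t\}\toinp v_*+S_{\tau,*}^2$ for $t=0,1$ and $\nfl$; this follows from the unconditional statement $N\hses^2-v_*=S_{\tau,*}^2+\op$ in Lemma \ref{lem:cre}(ii) together with $\pr(\phim=t)\to\pia$ or $1-\pia$ in $(0,1)$ (for $a\in(0,\infty)$), so that conditioning preserves $\op$. Slutsky's theorem, applied conditionally (legitimate since the conditioning events have non-degenerate limiting probability), then yields
\[
\left.\frac{\hts-\tau}{\hses}\,\right|\,\{\phim=1\}\ \rs\ \frac{\vl^{1/2}\epsilon+(v_*-\vl)^{1/2}\mL}{(v_*+S_{\tau,*}^2)^{1/2}} = \kappa_*^{1/2}\big\{\rho_*^{1/2}\epsilon+(1-\rho_*)^{1/2}\mL\big\},
\]
using $\kappa_*=v_*/(v_*+S_{\tau,*}^2)$ and $\rho_*=\vl/v_*$, and symmetrically with $\ml'$ given $\{\phim=0\}$; the identities $\hsepts=\hsen$ (given $\phim=1$) and $\hsepts=\hses$ (given $\phim=0$) give the stated equalities for the preliminary-test statistic. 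For peakedness, $\mL\succeq D_1\sim\mn(0,1)$ (truncation of an isotropic Gaussian to a symmetric convex set increases peakedness), hence $\rho_*^{1/2}\epsilon+(1-\rho_*)^{1/2}\mL\succeq\mn(0,1)$ by \cite{LD2018, LD20}, and multiplying by $\kappa_*^{1/2}\le1$ preserves this, giving $\succeq\epsilon$ for $\nfl$ given $\phim=1$; given $\phim=0$ and $*=\lin$, $\rho_\lin=1$ reduces the limit to $\kappa_\lin^{1/2}\epsilon\succeq\epsilon$ since $\kappa_\lin\le1$. Part (iii) follows by substituting Condition \ref{cond:ct}, which forces $S_{\tau,*}^2=0$ (so $\kappa_*=1$, $\nfl$) and $v_\fisher=v_\lin$ (so $\rho_\fisher=1$), collapsing the $*=\fisher,\lin$ limits to $\epsilon$ and leaving $\rho_\nm^{1/2}\epsilon+(1-\rho_\nm)^{1/2}\mL$ (resp.\ $\ml'$) for $*=\nm$.

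The main obstacle is making the conditional CLT for the numerator rigorous: one must pass from joint weak convergence of $(\sqrtn(\hts-\tau),\sqrtn\,\htx)$ to convergence of the conditional law of the first coordinate given $\{M<a\}$ toward the conditional law of the limit given $\{D^\T D<a\}$. This rests on the limiting distribution placing no mass on the boundary $\{D^\T D=a\}$ and on the asymptotic independence of $r_N$ from $\htx$ --- precisely the device underlying \cite{LD2018}'s analysis, which I would reuse (presumably packaged as one of the lemmas in Section \ref{sec:lemmas}). Everything downstream --- the conditional consistency of $N\hses^2$, the algebraic rewriting, and the peakedness comparisons --- is routine given the results already stated.
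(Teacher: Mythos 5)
Your proposal is correct and follows essentially the same route as the paper: the paper also obtains the limits by combining the conditional CLT for $\sqrtn(\hts-\tau)$ given $\phim$ (Lemma \ref{lem:cond}, itself resting on the \cite{LD2018}-style device you describe) with the conditional consistency of $N\hses^2$ (Lemma \ref{lem:se}) via Slutsky, and then gets the peakedness claims from $\ml \succeq \epsilon$ and $\kappa_*\leq 1$ exactly as you do (Lemmas \ref{lem:peak_sum} and \ref{lem:tdl}). The only difference is cosmetic: you spell out the projection decomposition and the elementary fact that conditioning on events of nondegenerate limiting probability preserves $o_{\pr}(1)$, whereas the paper delegates these steps to the stated lemmas.
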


Proposition \ref{prop:htpt_studentized_sampling dist} below parallels Proposition \ref{prop:htpt_studentized_cond}, and states the asymptotic unconditional distributions of the studentized statistics.

\begin{proposition}\label{prop:htpt_studentized_sampling dist}
\precreapp. 
As $N\to \infty$, we have 
\begina
\frac{\hts -\tau}{\hses} & \rs & \kappa_*^{1/2} \ep 
\ \  \succeq \ \  \ep\quad(\nfl), \\
\frac{\htptf-\tau}{\hseptf} &\rs& 
\mix\beginp
\kappa_\neyman^{1/2} \left\{ \rho_\nm ^{1/2}  \epsilon +  \left(1 -  \rho_\nm \right)^{1/2}    \ml \right\}&:&\pi_a\\
\kappa_\fisher^{1/2} \left\{ \rho_\fisher  ^{1/2}  \epsilon +  \left(1 -  \rho_\fisher\right)^{1/2}    \ml'  \right\}&:& 1-\pi_a
\endp,\\\\
\frac{\htptl-\tau}{\hseptl} &\rs& 
\mix\beginp
\kappa_\neyman^{1/2} \left\{ \rho_\nm ^{1/2}  \epsilon +  \left(1 -  \rho_\nm \right)^{1/2}    \ml \right\}&:&\pi_a\\
\kappa_\lin^{1/2}  \epsilon  &:&1-\pi_a 
\endp \ \ \succeq \ \ \ep.
\enda
Further assume Condition \ref{cond:ct}. We have 
\begina
\frac{\hts -\tau}{\hses} & \rs & \ep \quad (\nfl), \\
\frac{\htpts-\tau}{\hsepts} &\rs& 
\mix\beginp
 \rho_\nm ^{1/2}  \epsilon +  \left(1 -  \rho_\nm \right)^{1/2}    \ml&:&\pi_a\\
  \epsilon &:& 1-\pi_a
\endp
 \ \ \succeq \ \  \ep\quad (\fl).
\enda
\end{proposition}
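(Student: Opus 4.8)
The plan is to obtain the unconditional limits by combining the conditional results of Proposition \ref{prop:htpt_studentized_cond} with the law of total probability, after first disposing of the three non-preliminary-test statistics directly. For $\hts/\hses$ with $\nfl$, Lemma \ref{lem:cre}(i) gives $\rtn(\hts-\tau)\rs\mn(0,v_*)$, while Lemma \ref{lem:cre}(ii) gives $N\hse_*^2 = v_*+S^2_{\tau,*}+\op \toinp v_*+S^2_{\tau,*}$, a positive limit. Slutsky's theorem then yields $\hts/\hses = \rtn(\hts-\tau)/(N\hse_*^2)^{1/2}\rs\mn\big(0,\,v_*/(v_*+S^2_{\tau,*})\big)=\mn(0,\kappa_*)=\kappa_*^{1/2}\ep$, and since $\kappa_*\le 1$ with both variables centered normal, the remark following Definition \ref{def:peakedness} gives $\kappa_*^{1/2}\ep\succeq\ep$.

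Next, for the preliminary-test statistics I would use $\htpts/\hsepts=\htn/\hsen$ on $\{\phim=1\}$ and $\htpts/\hsepts=\hts/\hses$ on $\{\phim=0\}$ for $\fl$, together with $\pr(\phim=1)\to\pi_a\in(0,1)$ from \cite{LD2018}. Writing the distribution function of $\htpts/\hsepts$ as the convex combination $\pr(\phim=1)\,\pr(\htpts/\hsepts\le t\mid\phim=1)+\pr(\phim=0)\,\pr(\htpts/\hsepts\le t\mid\phim=0)$ and sending $N\to\infty$, the conditional limits in Proposition \ref{prop:htpt_studentized_cond}(i)--(ii) identify the unconditional limit of $\htptf/\hseptf$ as the mixture of $\kappa_\nm^{1/2}\{\rho_\nm^{1/2}\ep+(1-\rho_\nm)^{1/2}\ml\}$ and $\kappa_\fisher^{1/2}\{\rho_\fisher^{1/2}\ep+(1-\rho_\fisher)^{1/2}\ml'\}$ with weights $(\pi_a,1-\pi_a)$; for $\htptl/\hseptl$ the second component collapses to $\kappa_\lin^{1/2}\ep$ because $\rho_\lin=1$ kills the coefficient of $\ml'$. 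Each limiting component is a convolution involving the absolutely continuous $\ep$, hence has a continuous distribution function, so the convergence of the mixed distribution functions holds at every $t$.

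For the peakedness assertion $\htptl/\hseptl\succeq\ep$, the first mixture component is more peaked than $\ep$ by Proposition \ref{prop:htpt_studentized_cond}(i) with $*=\nm$, and the second satisfies $\kappa_\lin^{1/2}\ep\succeq\ep$ since $\kappa_\lin\le1$; hence for the symmetric interval $\mq$, $\pr(\text{mixture}\in\mq)=\pi_a\pr(\text{first component}\in\mq)+(1-\pi_a)\pr(\text{second component}\in\mq)\ge\pr(\ep\in\mq)$, giving the claim. Finally, under Condition \ref{cond:ct} I substitute $\kappa_*=1$ for $\nfl$ and $\rho_\fisher=\rho_\lin=1$ into the displays above: $\hts/\hses\rs\ep$ for $\nfl$, and the $\{\phim=0\}$ component of both $\htptf/\hseptf$ and $\htptl/\hseptl$ becomes $\ep$, so both reduce to the common mixture of $\rho_\nm^{1/2}\ep+(1-\rho_\nm)^{1/2}\ml$ (weight $\pi_a$) and $\ep$ (weight $1-\pi_a$), which is $\succeq\ep$ by the same convexity argument because $\rho_\nm^{1/2}\ep+(1-\rho_\nm)^{1/2}\ml\succeq\ep$.

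The main obstacle is the rigorous passage from conditional to unconditional convergence: one must confirm that $\pr(\phim=1)$ stays bounded away from $0$ and $1$ so both conditional probabilities are well defined for all large $N$ and both components genuinely contribute, and that the limiting conditional laws are continuous so pointwise convergence of the mixed distribution functions follows; everything else is routine Slutsky and peakedness bookkeeping.
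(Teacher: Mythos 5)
Your proposal is correct and follows essentially the paper's own route: the paper obtains the limits by Slutsky's theorem from Proposition \ref{prop:htpt} and Lemma \ref{lem:se} and the peakedness claims from \eqref{eq:peak}, Lemma \ref{lem:peak_sum}, and Lemma \ref{lem:peak_mix}, whereas you mix the already-studentized conditional limits of Proposition \ref{prop:htpt_studentized_cond} with weights $(\pi_a, 1-\pi_a)$ — an equivalent reordering of the same ingredients, and a reasonable way to make the Slutsky step explicit given that the standard error switches between $\hsen$ and $\hses$ with $\phim$. Your central-interval argument for $\htptl/\hseptl \succeq_\infty \ep$ is exactly Lemma \ref{lem:peak_mix} in dimension one, so nothing is missing.
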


\subsection{Randomization distributions of the test statistics}
Recall the definitions of $(S_{0,*}^2, S_{1,*}^2)$ from Section \ref{sec:notation_app} for $\nfl$. Let 
\begina
\tvn  =e_1^{-1} S_{0,\neyman}^2 + e_0^{-1}S_{1,\neyman}^2 + \tau^2, \quad \tvl   =  e_1^{-1} S_{0, \fisher}^2 +e_0^{-1}S_{1, \fisher}^2 + \tau^2
\enda with   
$\tilde \rho_\nm = \tvl   / \tvn  \leq 1$. 
Proposition \ref{prop:htpt_rand_cond} below states the asymptotic randomization distributions of $\htpts$ and $\htpts/\hsepts \ (\fl)$ conditional on $\phim$.
The result implies Theorem \ref{thm:frt_cond} in the main paper. 

Let  $T^\pi$  denote a random variable from the randomization distribution of $T $ conditioning on $Z$. 
\begin{proposition}\label{prop:htpt_rand_cond}
\prefrt. 
\begine[(i)]
\item Conditioning on $\{\phim = 1\}$, we have 
\begina
\htpts^\pi =\htn^\pi, \quad \left(\frac{\htpts}{\hsepts}\right)^\pi = \left(\frac{\htn}{\hsen}\right)^\pi \quad (\fl),
\enda
with 
\begina
\sqrtn \htpts^\pi \mid \{\phim = 1\} &\rs& 
\tvl ^{1/2}  \epsilon +  \left(\tvn -  \tvl \right)^{1/2}    \ml, \\
\left. \left(\frac{\htpts}{\hsepts}\right)^\pi \ \right| \  \{\phim = 1\} &\rs& 
\trn ^{1/2}  \epsilon +  \left(1 -  \trn \right)^{1/2}    \ml 
\enda
 for almost all sequences of $Z$. 
\item Conditioning on $\{\phim = 0\}$, we have 
\begina
\htpts^\pi =\hts^\pi, \quad \left(\frac{\htpts}{\hsepts}\right)^\pi = \left(\frac{\hts}{\hses}\right)^\pi \quad (\fl),
\enda
with 
\begina
\sqrtn \htpts^\pi \mid \{\phim = 0\} \ \rs \ 
\tvl ^{1/2} \epsilon,\quad 
\left. \left(\frac{\htpts}{\hsepts}\right)^\pi \ \right| \  \{\phim = 0\} \ \rs \ 
 \epsilon 
\enda
 for almost all sequences of $Z$. 
\ende
\end{proposition}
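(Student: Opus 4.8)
The plan is to reduce the statement to the unconditional randomization-distribution results for $\htn$ and $\hts\ (\fl)$ established in \citet{wd} and \citet{zdfrt} (and re-derived in Section~\ref{sec:frt_app}), and then to reuse the conditioning argument already developed for Proposition~\ref{prop:cond}. First I would invoke the numerical identities that define the preliminary-test procedure inside the {\frt}: for every $\bm z$ with $\phim(\bm z)=1$ one has $M(\bm z)<a$, hence $(\htpts(\bm z),\hsepts(\bm z))=(\htn(\bm z),\hsen(\bm z))\ (\fl)$, and for every $\bm z$ with $\phim(\bm z)=0$ one has $M(\bm z)\ge a$, hence $(\htpts(\bm z),\hsepts(\bm z))=(\hts(\bm z),\hses(\bm z))\ (\fl)$. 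Since the conditional {\frt} permutes $\bm z$ only within $\mz(\phim)$, this gives $\htpts^\pi=\htn^\pi$ and $(\htpts/\hsepts)^\pi=(\htn/\hsen)^\pi$ as random variables on $\mz(\phim=1)$, and $\htpts^\pi=\hts^\pi$, $(\htpts/\hsepts)^\pi=(\hts/\hses)^\pi$ on $\mz(\phim=0)$. It therefore suffices to characterise the conditional randomization distributions of $\htn$ given $\{\phim=1\}$ and of $\hts\ (\fl)$ given $\{\phim=0\}$.

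Second, the workhorse is a joint finite-population central limit theorem, in the spirit of \citet{DingCLT} and \citet{zdfrt}, for the permutation statistics $(\sqrtn\htn^\pi,\sqrtn\htx^\pi)$ over $\bm z\sim\unif(\mz)$ given $Z$: under Conditions~\ref{cond:asym}--\ref{cond:asym_frt} they converge, for almost all sequences of $Z$, to a centred joint Gaussian vector, with $\sqrtn\htx^\pi\rs\mn(0_J,\vx)$ exactly as in Lemma~\ref{lem:cre} (note that $\htx^\pi$ is a function of $(\bm z,X)$ only, so $M^\pi$ and its $\chi^2_J$ limit are unchanged from the sampling picture). The crucial structural point, and the reason the $\{\phim=0\}$ limits carry no truncation, is that under the {\frt} the imputed potential outcomes coincide, $Y_i(0)=Y_i(1)=Y_i$; hence along permutations $\hgf^\pi$ and $\hgl^\pi$ share the common limit $(\sxx)^{-1}S_{xY}$, and the exact identity $\hts^\pi=\htn^\pi-(\hgs^\pi)^\T\htx^\pi\ (\fl)$ of \citet[Proposition~1]{zdfrt}, combined with a direct covariance computation, shows that the randomization covariance of $\sqrtn\hts^\pi$ and $\sqrtn\htx^\pi$ vanishes in the limit for both $*=\fisher$ and $\lin$ --- the randomization-distribution analogue of $c_\fisher=c_\lin=0_J$ in Lemma~\ref{lem:cre}. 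I would also recall from \citet{wd,zdfrt} that, along the randomization, $\sqrtn\htn^\pi\rs\mn(0,\tvn)$, $\sqrtn\hts^\pi\rs\mn(0,\tvl)\ (\fl)$, $\sqrtn\hsen^\pi\toinp\tvn^{1/2}$, and $\sqrtn\hses^\pi\toinp\tvl^{1/2}\ (\fl)$.

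Third, I would condition on the balance event. Writing $M^\pi=(\htx^\pi)^\T\{\cov(\htx^\pi)\}^{-1}\htx^\pi$ and $D=\{\cov(\htx^\pi)\}^{-1/2}\htx^\pi\rs\mn(0_J,I_J)$, conditioning on $\{\phim(\bm z)=1\}$ and on $\{\phim(\bm z)=0\}$ becomes, in the limit, conditioning on $\{D^\T D<a\}$ and on $\{D^\T D\ge a\}$. Since $\sqrtn\hts^\pi$ is asymptotically independent of $\htx^\pi$ for $\fl$, conditioning does not alter its limit $\mn(0,\tvl)$, i.e.\ $\tvl^{1/2}\epsilon$; dividing by $\sqrtn\hses^\pi\toinp\tvl^{1/2}$ gives the $\{\phim=0\}$ statements $\sqrtn\htpts^\pi\mid\{\phim=0\}\rs\tvl^{1/2}\epsilon$ and $(\htpts/\hsepts)^\pi\mid\{\phim=0\}\rs\epsilon$. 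For $\htn^\pi$ given $\{\phim=1\}$, in the Gaussian limit $\sqrtn\htn^\pi=\sqrtn\hts^\pi+(\hgs^\pi)^\T\sqrtn\htx^\pi\ (\fl)$ with the first summand distributed as $\mn(0,\tvl)$ and independent of the second, which is the projection of $D$ onto a fixed direction, scaled so as to carry the residual variance $\tvn-\tvl$; conditioned on $\{D^\T D<a\}$ this projection has exactly the law $(\tvn-\tvl)^{1/2}\ml$ and remains independent of the first summand --- the representation already used in the proof of Proposition~\ref{prop:cond}, going back to \citet{LD2018}. Hence $\sqrtn\htpts^\pi\mid\{\phim=1\}\rs\tvl^{1/2}\epsilon+(\tvn-\tvl)^{1/2}\ml$, and dividing by $\sqrtn\hsen^\pi\toinp\tvn^{1/2}$ together with $\trn=\tvl/\tvn$ gives $(\htpts/\hsepts)^\pi\mid\{\phim=1\}\rs\trn^{1/2}\epsilon+(1-\trn)^{1/2}\ml$. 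The main obstacle is the second step: establishing the joint randomization CLT for $(\sqrtn\htn^\pi,\sqrtn\htx^\pi)$ with the imputed outcomes and verifying the asymptotic orthogonality, which requires controlling the along-permutation fluctuation of $\hgs^\pi$ (a term that is $\op$ uniformly over $\mz$) and invoking the finite-population CLT under the additional Condition~\ref{cond:asym_frt}; once that is in place, the conditioning step is routine given the machinery behind Proposition~\ref{prop:cond}.
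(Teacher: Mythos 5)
Your proposal is correct and follows essentially the same route as the paper: the paper's proof also reduces the conditional randomization distributions to the conditional sampling results (Propositions \ref{prop:cond} and \ref{prop:htpt_studentized_cond}) applied to the pseudo finite population with $Y_i'(0)=Y_i'(1)=Y_i$ from \cite{zdfrt}, where the equality of the imputed potential outcomes forces the pseudo-population analogues of $c_\fisher, c_\lin$ to vanish (so $\tvf=\tvl$ and no truncation appears given $\phim=0$) and the analogues of $\kappa_*,\rho_*$ to equal one. Your more explicit re-derivation via the joint randomization CLT for $(\sqrtn\htn^\pi,\sqrtn\htx^\pi)$, the identity $\hts^\pi=\htn^\pi-(\hgs^\pi)^\T\htx^\pi$, and the \cite{LD2018} conditioning step is just an unpacked version of the same argument.
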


Proposition \ref{prop:htpt_studentized_randomization dist} below parallels Proposition \ref{prop:htpt_studentized_sampling dist}, and states the asymptotic unconditional randomization distributions of $\htpts$ and $\htpts / \hsepts \ (\fl)$. 
The result implies Theorem \ref{thm:frt} in the main paper. 



\begin{proposition}\label{prop:htpt_studentized_randomization dist}
\prefrt. 
As $N\to \infty$, we have
\begina
 \sqrtn \htpts ^\pi &\rs& 
\mix\beginp
\tvl  ^{1/2} \epsilon +  (\tvn - \tvl)^{1/2}  \ml&: & \pi_a\\
\tvl  ^{1/2} \epsilon   &: & 1-\pi_a
\endp\\
\left( \frac{\htpts}{\hsepts}\right)^\pi &\rs& 
\mix\beginp
\tilde\rho_\nm^{1/2}  \epsilon +  \left(1-\tilde \rho_\nm \right) ^{1/2}   \ml &:& \pi_a\\
\epsilon&:& 1-\pi_a
\endp 
\enda  
 for almost all sequences of $Z$ for $\fl$. 
\end{proposition}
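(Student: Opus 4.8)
The plan is to derive Proposition~\ref{prop:htpt_studentized_randomization dist} from its conditional counterpart, Proposition~\ref{prop:htpt_rand_cond}, in exactly the way Proposition~\ref{prop:htpt_studentized_sampling dist} is obtained from the conditional sampling distributions; the only change is that the randomization distribution over all of $\mz$ now plays the role of the unconditional law. Write $\mz(t)=\{\bm z\in\mz:\phim(\bm z)=t\}$ for $t=0,1$, so that $\mz=\mz(0)\sqcup\mz(1)$ and a uniform draw $\bm z$ from $\mz$ coincides with a uniform draw from $\mz(1)$ with probability $w_N:=|\mz(1)|/|\mz|$ and with a uniform draw from $\mz(0)$ with probability $1-w_N$. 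Since $M(\bm z)$ depends on $(\bm z,X)$ only, $w_N$ is a nonrandom function of the finite population (involving neither $Z$ nor $Y$) and equals $\pr(M<a)$; by Lemma~\ref{lem:cre} and Condition~\ref{cond:asym}, $\sqrtn\,\htx\rs\mn(0,\vx)$ with $\vx=\ppinv\sxx$ nonsingular, whence $M=(\sqrtn\,\htx)^\T\vx^{-1}(\sqrtn\,\htx)\rs\chi^2_J$ and $w_N\to\pi_a$.

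Next I would invoke Proposition~\ref{prop:htpt_rand_cond} on each piece, for almost every sequence of $Z$ outside the (null) union of the exceptional sets of its two parts. On $\mz(1)$ one has $\htpts^\pi=\htn^\pi$ and $\htpts^\pi/\hsepts^\pi=\htn^\pi/\hsen^\pi$, with randomization limits $\sqrtn\,\htn^\pi\rs\tvl^{1/2}\epsilon+(\tvn-\tvl)^{1/2}\ml$ and $\htn^\pi/\hsen^\pi\rs\trn^{1/2}\epsilon+(1-\trn)^{1/2}\ml$; on $\mz(0)$ one has $\htpts^\pi=\hts^\pi$ and $\htpts^\pi/\hsepts^\pi=\hts^\pi/\hses^\pi$, with limits $\sqrtn\,\hts^\pi\rs\tvl^{1/2}\epsilon$ and $\hts^\pi/\hses^\pi\rs\epsilon$ for both $*=\fisher$ and $*=\lin$. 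Then for any bounded continuous $g$, the law of total expectation over the partition gives
\[
E_\pi\big[g(\sqrtn\,\htpts^\pi)\big]=w_N\,E_\pi\big[g(\sqrtn\,\htn^\pi)\mid \bm z\in\mz(1)\big]+(1-w_N)\,E_\pi\big[g(\sqrtn\,\hts^\pi)\mid \bm z\in\mz(0)\big],
\]
and since $w_N\to\pi_a$ and both conditional expectations converge by the previous step, the right-hand side tends to $\pi_a\,E\big[g(\tvl^{1/2}\epsilon+(\tvn-\tvl)^{1/2}\ml)\big]+(1-\pi_a)\,E\big[g(\tvl^{1/2}\epsilon)\big]$, which is the expectation of $g$ against the asserted mixture; this is convergence in distribution of $\sqrtn\,\htpts^\pi$. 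The studentized claim follows verbatim with $(\sqrtn\,\htn^\pi,\sqrtn\,\hts^\pi)$ replaced by $(\htn^\pi/\hsen^\pi,\hts^\pi/\hses^\pi)$ and the corresponding limits, using that the limit on $\mz(0)$ is $\epsilon$ for both $*=\fisher$ and $*=\lin$. The displayed special case under Condition~\ref{cond:ct} then follows by substituting the simplifications $\rho_\fisher=1$ and $\kappa_*=1$ recorded after that condition, which collapse the second mixture component to $\epsilon$ and make the two choices of $*$ coincide.

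The computation is light once Proposition~\ref{prop:htpt_rand_cond} is available, so I expect the main obstacle to be the careful handling of the ``for almost all sequences of $Z$'' qualifier: one must ensure the two exceptional null sets from the two parts of Proposition~\ref{prop:htpt_rand_cond} can be merged, so that outside a single null set both conditional randomization limits hold simultaneously — most of the bookkeeping lives here. Two secondary points worth spelling out are that $\tvn-\tvl\ge0$, needed for $(\tvn-\tvl)^{1/2}$ to be well defined, which is immediate from $\trn=\tvl/\tvn\le1$; and that $w_N$ is nonrandom, which is precisely what decouples the mixing weight from the correlated randomization fluctuations of $\htn^\pi$, $\hts^\pi$, and $\htx^\pi$ and makes the law-of-total-expectation step legitimate without any joint-convergence argument.
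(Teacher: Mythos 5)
Your argument is correct, but it takes a genuinely different route from the paper's. The paper proves this proposition by applying the pseudo-finite-population translation of \cite{zdfrt} (the device used to prove Proposition \ref{prop:htpt_rand_cond}) directly to the \emph{unconditional sampling} results, Propositions \ref{prop:htpt} and \ref{prop:htpt_studentized_sampling dist}: replacing the true population by $\{Y_i'(0)=Y_i'(1)=Y_i\}$ turns $v_*$ into $\tilde v_*$ and forces the pseudo counterparts of $\kappa_*$ and $\rho_*$ to equal $1$ and $\tvf=\tvl$, which is what collapses the second mixture component to $\tvl^{1/2}\epsilon$ (respectively $\epsilon$) for both $\fl$. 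You instead assemble the unconditional randomization limit from the \emph{conditional randomization} limits of Proposition \ref{prop:htpt_rand_cond}, via the exact finite-$N$ decomposition of the uniform distribution on $\mz$ over the partition $\mz=\mz(0)\sqcup\mz(1)$, with the key observations that the weight $w_N=|\mz(1)|/|\mz|$ is nonrandom (a function of $X$ alone, since $M(\bm z)$ does not involve $Y$), that $w_N\to\pia$ by Lemma \ref{lem:cre}, and that $\htpts(\bm z)=\htn(\bm z)$ on $\mz(1)$ and $\htpts(\bm z)=\hts(\bm z)$ on $\mz(0)$; the two exceptional null sets merge harmlessly. Both derivations rest ultimately on the same pseudo-population machinery, but yours imports it only through Proposition \ref{prop:htpt_rand_cond} and makes the provenance of the mixing weight explicit, while the paper's is shorter once the translation principle is accepted and needs no separate weight-convergence or bookkeeping step. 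One small blemish: your closing sentence refers to a ``displayed special case under Condition \ref{cond:ct}'' of this proposition, which does not exist (that special case appears in Proposition \ref{prop:htpt_studentized_sampling dist}); moreover the collapse of the second component here has nothing to do with assuming Condition \ref{cond:ct} for the true population---it is already supplied by Proposition \ref{prop:htpt_rand_cond}(ii), reflecting the zero individual effects of the pseudo population---so that sentence should simply be dropped.
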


\section{Corrections for confidence intervals and FRT}\label{sec:correction_app}
\subsection{Preliminary test-specific confidence intervals}
We provide in this subsection the details for constructing preliminary test-specific confidence intervals. 

Recall the asymptotic conditional distributions of $\htptl$ from Proposition \ref{prop:cond}:
\beginy\label{eq:htptl_cond}
\sqrtn (\htptl - \tau) \mid \{M <a\} &=& \sqrtn (\htn - \tau) \mid \{M <a\} 
 \ \ \rs \ \    v_\lin  ^{1/2} \epsilon +  (  v_\nm - v_\lin)^{1/2}  \ml,  \nonumber\\
\sqrtn (\htptl - \tau) \mid \{M \geq a\} &=&\sqrtn (\htl - \tau) \mid \{M \geq a\} \ \ 
 \rs \ \  v_\lin^{1/2} \epsilon.
\endy
Rather than using $\ciptl = \cin$ as the confidence interval when $M<a$, 
we can instead estimate the unknown $\vn$ and $\vl$ in \eqref{eq:htptl_cond}
 using $\hat v_\nm = N\hsen^2$ and $\hat v_\lin = N\hsel^2$, 
respectively,
and construct the confidence interval based on the central $1-\alpha$ quantile range of the plug-in distribution
$
\hat v_\lin  ^{1/2} \epsilon +  (\hat v_\nm -\hat v_\lin)^{1/2}  \ml$
\citep{LD2018}. 
This modification mitigates the overconservativeness of $\ciptl$ both unconditionally and when conditional on $\phim = 1$.
The resulting confidence interval is nevertheless still wider than $\cil$ asymptotically.

We can similarly fix the possible anticonservativeness of $\ciptf$. In particular, let $\hat v_\fisher = N\hsef^2$ be an asymptotically conservative estimator of $v_\fisher$. We can construct the confidence interval based on the central $1-\alpha$ quantile range of 
$
\hat v_\lin  ^{1/2} \epsilon +  (\hat v_\nm -\hat v_\lin)^{1/2}  \ml$ as the plug-in estimate of the asymptotic conditional distribution of $\sqrtn (\htptf-\tau)$ when $M < a$ by Proposition \ref{prop:cond}, 
and based on that of
$
\hat v_\lin  ^{1/2} \epsilon +  (\hat v_\fisher -\hat v_\lin)^{1/2}  \ml'$ as the plug-in estimate of the asymptotic conditional distribution of $\sqrtn (\htptf-\tau)$ when $M \geq a$.

\subsection{Prepivoting for valid FRTs of the weak null hypothesis} 
In brief, prepivoting first transforms a test statistic $T = T(Z, Y, X)$ by its estimated cumulative distribution function $\hat F(\cdot)$ constructed from the data to form a new test statistic $T'(Z, Y, X)= \hat F(T)$,
and then uses $T'(Z, Y, X)$ as the test statistic to compute the $p$-value from a one-sided {\frt} as
\beginy\label{eq:p_prepivot}
\pfrt' = |\mz|^{-1}\sum_{\bm z \in \mz} \mi \big\{ T'(\bm z, Y, X) \geq T'(Z, Y, X) \big\}.
\endy 
It recovers studentization when studentization alone is sufficient, and also accommodates other situations where studentization fails.
The resulting {\frt} simultaneously delivers finite-sample exact inference for $\hf$ and asymptotically valid inference for $\hn$.

In particular, given $(\htpts, \hsepts)$ from the preliminary-test procedure for $\fl$, we can take $T = \sqrtn |\htpts|$ as the initial test statistic, and form the new statistic as
\begina
T' = \hat F_* \left( \sqrtn|\htpts| \right),
\enda
where $\hat F_*$  is the estimated cumulative distribution function of $\sqrtn |\htpts|$ under $\hn$.
Recall from Proposition \ref{prop:htpt} that under $\hn$, $\sqrt N \htpts$ converges in distribution to a mixture of $\vl  ^{1/2} \epsilon +  (\vn - \vl)^{1/2}  \ml $ and
$\vl  ^{1/2} \epsilon +  (\vs - \vl)^{1/2}   \ml' $ with weights $(\pi_a, 1-\pi_a)$.
We can hence estimate $\hat F_*$ as the cumulative distribution function of the mixture of $|\hvl  ^{1/2} \epsilon +  (\hvn - \hvl)^{1/2}  \ml|$ and
$|\hvl  ^{1/2} \epsilon +  (\hvs - \hvl)^{1/2}   \ml'|$ with weights $(\pi_a, 1-\pi_a)$.

Alternatively, we can also start with the studentized $T = |\htpts| / \hsepts$, and define 
\begina
T' = \hat F_* \left( \frac{ |\htpts|}{\hsepts} \right),
\enda
where $\hat F_*$ is the estimated cumulative distribution function of $|\htpts| / \hsepts$ under $\hn$. 
Proposition \ref{prop:htpt_studentized_sampling dist} ensures that under $\hn$, 
\begina
\frac{|\htpts|}{\hsepts} &\rs& 
\mix\beginp
\kappa_\neyman^{1/2} \left| \rho_\nm ^{1/2}  \epsilon +  \left(1 -  \rho_\nm \right)^{1/2}    \ml \right|&:&\pi_a\\
\kappa_*^{1/2} \left| \rho_* ^{1/2}  \epsilon +  \left(1 -  \rho_*\right)^{1/2}    \ml'  \right|&:& 1-\pi_a
\endp.
\enda
The corresponding $\hat F_*$ can be constructed by replacing the unknown  $(\kappa_\nm, \kappa_*)$ and $(\rho_\nm, \rho_* )$ with 
$
\hat \kappa_\nm = \hat \kappa_* = 1$, $\hat\rho_\nm =  \hvl/\hvn$, and $\hat\rho_* =  \hvl/\hvs$, 
respectively.

Let $U\sim \textup{Uniform}(0,1)$ denote a uniform random variable on $[0,1]$. 
Let $\tilde U$ denote a random variable that takes values in $[0,1]$ and satisfies $\pr(\tilde U \leq t) \geq t$ for $t \in [0,1]$. 
We have $\tilde U$ is stochastically dominated by $U$. 
Theorem \ref{thm:prepivot} below follow from \citet[Theorem 1 and Corollary 1]{colin}, and ensure the asymptotic validity of the prepivoted test statistics for testing $\hn$.

\begin{theorem}\label{thm:prepivot}
\prefrt. 
As $N\to \infty$, for $T' = \hat F(T)$ as the prepivoted test statistic constructed from $T \in\{ \sqrtn |\htpts|, |\htpts|/\hsepts:\fl\}$, we have 
\begine[(i)]
\item $T'  \rs  \tilde U$; 
\item $(T')^\pi \rs U$  for almost all sequences of $Z$. 
\ende
This ensures that the $\pfrt'$ in \eqref{eq:p_prepivot} is finite-sample exact for testing $\hf$ and asymptotically valid for testing $\hn$. 
\end{theorem}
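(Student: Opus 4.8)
The plan is to verify the hypotheses of \citet[Theorem 1 and Corollary 1]{colin} for the prepivoted statistic $T' = \hat F(T)$, the two limiting laws on which their framework rests having already been supplied by the preceding propositions. Finite-sample exactness for $\hf$ needs no asymptotics: $T'$ is a legitimate function of $(Z,Y,X)$, so under $\hf$ the imputation of the missing potential outcomes is exactly correct, the randomization distribution of $T'$ used in \eqref{eq:p_prepivot} coincides with its sampling distribution, and $\pfrt'$ is exactly stochastically larger than $\unif(0,1)$. Everything else reduces to the two convergence claims (i) $T' \rs \tilde U$ with $\pr(\tilde U \le t) \ge t$ and (ii) $(T')^\pi \rs U$ for almost all sequences of $Z$; granted these, the argument of \citet{colin} gives $\pfrt' \rs 1 - \tilde U$, whence $\limN \pr(\pfrt' \le \alpha) = \pr(\tilde U \ge 1-\alpha) \le \alpha$ under $\hn$.

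For (ii) I would exploit that the prepivoting function, when re-estimated from the permuted data $(\bm z, Y, X)$, converges to the cumulative distribution function of exactly the limiting \emph{randomization} distribution identified in Proposition \ref{prop:htpt_studentized_randomization dist} (with Proposition \ref{prop:htpt_rand_cond} supplying the conditional analogue). The point is that imputing under $\hf$ sets $Y_i(1) = Y_i(0) = Y_i$, which (a) makes the additive and fully interacted fits asymptotically identical, so $\hvf$ and $\hvl$ evaluated at $\bm z$ share the common limit $\tvl$, (b) gives $\hvn$ evaluated at $\bm z$ the limit $\tvn$, and (c) annihilates every treatment-effect-heterogeneity term; hence the plug-in mixtures underlying $\hat F_*$ collapse onto the laws $\mix\beginp \tvl^{1/2}\ep + (\tvn-\tvl)^{1/2}\ml &:& \pi_a\\ \tvl^{1/2}\ep &:& 1-\pi_a \endp$ and $\mix\beginp \trn^{1/2}\ep + (1-\trn)^{1/2}\ml &:& \pi_a\\ \ep &:& 1-\pi_a \endp$ of Proposition \ref{prop:htpt_studentized_randomization dist}, the same for $\fl$. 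Denoting their common continuous distribution function by $G$, the continuous mapping theorem applied to $T^\pi \rs \mathcal R$ (whose law is $G$) yields $(T')^\pi \rs G(\mathcal R) \sim \unif(0,1)$ for almost all $Z$. The pointwise convergence $\hat F_* \to G$ is upgraded to uniform convergence by P\'olya's theorem, which applies because $G$ is continuous: the limiting mixtures are convolutions of a nondegenerate normal ($\vl > 0$, resp.\ a standard normal after studentization) with a symmetric truncated normal, hence have densities.

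For (i) the same bookkeeping runs along the realized data, where the plug-ins do not collapse. By Proposition \ref{prop:htpt} (resp.\ Proposition \ref{prop:htpt_studentized_sampling dist}), $T \rs |\mathcal S|$ for the explicit mixture $\mathcal S$ of normal-plus-truncated-normal laws, while by Lemma \ref{lem:cre}(ii) the variance estimators are asymptotically conservative, $\hat v_* \to v_* + S_{\tau,*}^2 \ge v_*$. Combining this with $S_{\tau,\lin}^2 \le S_{\tau,\nm}^2 = S_{\tau,\fisher}^2$ (since $\tau_{i,\lin}$ is $\tau_i$ with its least squares projection on $x_i$ removed), one checks that in each component of the plug-in mixture both the $\ep$-coefficient and the $\ml$- or $\ml'$-coefficient exceed those of the corresponding component of $\mathcal S$; since widening the scale of an independent symmetric summand enlarges the absolute value stochastically, $\hat F_*$ converges to the distribution function $F^\infty$ of a law that is stochastically larger in absolute value than $|\mathcal S|$. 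Hence $T' = \hat F_*(T) \rs F^\infty(|\mathcal S|) =: \tilde U$, and $\pr(\tilde U \le t) = \pr\{|\mathcal S| \le (F^\infty)^{-1}(t)\} \ge t$, as required.

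The main obstacle is the permutation-side step in (ii): establishing, uniformly over a probability-one set of realized $Z$, that the re-estimated $\hvf, \hvl, \hvn$ (jointly with the balance indicator $\phim$) converge to the randomization-limit parameters $\tvl, \tvl, \tvn$. This is where the sharp-null collapse of the two residualizations must be combined with the asymptotics of the Eicker--Huber--White variance estimators under the randomization rather than the sampling measure, leaning on the randomization-distribution central limit theory already behind Propositions \ref{prop:htpt_rand_cond} and \ref{prop:htpt_studentized_randomization dist}; the remaining ingredients — the sampling-side limits, the conservativeness bound, the component-by-component peakedness comparison, and the P\'olya and continuous-mapping steps — are routine. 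It is precisely this matching of $\hat F_*$ to the randomization limit, available here but, by Theorem \ref{thm:frt}, unavailable to the raw studentized $\htpts/\hsepts \ (\fl)$, that prepivoting supplies.
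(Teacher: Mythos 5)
Your overall route is the same as the paper's: the paper establishes this theorem only by invoking \citet[Theorem 1 and Corollary 1]{colin}, with the distributional inputs supplied by Propositions \ref{prop:htpt}, \ref{prop:htpt_studentized_sampling dist}, \ref{prop:htpt_rand_cond}, and \ref{prop:htpt_studentized_randomization dist}, and your treatment of finite-sample exactness and of part (ii) --- the sharp-null pseudo-population collapse giving $\tvf = \tvl$ and vanishing heterogeneity terms, convergence of the re-estimated $\hat F_*$ to exactly the limiting randomization law, then P\'olya's theorem plus continuous mapping --- is sound and mirrors the pseudo-population argument used in the paper's proof of Proposition \ref{prop:htpt_rand_cond}.

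There is, however, a genuine gap in your part (i) for $* = \fisher$. You deduce the required stochastic dominance of the plug-in limit over $|\mathcal S|$ component by component from the fact that both coefficients increase, ``since widening the scale of an independent symmetric summand enlarges the absolute value stochastically.'' That principle is Anderson's lemma and requires the summand held fixed to be symmetric \emph{unimodal}. For the first mixture component this is harmless ($\ml$, the inside-truncated marginal, is unimodal, and Lemma \ref{lem:peak_sum}(ii) with the common normal summand finishes it), but in the second component the fixed summand is $\ml'$, the marginal of a normal truncated \emph{outside} the ball, which is in general not unimodal. Concretely, writing the plug-in component as $\vl^{1/2}\ep + \{S_{\tau,\lin}\ep'' + c'\ml'\}$ and applying Lemma \ref{lem:peak_sum}(ii) with the common normal $\vl^{1/2}\ep$ reduces your claim to $c\,\ml' \succeq S_{\tau,\lin}\ep'' + c'\ml'$, which the paper's lemmas do not provide and which is false in general (for $J=1$, $\ml'$ puts no mass in $(-\sqrt a, \sqrt a)$, so the left side has zero probability of small absolute values while the right side does not). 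This is precisely the structural phenomenon behind the anticonservativeness of $\ciptf$ in Theorems \ref{thm:coverage} and \ref{thm:coverage_conditional} --- convolutions with $\ml'$ do not obey the convenient peakedness comparisons --- so it cannot be waved through: the $\fisher$ case of (i) needs a genuine argument (a direct comparison of the two mixture distribution functions, or an appeal to the specific machinery of \citet{colin}) before $T' \rs \tilde U$ is established for $T \in \{\sqrtn|\htptf|,\ |\htptf|/\hseptf\}$. For $* = \lin$ your argument does go through, since there the second component involves only normal terms.
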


The same discussion extends to the conditional {\frt} with minimal modification, such that prepivoting ensures the asymptotic validity of 
\begina
\pfrt'(\phim) = |\mz(\phim) |^{-1}\sum_{\bm z \in \mz(\phim)} \mi \big\{ T'(\bm z, Y, X) \geq T'(Z, Y, X) \big\}.
\enda 
 for testing $\hn$ conditioning on $\phim$.

\section{Lemmas}\label{sec:lemmas}
\subsection{Conditional asymptotics}

Let $\phi(B, C)$ be a binary covariate balance indicator function, where $\phi(\cdot, \cdot)$ is a binary indicator function and $(B, C)$ are two statistics computed from the data.  
Lemma \ref{lem:weak_convergence} below is a generalization of \citet[][Proposition A1]{LD2018},  and gives the asymptotic joint distribution of arbitrary random elements conditioning on $\phi(B, C)$.

\begin{condition}\label{cond:A1}
The binary indicator function $\phi(\cdot , \cdot )$ satisfies:
(i) $\phi(\cdot , \cdot )$ is almost surely continuous;
(ii) for $u \sim \mN(0_J, \Sigma )$,  we have $\pr\{\phi(u,   \Sigma ) = 1\}\in (0,1)$ for all $  \Sigma > 0$, and $\cov\{u \mid  \phi (u,   \Sigma ) = t\}$ is a continuous function of
$ \Sigma$ for $t = 0,1$. 
\end{condition}

\begin{lemma}\label{lem:weak_convergence}
Assume that $\phi$ is a binary indicator function that satisfies Condition \ref{cond:A1}. 
For a sequence of random elements $(A_N, B_N, C_N)_{N=1}^\infty$ that satisfies $(A_N, B_N, C_N) \rightsquigarrow (A, B  ,C )$ as $N\to \infty$, we have
\begina
(A_N, B_N) 
 \mid   \{\phi(B_N , C_N) = t \} \ \rs \ 
(
A , B )
 \mid    \{\phi(B , C) = t \}
\enda
for $t = 0,1$
in the sense that, for any continuity set $\mathcal{S}$ of $(A , B ) \mid \{\phi(B , C) = t\}$, 
\begina
\pr\{ (A_N, B_N) \in \mathcal{S} \mid \phi(B_N , C_N) = t\} = 
\pr\{ (A , B ) \in \mathcal{S} \mid \phi(B , C) = t \} + o(1).
\enda
\end{lemma}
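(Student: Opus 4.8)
The plan is to reduce the statement to the elementary principle that weak convergence is preserved under conditioning on a $\{0,1\}$-valued coordinate whose limiting value has positive probability. Write $\Phi_N = \phi(B_N, C_N)$ and $\Phi = \phi(B, C)$, both taking values in $\{0,1\}$. The first step is to upgrade the hypothesis $(A_N, B_N, C_N) \rs (A, B, C)$ to the joint convergence $(A_N, B_N, \Phi_N) \rs (A, B, \Phi)$ via the extended continuous mapping theorem applied to $h:(a,b,c)\mapsto(a,b,\phi(b,c))$. The map $h$ is continuous at every point $(a,b,c)$ at which $\phi$ is continuous at $(b,c)$, so its discontinuity set is contained in a product set of the form (everything in the $a$-coordinate)$\times D_\phi$, where $D_\phi$ is the discontinuity set of $\phi$; Condition \ref{cond:A1}(i) guarantees that $D_\phi$ is null under the law of $(B,C)$, hence the discontinuity set of $h$ is null under the law of the limit $(A,B,C)$, and the extended continuous mapping theorem applies (with $A_N$ ranging over a separable metric space, as is standard).

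Next I would handle the two pieces of the conditional probability separately. For the denominator: since $\Phi_N$ and $\Phi$ are $\{0,1\}$-valued, the convergence $\Phi_N \rs \Phi$ is exactly $\pr(\Phi_N = t) \to \pr(\Phi = t)$ for $t=0,1$, and Condition \ref{cond:A1}(ii) forces $\pr(\Phi = t)\in(0,1)$; consequently $\pr(\Phi_N = t)$ is eventually bounded away from $0$ and $1$ and converges to $\pr(\Phi = t)$. For the numerator: fix a continuity set $\mathcal{S}$ of the conditional law of $(A,B)$ given $\{\Phi = t\}$, i.e. $\pr\{(A,B)\in\partial\mathcal{S}\mid\Phi=t\}=0$. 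Because $\pr(\Phi=t)>0$, this is equivalent to $\pr\{(A,B)\in\partial\mathcal{S},\ \Phi=t\}=0$, and since the third coordinate is discrete this says precisely that $\mathcal{S}\times\{t\}$ is a continuity set of the joint limit $(A,B,\Phi)$. Applying the portmanteau theorem to the convergence from the first step gives $\pr\{(A_N,B_N)\in\mathcal{S},\ \Phi_N=t\}\to\pr\{(A,B)\in\mathcal{S},\ \Phi=t\}$. Writing $\pr\{(A_N,B_N)\in\mathcal{S}\mid\Phi_N=t\}$ as the ratio of this numerator to $\pr(\Phi_N=t)$ and letting $N\to\infty$ yields convergence to $\pr\{(A,B)\in\mathcal{S}\mid\Phi=t\}$, which is the claim.

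The step I expect to be the main obstacle is the continuity-set bookkeeping in the second paragraph: one must be careful that ``$\mathcal{S}$ is a continuity set of the conditional law'' genuinely promotes to ``$\mathcal{S}\times\{t\}$ is a continuity set of the unconditional joint limit,'' and this is exactly where the non-degeneracy $\pr(\Phi=t)\in(0,1)$ supplied by Condition \ref{cond:A1}(ii) is indispensable. The remaining delicate point is that $\phi$ is only almost surely continuous rather than everywhere continuous, so one cannot invoke the ordinary continuous mapping theorem and must instead check the hypotheses of its extended version against the limit law; the requirement in Condition \ref{cond:A1}(ii) that $\Sigma\mapsto\cov\{u\mid\phi(u,\Sigma)=t\}$ be continuous ensures that the limiting conditional law is well-defined and depends continuously on the covariance of the Gaussian component, which is what makes the conclusion meaningful and is what one uses when the lemma is later applied with an estimated covariance matrix inside $\phi$. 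This argument is a direct generalization of the one behind Proposition~A1 of \cite{LD2018}, and beyond the portmanteau and extended continuous mapping theorems no further tools are required.
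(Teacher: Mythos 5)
Your argument is correct and is essentially the one the paper relies on: the paper states this lemma as a direct generalization of Proposition A1 of \cite{LD2018} without reproducing a proof, and the underlying argument is exactly your combination of the almost-surely-continuous mapping theorem applied to $(a,b,c)\mapsto(a,b,\phi(b,c))$ with the portmanteau theorem for the joint law of $(A_N,B_N,\phi(B_N,C_N))$, followed by division by $\pr\{\phi(B_N,C_N)=t\}\to\pr\{\phi(B,C)=t\}\in(0,1)$. The only step worth tightening is the claim that $\mathcal S\times\{t\}$ is a continuity set of the joint limit: if the indicator coordinate is viewed as real-valued, the boundary of $\mathcal S\times\{t\}$ contains $\bar{\mathcal S}\times\{t\}$, so you should either regard $\phi$ as taking values in the two-point space with the discrete topology (where $\{t\}$ is clopen and your claim is literally true) or replace $\{t\}$ by the interval $(t-1/2,\,t+1/2)$, after which the bookkeeping goes through exactly as you intend.
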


%
%

\begin{lemma}\label{lem:cond}
\precreapp. 
\begine[(i)]
\item Conditioning on $\{\phim = 1\}$,  we have 
\begina
\rtn  ( \hts - \tau)  \mid \left\{\phim = 1 \right\}&\rs&    \vl  ^{1/2} \epsilon +  (v_*-\vl)^{1/2} \ml \quad(\nfl).
\enda
\item Conditioning on $\{\phim = 0\}$,  we have  
\begina
\rtn  ( \hts - \tau)  \mid \left\{\phim = 0 \right\}&\rs&   \vl  ^{1/2} \epsilon +  (v_*-\vl)^{1/2} \ml' \quad(\nfl). 
\enda
\ende
\end{lemma}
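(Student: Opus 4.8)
The plan is to write the balance indicator $\phim$ as an almost-surely continuous functional of the jointly asymptotically Gaussian vector $\rtn(\hts-\tau,\htx)$, transfer the unconditional joint central limit theorem to a conditional one through Lemma \ref{lem:weak_convergence}, and then identify the conditional limit by a Gaussian projection argument run uniformly over $\nfl$. By Lemma \ref{lem:cre}, under Condition \ref{cond:asym},
\[
\rtn\begin{pmatrix}\hts-\tau\\ \htx\end{pmatrix}\ \rs\ \mn\!\left\{0_{J+1},\ \begin{pmatrix}v_* & \cs^\T\\ \cs & \vx\end{pmatrix}\right\}\quad(\nfl),\qquad v_*-\vl=\cs^\T\vxinv\cs,
\]
and moreover $\cov(\htx)=N^{-1}\vx$ exactly, with $\vx=\ppinv\sxx$ having a nonsingular limit. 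Hence $M=(\rtn\htx)^\T\vxinv(\rtn\htx)$ and $\phim=\phi(\rtn\htx,\vx)$ for the functional $\phi(u,\Sigma)=\mi(u^\T\Sigma^{-1}u<a)$.

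I would then check that $\phi$ satisfies Condition \ref{cond:A1}: the map $u\mapsto\mi(u^\T\Sigma^{-1}u<a)$ is discontinuous only on the ellipsoid $\{u^\T\Sigma^{-1}u=a\}$, which is Lebesgue-null and so null for every nondegenerate Gaussian; $\pr(D^\T D<a)\in(0,1)$ for $a\in(0,\infty)$; and $\cov\{u\mid\phi(u,\Sigma)=t\}$ depends continuously on $\Sigma$ since $\Sigma\mapsto\Sigma^{1/2}$ is continuous on the positive-definite cone and the events $\{D^\T D<a\}$, $\{D^\T D\geq a\}$ are rotation invariant. Applying Lemma \ref{lem:weak_convergence} with $A_N=\rtn(\hts-\tau)$, $B_N=\rtn\htx$, and $C_N=\vx$ (deterministic, converging to its limit $C$) gives
\[
\rtn(\hts-\tau)\mid\{\phim=t\}\ \rs\ A\mid\{\phi(B,C)=t\}\qquad(t=0,1),
\]
where $(A,B)$ is the Gaussian limit from the previous display.

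It remains to identify the conditional law of $A$. Regressing the Gaussian $A$ on $B$ gives $A=\cs^\T\vxinv B+R$ with $R\sim\mn(0,\,v_*-\cs^\T\vxinv\cs)=\mn(0,\vl)$ independent of $B$; since $\{\phi(B,C)=t\}$ is $\sigma(B)$-measurable, $R$ stays $\mn(0,\vl)$ and independent of $B$ after conditioning, so I write $R=\vl^{1/2}\ep$. Setting $B=\vx^{1/2}D$ with $D\sim\mn(0_J,I_J)$, we have $\{\phi(B,C)=1\}=\{D^\T D<a\}$ and $\cs^\T\vxinv B=\cs^\T\vx^{-1/2}D=\|\vx^{-1/2}\cs\|\,e^\T D$ for a unit vector $e$ (the term vanishes identically when $\cs=0_J$, in which case $v_*=\vl$ and the claim is immediate). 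By the spherical symmetry of $D$ and the rotation invariance of the constraint, $e^\T D\mid\{D^\T D<a\}$ has the distribution of $D_1\mid\{D^\T D<a\}=\ml$, while $\|\vx^{-1/2}\cs\|^2=\cs^\T\vxinv\cs=v_*-\vl$; likewise $e^\T D\mid\{D^\T D\geq a\}$ has the distribution of $\ml'$. Therefore $A\mid\{\phim=1\}$ is distributed as $\vl^{1/2}\ep+(v_*-\vl)^{1/2}\ml$ and $A\mid\{\phim=0\}$ as $\vl^{1/2}\ep+(v_*-\vl)^{1/2}\ml'$, which are exactly the two asserted displays.

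The one nonroutine step is verifying that $\phim$ meets Condition \ref{cond:A1}, so that Lemma \ref{lem:weak_convergence} legitimately commutes conditioning with weak convergence; the continuity of the conditional covariance in $\Sigma$ is the only point requiring care. The closing projection-and-polar-decomposition computation is routine and is precisely the argument of \cite{LD2018} for $\htn$ --- the novelty here is only that the covariance identity $v_*-\vl=\cs^\T\vxinv\cs$ from Lemma \ref{lem:cre} lets it run simultaneously for all of $\nfl$.
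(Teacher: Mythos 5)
Your proposal is correct and follows the same route as the paper: the paper's proof simply invokes Lemma \ref{lem:weak_convergence} together with the argument of \cite{zdfrt} (which is exactly the joint CLT from Lemma \ref{lem:cre}, the verification of Condition \ref{cond:A1} for $\phi(u,\Sigma)=\mi(u^\T\Sigma^{-1}u<a)$, and the Gaussian projection/rotation-invariance identification you write out), and notes that the $\phim=0$ case is handled identically with $\ml'$ in place of $\ml$. You have merely made explicit the details the paper delegates to that reference, so there is nothing to correct beyond the implicit restriction $a>0$ needed for the conditioning events to have nondegenerate limiting probability.
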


\begin{proof}[Proof of Lemma \ref{lem:cond}]
The results on $\sqrtn (\hts-\tau) \mid \{\phim = 1\}$ 
follow from \cite{zdfrt} based on Lemma \ref{lem:weak_convergence}.
The proof for $\sqrtn (\hts-\tau) \mid \{\phim = 0\} $ is almost identical and hence omitted. 
\end{proof}

Lemma \ref{lem:se} follows from \citet[Lemma A16]{LD2018}, and ensures that $N\hses^2$ converges in probability to $v_* + S_{\tau, *}^2$ given $\phim$. 

\begin{lemma}\label{lem:se}
\precreapp. For all $\epsilon > 0$ and $\nfl$, we have  
\begina
&&\pr\left\{ \left. \left| N\hses^2 - ( \vs + S_{\tau, *}^2)\right| > \epsilon \ \right| \  \phim = 0 \right\} = \op, \\
&&\pr\left\{ \left. \left| N\hses^2 - ( \vs + S_{\tau, *}^2)\right| > \epsilon\ \right| \  \phim = 1 \right\} = \op. 
\enda
\end{lemma}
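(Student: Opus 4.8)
The plan is to bootstrap the conditional statement from the unconditional convergence already recorded in Lemma~\ref{lem:cre}(ii), exploiting only that each conditioning event $\{\phim = 0\}$, $\{\phim = 1\}$ has probability bounded away from $0$ in the limit. Lemma~\ref{lem:cre}(ii) gives $N\hses^2 = \vs + S_{\tau,*}^2 + \op$ unconditionally for $\nfl$, that is, for every $\epsilon > 0$,
\[
\pr\left\{\big|N\hses^2 - (\vs + S_{\tau,*}^2)\big| > \epsilon\right\} \to 0 \qquad \text{as } N \to \infty,
\]
so the entire task is to transfer this to the conditional probabilities.

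First I would establish that $\pr(\phim = 1)$ and $\pr(\phim = 0)$ are bounded away from $0$ for all large $N$. Writing $M = (\sqrtn\htx)^\T\{N\cov(\htx)\}^{-1}(\sqrtn\htx)$ and combining Lemma~\ref{lem:cre}(i) with $N\cov(\htx) = \vx$, the continuous mapping theorem gives $M \rs D^\T D \sim \chi^2_J$ with $D \sim \mN(0_J, I_J)$; since the $\chi^2_J$ distribution function is continuous, $\pr(\phim = 1) = \pr(M < a) \to \pi_a$ and $\pr(\phim = 0) \to 1 - \pi_a$, where $\pi_a = \pr(\chi^2_J < a) \in (0,1)$ because the prespecified threshold satisfies $0 < a < \infty$. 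Hence there exist $N_0$ and $c > 0$ with $\pr(\phim = t) \geq c$ for all $N \geq N_0$ and $t = 0,1$.

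The conclusion then follows from the elementary bound, valid for $N \geq N_0$ and $t = 0,1$,
\[
\pr\left\{\big|N\hses^2 - (\vs + S_{\tau,*}^2)\big| > \epsilon \,\big|\, \phim = t\right\} \le \frac{\pr\left\{\big|N\hses^2 - (\vs + S_{\tau,*}^2)\big| > \epsilon\right\}}{\pr(\phim = t)} \le c^{-1}\,\pr\left\{\big|N\hses^2 - (\vs + S_{\tau,*}^2)\big| > \epsilon\right\},
\]
whose right-hand side tends to $0$ by Lemma~\ref{lem:cre}(ii). An equivalent route, closer to the statement of \citet[Lemma A16]{LD2018}, is to invoke Lemma~\ref{lem:weak_convergence} with $A_N = N\hses^2$, $B_N = \sqrtn\htx$, $C_N = N\cov(\htx)$ and $\phi = \phim$ (which satisfies Condition~\ref{cond:A1}, as used in the proof of Lemma~\ref{lem:cond}): since $A_N$ converges in probability to the \emph{constant} $\vs + S_{\tau,*}^2$ while $(B_N, C_N) \rs (\vx^{1/2}D, \vx)$, the triple converges jointly, and the lemma yields $N\hses^2 \mid \{\phim = t\} \rs \vs + S_{\tau,*}^2$, which is precisely conditional convergence in probability to a constant.

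I do not anticipate a genuine obstacle; the only point needing care is the non-negligibility of the conditioning events, which is exactly where $0 < a < \infty$ (equivalently $\pi_a \in (0,1)$) enters. If one preferred not to lean on the unconditional Lemma~\ref{lem:cre}(ii), the alternative would be to decompose $N\hses^2$ into its Eicker--Huber--White building blocks --- within-arm sample second moments of residualized outcomes, plus the cross term --- and apply a conditional weak law of large numbers for such moments under complete randomization given $\phim$, exactly as in \citet[Lemma A16]{LD2018}; this is strictly more laborious and buys nothing beyond the short argument above.
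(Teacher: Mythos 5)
Your proof is correct and is essentially the paper's own argument: the paper establishes this lemma by citing \citet[Lemma A16]{LD2018}, whose content is exactly your transfer step --- the unconditional consistency $N\hses^2 = \vs + S_{\tau,*}^2 + \op$ from Lemma \ref{lem:cre}(ii) passes to the conditional statement because $\pr(\phim = 1) \to \pi_a$ and $\pr(\phim = 0) \to 1-\pi_a$ are bounded away from zero, via the elementary bound $\pr(A \mid B) \leq \pr(A)/\pr(B)$. The only point worth flagging is that this requires $\pi_a \in (0,1)$, i.e.\ $0 < a < \infty$, which the paper assumes implicitly (cf.\ Condition \ref{cond:A1}), so your proposal matches the intended proof.
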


\subsection{Peakedness}
Lemma \ref{lem:gci} below states the celebrated Gaussian correlation inequality, with the recent breakthrough proof due to \cite{royen}; see also \cite{inbook}. 
\cite{zdrep} used it in developing the theory of rerandomization based on $p$-values.

\begin{lemma}[Gaussian correlation inequality]\label{lem:gci}
Let $\mu$ be an $m$-dimensional Gaussian probability measure on $\mrm$, that is, $\mu$ is a multivariate normal distribution, centered at the origin. Then $\mu(\mc_1 \cap \mc_2) \geq \mu(\mc_1) \mu(\mc_2)$ for all convex sets $\mc_1, \mc_2 \subset \mrm$ that are symmetric about the origin.  
\end{lemma}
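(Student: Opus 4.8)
The plan is to follow Thomas Royen's argument, which to date is the only known proof of this inequality; I would not attempt an independent route. The first move is a reduction: every closed symmetric convex set in $\mrm$ is an intersection of symmetric slabs $\{x : |\langle a, x\rangle| \le 1\}$, and both sides of $\mu(\mc_1 \cap \mc_2) \ge \mu(\mc_1)\mu(\mc_2)$ are continuous under monotone limits of such intersections, so it suffices to treat $\mc_1,\mc_2$ that are each a finite intersection of slabs. Absorbing the linear forms into the coordinates and, if necessary, enlarging the ambient dimension to some $n \ge m$, this becomes the following statement about a single centered Gaussian vector $(X_1, \dots, X_n) \sim \mn(0_n, \Sigma)$: for thresholds $t_1, \dots, t_n > 0$ and any split $1 \le p < n$,
\[
\pr(|X_j| \le t_j,\ 1 \le j \le n) \ \ge\ \pr(|X_j| \le t_j,\ 1 \le j \le p)\ \pr(|X_j| \le t_j,\ p < j \le n).
\]

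Next I would square coordinates, setting $Y_j = X_j^2$, so the events become $\{Y_j \le t_j^2\}$ and the law of $(Y_1, \dots, Y_n)$ is a central multivariate gamma distribution with shape parameter $1/2$ attached to $\Sigma$. Let $\Sigma(\rho) = \rho\,\Sigma + (1-\rho)\,\Sigma_0$, where $\Sigma_0$ is $\Sigma$ with all cross-block entries (one index $\le p$, the other $>p$) set to zero; this is a convex combination of positive semidefinite matrices, hence positive semidefinite, with $\Sigma(1) = \Sigma$ and $\Sigma(0) = \Sigma_0$ block-diagonal. Under $\Sigma_0$ the two coordinate groups are independent, so the right-hand side above equals $\pr_{\Sigma(0)}(Y_j \le t_j^2 \text{ for all } j)$. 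The inequality then follows from showing that $\rho \mapsto \pr_{\Sigma(\rho)}(Y_j \le t_j^2 \text{ for all } j)$ is nondecreasing on $[0,1]$ and integrating from $\rho = 0$ to $\rho = 1$.

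The genuinely hard step is that monotonicity. Royen establishes it by expressing the box probability of the multivariate gamma law through an absolutely convergent series in the off-diagonal entries of the associated matrix, differentiating in $\rho$ term by term, and showing that the resulting series has nonnegative coefficients, because the relevant mixed derivatives of products of one-dimensional gamma densities combine with the signs from a determinant/permanent-type expansion to yield only positive contributions; justifying term-by-term differentiation and the positivity of every term is the delicate analytic core, and the degenerate case is absorbed by a further limiting argument. I expect essentially all of the work to live there, and in a paper of this scope one simply cites \cite{royen} (see also \cite{inbook}) rather than reproducing the argument.
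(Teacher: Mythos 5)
Your proposal matches the paper's treatment: the paper does not prove this lemma but simply cites Royen's breakthrough proof (see also the Latała--Matlak exposition), exactly as you conclude one should. Your outline of Royen's argument --- reduction to finite intersections of symmetric slabs, passage to squared coordinates and the multivariate gamma law, interpolation of the covariance between the block-diagonal and full matrices, and the monotonicity of the box probability in the interpolation parameter --- is an accurate sketch of the cited proof, so there is nothing to correct.
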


%


Lemma \ref{lem:peak_sum} below reviews two classical results in probability for comparing peakedness between random vectors.
\cite{AOS} and \cite{zdrep} used them before. 
The proofs follow from \citet[][Lemma 7.2 and Theorem 7.5]{dharmadhikari1988}.

\begin{lemma}\label{lem:peak_sum}
\begine[(i)]
\item 	If two $m\times 1$ symmetric random vectors $A$ and $B$ satisfy $A \succeq B$, then $CA \succeq CB$ for any matrix $C$ with compatible dimensions.
\item 
	Let $A$, $B_1$, and $B_2$ be three independent $m\times 1$   symmetric random vectors. If $A$ is  normal and $B_1 \succeq B_2$, then $A+B_1 \succeq A+B_2$. 
\ende
\end{lemma}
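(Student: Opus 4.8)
The plan is to treat the two parts separately, working directly from Definition \ref{def:peakedness}. For part (i), fix a symmetric convex set $\mc$ in the codomain of $C$. The key identity is $\pr(CA \in \mc) = \pr(A \in C^{-1}\mc)$, where the linear preimage $C^{-1}\mc = \{x : Cx \in \mc\}$ is again convex (linear preimages of convex sets are convex) and again symmetric about the origin (if $Cx \in \mc$ then $C(-x) = -Cx \in \mc$). Hence $C^{-1}\mc$ is an admissible test set in Definition \ref{def:peakedness}, and $A \succeq B$ applied to it gives $\pr(A \in C^{-1}\mc) \geq \pr(B \in C^{-1}\mc)$, i.e.\ $\pr(CA \in \mc) \geq \pr(CB \in \mc)$; since $\mc$ was arbitrary, $CA \succeq CB$. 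No appeal to normality is made here.

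For part (ii), fix a symmetric convex set $\mc \subseteq \mrm$ and let $\mu$ be the centered Gaussian law of $A$. Set $g(b) = \mu(\mc - b) = \pr(A + b \in \mc)$; conditioning on $B_j$ and using independence of $A$ from $B_1,B_2$, one gets $\pr(A + B_j \in \mc) = \E\{g(B_j)\}$ for $j = 1,2$, so it suffices to show $\E\{g(B_1)\} \geq \E\{g(B_2)\}$. I would verify two features of $g$: (a) $g$ is even, since $g(-b) = \mu(\mc + b) = \mu(-\mc - b) = \mu(\mc - b) = g(b)$ using $-\mc = \mc$ and $\mu(-E) = \mu(E)$; and (b) $g$ is quasi-concave, i.e.\ each superlevel set $L_t = \{b : g(b) \geq t\}$ is convex. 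Property (b) is where normality enters: because $\mu$ has a log-concave density and $\mc$ is convex, the Pr\'ekopa--Leindler (functional Brunn--Minkowski) inequality makes $b \mapsto \mu(\mc - b)$ log-concave, hence quasi-concave. Together (a) and (b) show that every $L_t$ is both symmetric and convex.

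To finish, use the layer-cake formula $\E\{g(B_j)\} = \int_0^1 \pr(g(B_j) \geq t)\, dt = \int_0^1 \pr(B_j \in L_t)\, dt$, valid since $0 \leq g \leq 1$. For each $t$ the set $L_t$ is symmetric convex, so $B_1 \succeq B_2$ gives $\pr(B_1 \in L_t) \geq \pr(B_2 \in L_t)$; integrating in $t$ yields $\E\{g(B_1)\} \geq \E\{g(B_2)\}$, and since $\mc$ was arbitrary, $A + B_1 \succeq A + B_2$. The one genuinely non-elementary ingredient — and the step I expect to be the main obstacle — is the quasi-concavity of $g$: it is a Brunn--Minkowski-type statement, reducing in one dimension to classical unimodality of the Gaussian density but requiring the full log-concavity of the Gaussian measure in general dimension. (Alternatively one can simply invoke Dharmadhikari and Joag-Dev, as the lead-in text does; the substantive input is precisely this log-concavity fact.)
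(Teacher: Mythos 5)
Your proof is correct, but it takes a genuinely different route from the paper: the paper does not argue at all, it simply cites \citet[Lemma 7.2 and Theorem 7.5]{dharmadhikari1988} for both parts, whereas you give a self-contained derivation. Part (i) via the linear preimage $C^{-1}\mc$ is exactly the elementary argument one would expect and needs no further comment. Part (ii) is where you add real content: writing $\pr(A+B_j\in\mc)=\E\{g(B_j)\}$ with $g(b)=\pr(A+b\in\mc)=(\mathbf 1_{\mc}*\phi)(b)$, invoking Pr\'ekopa--Leindler to get log-concavity (hence quasi-concavity) of $g$, noting evenness, and then applying $B_1\succeq B_2$ levelwise through the layer-cake formula is essentially a proof of the Anderson/Sherman-type result that the cited Theorem 7.5 packages; indeed your argument proves the slightly more general statement with $A$ any centrally symmetric log-concave vector, normality being used only through log-concavity of the Gaussian density. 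You also correctly avoid the Gaussian correlation inequality, which the paper needs elsewhere (Lemma \ref{lem:tdl}) but not here. What the paper's citation buys is brevity; what your argument buys is transparency about the one substantive ingredient. Two cosmetic points you could tighten: since $g$ is a convolution with a Gaussian it is continuous, so the superlevel sets $L_t$ are closed, symmetric, convex and Borel, which is what Definition \ref{def:peakedness} requires; and the layer-cake identity with $\{g\geq t\}$ versus $\{g>t\}$ is harmless because the two integrands differ for at most countably many $t$. Neither affects correctness.
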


Lemma \ref{lem:tdl} below clarifies the relative peakedness of $\ep$, $\ml$, and $\ml'$. 
\begin{lemma}\label{lem:tdl}
$\ml \succeq \epsilon \succeq \ml' $. 
\end{lemma}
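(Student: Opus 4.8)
The plan is to reduce the claim to a one-line application of the Gaussian correlation inequality (Lemma \ref{lem:gci}). Recall from the discussion following Definition \ref{def:peakedness} that for univariate symmetric random variables, $A \succeq B$ is equivalent to $\pr(|A| \leq t) \geq \pr(|B| \leq t)$ for every $t \geq 0$. First I would note that $\mL$, $\epsilon$, and $\ml'$ are all symmetric about the origin: $D_1 \sim \mn(0,1)$ is symmetric, and both conditioning events $\{D^\T D < a\}$ and $\{D^\T D \geq a\}$ are invariant under $D_1 \mapsto -D_1$, so the conditional laws of $D_1$ stay symmetric. Hence it suffices to compare $\pr(|D_1| \leq t \mid D^\T D < a)$, $\pr(|D_1| \leq t)$, and $\pr(|D_1| \leq t \mid D^\T D \geq a)$ for each fixed $t \geq 0$.

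For $\mL \succeq \epsilon$, fix $t \geq 0$ and observe that the slab $\mc_1 = \{u \in \mathbb R^J : |u_1| \leq t\}$ and the ball $\mc_2 = \{u \in \mathbb R^J : u^\T u < a\}$ are both convex and symmetric about the origin. Applying Lemma \ref{lem:gci} to the law of $D \sim \mN(0_J, I_J)$ gives
\[
\pr\big(|D_1| \leq t,\ D^\T D < a\big) \ \geq \ \pr(|D_1| \leq t)\,\pr(D^\T D < a) \ = \ \pr(|D_1| \leq t)\,\pi_a .
\]
Dividing by $\pi_a = \pr(D^\T D < a) \in (0,1)$ yields $\pr(|D_1| \leq t \mid D^\T D < a) \geq \pr(|D_1| \leq t)$, i.e.\ $\pr(|\mL| \leq t) \geq \pr(|\epsilon| \leq t)$ for all $t \geq 0$, which is exactly $\mL \succeq \epsilon$.

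For $\epsilon \succeq \ml'$, I would combine this with the total-probability decomposition
\[
\pr(|D_1| \leq t) \ = \ \pi_a \, \pr(|D_1| \leq t \mid D^\T D < a) \ + \ (1-\pi_a)\, \pr(|D_1| \leq t \mid D^\T D \geq a) .
\]
Since the first conditional probability is at least $\pr(|D_1| \leq t)$ by the previous step and the weights are nonnegative, the second conditional probability is at most $\pr(|D_1| \leq t)$; that is, $\pr(|\ml'| \leq t) \leq \pr(|\epsilon| \leq t)$ for every $t \geq 0$, which is $\epsilon \succeq \ml'$.

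I do not expect a substantial obstacle: the only content is recognizing the conditioning region as a symmetric convex set so that the Gaussian correlation inequality applies. The sole minor point to check is that replacing the strict inequality $D^\T D < a$ by $D^\T D \leq a$ (or the complement) changes nothing, since $\pr(D^\T D = a) = 0$ for the continuous $\chi^2_J$ law, so $\mL$, $\ml'$, and $\pi_a$ are unambiguously defined.
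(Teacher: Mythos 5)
Your proof is correct and rests on the same engine as the paper's: the Gaussian correlation inequality applied to a symmetric convex set intersected with the ball $\{u^\T u < a\}$, with the complement/total-probability step flipping the inequality for $\ml'$. The differences are minor but worth noting. The paper proves the second half at the level of the full vector: it shows $D \succeq A$ for $A \sim D \mid \{D^\T D \geq a\}$ by writing $\pr(D\in\mc,\ D^\T D \geq a) = \pr(D\in\mc) - \pr(D\in\mc,\ D^\T D < a)$ and invoking the \gci, and then projects to the first coordinate via Lemma \ref{lem:peak_sum}(i) with $C = (1, 0_{J-1})^\T$; you instead work directly in one dimension, using the fact that univariate peakedness reduces to comparing probabilities of symmetric intervals (slabs), so no projection lemma is needed. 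Your total-probability decomposition for $\epsilon \succeq \ml'$ is algebraically the same manipulation as the paper's complement computation, just applied to the marginal of $D_1$. You also gain something the paper does not supply: a self-contained proof of $\ml \succeq \epsilon$ (slab versus ball under the \gci), which the paper simply cites from \cite{LD2018}. The only implicit requirements in both arguments are $\pi_a \in (0,1)$ so that both conditioning events have positive probability, which is needed for $\ml$ and $\ml'$ to be well defined in the first place.
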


\begin{proof}[Proof of Lemma \ref{lem:tdl}]
The result on $\ml \succeq \ep$ follows from \cite{LD2018}. We verify below $\epsilon \succeq \ml'$. 

To begin with, Lemma \ref{lem:gci} ensures that 
\begina
\pr\left(D \in \mc, \ D^\T D <a \right) \geq \pr(D \in \mc) \cdot \pr(D^\T D <a ) 
\enda
for all convex sets $\mc \subset \mathbb R^m$ that are symmetric about the origin. 
Let $A \sim D  \mid \{D^\T D \geq a\}$ be a shorthand for the truncated normal random vector. We have
\begina
\pr(A \in \mc) = \pr\left(D \in \mc \mid D^\T D \geq a \right) &=& 
  \frac{\pr\left(D \in \mc, \ D^\T D \geq a \right)}{\pr\left(  D^\T D \geq a \right)}\\
  &=& 
    \frac{\pr\left(D \in \mc \right) - \pr\left(D \in \mc, \ D^\T D <a \right)}{ 1-\pr\left(  D^\T D <a \right)}\\
    &\leq & 
    \frac{\pr\left(D \in \mc \right) -  \pr\left(D \in \mc\right) \cdot\pr\left( D^\T D <a  \right)}{ 1-\pr\left(  D^\T D <a \right)}\\
    &=&\pr(D \in \mc ),
\enda
such that $D \succeq A$. 
Let $C = (1, 0_{J-1})^\T$ with $\ml'  = C A$ and $D_1 = C D  \sim \mn(0,1) \sim \ep$. 
It then follows from Lemma \ref{lem:peak_sum} that $\ep \sim D_1 = CD \succeq CA = \ml' $. 

\end{proof}

Lemma \ref{lem:peak_mix} below states the relative peakedness of two mixture distributions. The proof follows directly from the definition of peakedness and is hence omitted.

\begin{lemma}\label{lem:peak_mix}
	Let $A$, $B$, $A'$, and $B'$ be four $m\times 1$   symmetric random vectors with $A \succeq A'$ and $B \succeq B'$. We have
	\begina
	\mix\beginp
	A &:& \pi\\
	B &:& 1-\pi
	\endp \ \succeq \ \mix\beginp
	A' &:& \pi\\
	B' &:& 1-\pi
	\endp 
\enda 
for all $\pi \in [0,1]$. 
\end{lemma}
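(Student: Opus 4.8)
The plan is to argue directly from Definition \ref{def:peakedness}, exploiting only that the law of a mixture is affine in the mixing weights. First I would record the elementary fact that, for any measurable $\mc \subset \mrm$, the mixture $\mix\beginp A &:& \pi\\ B &:& 1-\pi\endp$ — whose law is $\pi$ times the law of $A$ plus $1-\pi$ times the law of $B$, generalizing the cdf identity $F(t)=\pi F_A(t)+(1-\pi)F_B(t)$ from the notation section — assigns to $\mc$ the probability $\pi\,\pr(A\in\mc)+(1-\pi)\,\pr(B\in\mc)$. I would also note that since $A$ and $B$ are symmetric about the origin, so is their mixture (it has the same law as the mixture of $-A$ and $-B$), and likewise for $A'$ and $B'$; this is what makes Definition \ref{def:peakedness} applicable to both sides of the claimed inequality.

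Then I would fix an arbitrary symmetric convex set $\mc \subset \mrm$ and apply the hypotheses $A \succeq A'$ and $B \succeq B'$ to this $\mc$, obtaining $\pr(A\in\mc)\ge\pr(A'\in\mc)$ and $\pr(B\in\mc)\ge\pr(B'\in\mc)$. Since the weights $\pi$ and $1-\pi$ are nonnegative, these two inequalities add up to
\[
\pi\,\pr(A\in\mc)+(1-\pi)\,\pr(B\in\mc)\ \ge\ \pi\,\pr(A'\in\mc)+(1-\pi)\,\pr(B'\in\mc),
\]
which by the previous paragraph says precisely that the mixture of $A$ and $B$ puts at least as much mass on $\mc$ as the mixture of $A'$ and $B'$. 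As $\mc$ was arbitrary among symmetric convex sets, Definition \ref{def:peakedness} gives the asserted peakedness ordering, and this holds for every $\pi\in[0,1]$.

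There is no genuinely hard step: the whole argument is the observation that ``mass on a fixed set $\mc$'' depends affinely on the mixing weights and monotonically on the summands under $\succeq$, and that a convex combination with nonnegative coefficients preserves inequalities between nonnegative quantities. The only point deserving a word of care — and the reason I phrase the first paragraph the way I do — is checking that the two mixtures are themselves symmetric random vectors, so that the relation $\succeq$ is even defined for them; this is immediate from the symmetry of $A,B,A',B'$. Given how short this is, I would either present just the two-line computation above or, as the paper does, omit it.
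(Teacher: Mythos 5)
Your proof is correct and is precisely the direct-from-definition argument the paper has in mind (the paper states the proof ``follows directly from the definition of peakedness'' and omits it): the mixture's mass on any symmetric convex set $\mc$ is the $\pi$-weighted average of the components' masses, so the two componentwise inequalities combine with nonnegative weights. Your added remark that the mixtures themselves remain symmetric, so that $\succeq$ is well defined for them, is a sensible bit of care but does not change the route.
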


%

\section{Proof of the results on asymptotic distributions and coverage rates}\label{sec:proof_dist_htpt}
\subsection{Point estimators}

\begin{proof}[Proof of Proposition \ref{prop:cond}]
The results follow from Lemma \ref{lem:cond}. 
\end{proof}

\begin{proof}[Proof of Theorem \ref{thm:cond}]  
Recall that 
\begina
\sqrt N(\htl-\tau)   \mid \{\phim = t\}  &\rs& \vl  ^{1/2} \epsilon \quad (t = 0,1)
\enda
by Lemma \ref{lem:cond} and 
\begina
 \sqrt N(\htpts-\tau)   \mid \{\phim = 1\} 
&\rs&    \vl  ^{1/2} \epsilon +   (\vn-\vl)^{1/2}  \ml \quad (\fl),\\
 \sqrt N(\htptf-\tau)   \mid \{\phim = 0\} 
&\rs&    \vl  ^{1/2} \epsilon +   (\vf-\vl)^{1/2}  \ml' 
\enda
by Proposition \ref{prop:cond}.
Lemma \ref{lem:peak_sum} ensures that 
\beginy\label{eq:bridge_2}
\vl  ^{1/2} \epsilon \ \succeq \ \vl  ^{1/2} \epsilon +  (\vn-\vl)^{1/2} \ml , \quad 
\vl  ^{1/2} \epsilon \ \succeq \ \vl  ^{1/2} \epsilon +  (\vf-\vl)^{1/2} \ml'
\endy
 such that $\htl \succi \htpts \ (\fl)$ given $\phim = 1$ and $\htl = \htptl \succi \htptf$ given $\phim = 0$. 
\end{proof}

\begin{proof}[Proof of Proposition \ref{prop:htpt}]
First, 
\beginy\label{eq:cdf}
\pr\left\{\sqrt N(\htpts-\tau) \leq t\right\} 
&=& \pr\left\{\sqrt N(\htpts-\tau) \leq t \mid M < a\right\} \cdot \pr(M < a)\nonumber\\
&& + \pr\left\{\sqrt N(\htpts-\tau)\leq t \mid M \geq a\right\} \cdot \pr(M \geq a). 
\endy
Next, it follows from Lemma \ref{lem:cre} that 
\beginy\label{eq:pia}
\pr(M<a ) = \pi_a + o(1).
\endy 
By \eqref{eq:cdf}--\eqref{eq:pia} and Proposition \ref{prop:cond},  we have
\begina
\limn \pr\left\{\sqrt N(\htpts-\tau) \leq t\right\} 
&=& \pi_a  \cdot   \pr\left\{  \vl  ^{1/2}  \epsilon +  (\vn - \vl)^{1/2}  \ml  \leq t \right\}\\
  && +(1-\pi_a ) \cdot \pr\left\{ \vl  ^{1/2}  \epsilon +  (\vs - \vl)^{1/2} \ml'  \leq t\right\} .
  \enda
\end{proof}

\begin{proof}[Proof of Theorem \ref{thm:htpt}]  
First, 
recall that $
\sqrt N(\htl-\tau) \rs  \vl  ^{1/2} \epsilon$
by Lemma \ref{lem:cre} and 
\begina
 \sqrt N(\htpts-\tau)    
&\rs&    
\mix\beginp
\vl  ^{1/2} \epsilon +   (\vn-\vl)^{1/2}  \ml &:& \pia\\
\vl  ^{1/2} \epsilon +   (\vs-\vl)^{1/2}  \ml' &:& 1-\pia
\endp\quad (\fl)
\enda
by Proposition \ref{prop:htpt}.
 Lemma \ref{lem:peak_mix}  and \eqref{eq:bridge_2} together ensure 
\begina
\vl^{1/2} \ep \  \succeq \ \mix\beginp
\vl  ^{1/2} \epsilon +   (\vn-\vl)^{1/2}  \ml &:& \pia\\
\vl  ^{1/2} \epsilon +   (\vs-\vl)^{1/2}  \ml' &:& 1-\pia
\endp\quad (\fl),
\enda
such that $\htl \succi \htpts$ for $\fl$.

Next, recall that $\ml \sim D_1 \mid \{D^\T D <a\}$, where $D_1 \sim \mn(0,1)$ is the first element of $D \sim \mn(0_J, I_J)$. 
Assume without loss of generality that $D$ is independent $\epsilon$, such that 
$
\vl ^{1/2}  \ep + (\vn-\vl)^{1/2}  D_1 \sim \vn ^{1/2} \ep
$. 
It then follows from Lemmas \ref{lem:peak_sum} and \ref{lem:tdl} that $\ml \succeq D_1$ and hence 
\beginy\label{eq:htptl}
\vl  ^{1/2} \epsilon +  (\vn-\vl)^{1/2}  \ml  \ \succeq \ \vl  ^{1/2} \epsilon +  (\vn-\vl)^{1/2}  D_1 \ \sim \ \vn ^{1/2}  \epsilon.
\endy 
This, together with $\vl ^{1/2}\ep \succeq \vn ^{1/2} \ep$, ensures 
\begina
\mix\beginp
\vl  ^{1/2} \epsilon +   (\vn-\vl)^{1/2}  \ml &:& \pia\\
\vl  ^{1/2} \epsilon   &:& 1-\pia
\endp \ \succeq \ \vn^{1/2}\epsilon
\enda
by Lemma \ref{lem:peak_mix}, 
and hence $\htptl \succi \htn$. 
\end{proof}

\subsection{$t$-statistics}

\begin{proof}[Proof of Proposition \ref{prop:htpt_studentized_cond}]
The asymptotic distributions follow from Proposition \ref{prop:cond} and Lemma \ref{lem:se} by Slutsky's theorem. 
That $
(\hts-\tau) / \hses \mid \{\phim = 1\} \succi \ep \ (\nfl)$ follow from 
\beginy
\kappa_*^{1/2} \left\{ \rho_* ^{1/2}  \epsilon +  \left(1 -  \rho_* \right)^{1/2}    \ml \right\} \quad \succeq \quad  
 \rho_* ^{1/2}  \epsilon +  \left(1 -  \rho_* \right)^{1/2}    \ml \quad \succeq \quad \epsilon \quad(\nfl)\label{eq:peak}
\endy
by Lemmas  \ref{lem:peak_sum} and \ref{lem:tdl}. 
\end{proof}

\begin{proof}[Proof of Proposition \ref{prop:htpt_studentized_sampling dist}]
The asymptotic distributions follow from Proposition \ref{prop:htpt} and Lemma \ref{lem:se} by Slutsky's theorem. 
By Lemma \ref{lem:peak_mix}, that $
(\htptl-\tau) / \hseptl \succi \ep$ follows from \eqref{eq:peak}
and $
\kappa_\lin^{1/2}  \epsilon  \succeq   \epsilon$
by Lemma  \ref{lem:peak_sum}. 
\end{proof}

\subsection{Coverage rates}\label{sec:cr_app}

\begin{proof}[Proof of Theorem \ref{thm:coverage_conditional} and Theorem \ref{thm:coverage_conditional_app}]
Observe that
\begina
\limn \pr(\tau \in \cis \mid \phim = t) 
= \limn \pr\left( \left. \frac{\hts - \tau}{\hses} \in \mq \ \right| \   \phim = t\right)
\enda
 for $t= 0,1$ and $\nfl$. 
The results follow from Proposition \ref{prop:htpt_studentized_cond}.

\end{proof}

\begin{proof}[Proof of Theorem \ref{thm:coverage} and Theorem \ref{thm:coverage_app}]
The result follows from 
\begina
\pr(\tau \in \cipts) = \pr(\phim = 1) \cdot \pr(\tau \in \cin \mid \phim = 1) + \pr(\phim = 0)\cdot \pr(\tau \in \cis \mid \phim = 0) \quad (\fl),
\enda
$\limn \pr(\phim = 1) = \pia$, and Theorem \ref{thm:coverage_conditional_app}. 
\end{proof}

\section{Proof of the results on the FRT}\label{sec:frt_app}
\subsection{Conditional FRT}
\begin{proof}[Proof of Proposition \ref{prop:htpt_rand_cond}]
By \cite{zdfrt},  all results for the sampling distributions under Condition \ref{cond:asym} extend to the randomization distributions under Conditions \ref{cond:asym} and \ref{cond:asym_frt} after we replace the true finite population with the pseudo finite population $\{Y_i'(0), Y_i'(1): i = \ot{N}\}$ with $Y_i'(0) = Y_i'(1) = Y_i$, $\tau' = 0$, 
and 
\begina
\bar Y'(z) = e_0 \by(0) + e_1 \by(1) + o(1), \quad (S^2_z)' = p_0S_0^2 + p_1S_1^2 + e_0e_1\tau^2 +o(1) \quad (z= 0, 1) 
\enda
for almost all sequences of $Z$. 
Direct algebra shows that the counterparts of $v_* \ (\nfl)$ based on the pseudo finite population equal 
$\tvn  =e_1^{-1} S_{0, \nm}^2 + e_0^{-1}S_{1,\nm}^2 + \tau^2$ and $\tvf  = \tvl   =  e_1^{-1} S_{0, \fisher}^2 +e_0^{-1}S_{1, \fisher}^2 + \tau^2$, 
and the counterparts of $\kappa_* \ (\nfl)$ and $\rho_* \ (\fl)$ all equal 1. 
The results follow from Proposition \ref{prop:cond} and Proposition \ref{prop:htpt_studentized_cond}. 
\end{proof}

\begin{proof}[Proof of Theorem \ref{thm:frt_cond}]
We verify below the result for  $\htpts/\hsepts \ (\fl)$. 
The proof for $\htpts \ (\fl)$ are similar and omitted. 

Juxtapose  Proposition \ref{prop:htpt_studentized_cond} with Proposition \ref{prop:htpt_rand_cond}. Let 
$
\mt \sim  
\kappa_\neyman^{1/2}  \{ \rho_\nm ^{1/2}  \epsilon +  \left(1 -  \rho_\nm \right)^{1/2}    \ml \} 
$
be a random variable following the asymptotic sampling distribution of $\htpts/\hsepts$ given $\phim = 1$.
Let
$
\mt '   \sim 
\tilde\rho_\nm^{1/2}  \epsilon +  \left(1-\tilde \rho_\nm \right) ^{1/2}   \ml  
$
be a random variable following the asymptotic randomization distribution of $\htpts/\hsepts$ given $\phim =1$.
Conditioning on $\phim = 1$, the studentized $\htpts/\hsepts$ would be able to preserve the correct type I error rates under $\hn$ if $|\mt'|$ 
stochastically dominates $|\mt|$.
The uncertainty in the relative magnitude of $\rho_\nm$ and $\trn$  nevertheless suggests that this is not guaranteed for all sequences of $\{Y_i (0), Y_i(1), x_i\}_{i=1}^N$.
This implies that 
$\htpts/\hsepts$ is not asymptotically valid for testing $\hn$ given $\phim =1$. 

Similarly for the results under $\phim = 0$. 
In particular, the asymptotic validity of $\htptl/\hseptl$ given $\phim =0$ follows from 
\begina
\left. \frac{\htptl-\tau}{\hseptl} \ \right| \  \{\phim = 0\}  \rs  
\kappa_\lin^{1/2}   \epsilon, \quad \left. \left(\frac{\htptl-\tau}{\hseptl}\right)^\pi \ \right| \  \{\phim = 0\}  \rs 
\epsilon,
\enda
where $|\ep|$ stochastically dominates $|\kappa_\lin^{1/2}\ep|$. 
The asymptotic invalidity of $\htptf/\hseptf$ given $\phim = 0$ follows from 
\begina
\left. \frac{\htptf-\tau}{\hseptf} \ \right| \  \{\phim = 0\}  &\rs&  
\kappa_\fisher^{1/2} \left\{ \rho_\fisher ^{1/2}  \epsilon +  \left(1 -  \rho_\fisher \right)^{1/2}    \ml' \right\},\\
 \left. \left(\frac{\htptf-\tau}{\hseptf}\right)^\pi \ \right| \  \{\phim = 0\}  &\rs& 
\epsilon 
\enda
where $|\ep|$ does not necessarily stochastically dominates $|\kappa_\fisher^{1/2} \{ \rho_\fisher ^{1/2}  \epsilon +  (1 -  \rho_\fisher )^{1/2}    \ml' \}|$. 
\end{proof}

\subsection{Unconditional FRT}

\begin{proof}[Proof of Proposition \ref{prop:htpt_studentized_randomization dist}] 
The result follows from Proposition \ref{prop:htpt} and Proposition \ref{prop:htpt_studentized_sampling dist} by the same reasoning as the proof of Proposition \ref{prop:htpt_rand_cond}. 
\end{proof}

\begin{proof}[Proof of Theorem \ref{thm:frt}]
The result follows from Proposition \ref{prop:htpt_studentized_randomization dist} by the same reasoning as the proof of Theorem \ref{thm:frt_cond}. 
\end{proof}

\end{document}